\def\cutstep#1#2{#1\mbox{-vs-}#2}
\newcommand{\weight}[1]{\mathsf{W}(#1)}
\newcommand{\heightcut}[1]{\mathsf{H}(#1)}
\newcommand{\rank}[2]{\mathsf{rk}_{#1}({#2})}
\newcommand{\egraph}[1]{\mathcal{G}(#1)}
\newcommand{\isize}[1]{||{#1}||}
\newcommand{\prun}[2]{\llfloor{#1} \rrfloor_{#2}}
\newcommand{\base}[1]{\mathsf{base}(#1)}
\newcommand{\trunk}[2]{\lfloor{#1} \rfloor_{#2}}
\def\Sys{\mathcal S}
\newcommand{\finitenwpromotion}{\mathfrak{F}}
\newcommand{\derfinstream}[3]{{\cprule_{\langle #1, #2,\ldots , #3\rangle}}}
\newcommand{\derfinstreamshort}[3]{{\cprule_{\langle #1,\ldots , #3\rangle}}}
\newcommand{\dfn}{\coloneqq}
\def\defin#1{\textbf{#1}}
\def\set#1{\{#1\}}
\def\intset#1#2{\set{#1,\ldots, #2}}
\definecolor{mygreen}{rgb}{0, 0.6, 0}
\definecolor{airforceblue}{rgb}{0.36, 0.54, 0.66}
\def\cneg#1{{#1}^\lbot}
\def\atoms{\mathcal A}
\newcommand{\Bool}{\mathbf{B}}
\newcommand{\String}{\mathbf{S}}
\newcommand{\Nat}{\mathbf{N}}
\newcommand{\Stream}{\mathbf{Stream}}
\def\axr{\mathsf{ax}}
\def\cutr{\mathsf{cut}}
\def\wrule{\mathsf{w}}
\def\brule{\mathsf{b}}
\def\wnbrule{\wn\brule}
\def\ocbrule{\oc\brule}
\def\wnwrule{\wn \wrule}
\def\ocwrule{\oc \wrule}
\def\prule{\mathsf{\oc p}}
\def\fprule{\mathsf{f\prule}}
\def\cprule{\mathsf{c\prule}}
\def\rrule{\mathsf{r}}
\def\unit{\mathbf{1}}
\newcommand{\zero}{\mathsf{hyp}}
\def\mathacronym#1{\textsf{#1}}
\newcommand{\nuprule}{\mathsf{ib}\prule} %damiano infinitely branching  but wellfounded rule
\newcommand{\nwprule}{\mathacronym{nwb}\xspace} % our finitely branching but non-wellfounded rule
\newcommand{\nwpromotion}{\mathfrak{S}}% is the symbol B non-welllfounded promotions range over
\newcommand{\nwpstream}[3]{\seq{ {#1},{#2}, \ldots,\allowbreak {#3}, \ldots } }%the actual stream representing the nwpromotion
\newcommand{\derstream}[1]{\cprule_{\seq{ #1, \ldots}}}%
\def\supportof#1{\mathsf{Calls}(#1)}
\newcommand{\der}{\mathcal{D}}
\newcommand{\dD}{\der}
\newcommand{\zeroder}{\der_{\lbot}}
\newcommand{\wnder}{\der_{\wn}}
\def\trueder{\cod\true}
\def\falseder{\cod\false}
\newcommand{\vlhyE}[1]{\global\setbox\vlhybox=\hbox{$#1$}%
	\vlhyaux{\box\vlhybox}}%
\newcommand{\vldr}[3][8]{\vltr{#2}{#3}{\vlhyE{}}{\vlhyE{\hspace{#1pt} }}{\vlhyE{}}}
\newcommand{\pll}{\mathsf{PLL}_{2\ell}}
\newcommand{\nwpll}{\mathsf{PLL}_{2\ell}^\infty}
\newcommand{\nupll}{\mathsf{wr}\mathsf{PLL}_{2\ell}^\infty}
\newcommand{\cpll}{\mathsf{r}\mathsf{PLL}_{2\ell}^\infty}
\newcommand{\dpll}{\mathsf{nuPLL}_{2\ell}}
\newcommand{\pta}{\mathsf{PTA}_{2\ell}}
\newcommand{\typestream}{\mathsf{nuPTA}_{2\ell}}
\newcommand{\mell}{\mathsf{MELL}_2}
\newcommand{\pllprop}{\mathsf{PLL}}
\newcommand{\nwpllprop}{\mathsf{PLL}^\infty}
\def\FElity{finite expandability\xspace}
\def\FE{finitely expandable\xspace}
\def\octhread{$\oc$-thread\xspace}
\def\prog{progressing\xspace}
\newcommand{\Nodes}{\mathcal{V}}
\newcommand{\Boolset}{\mathbb{B}}
\newcommand{\Nset}{\mathbb{N}}
\def\cproj#1{\cprojp{\left(#1\right)}}
\def\fproj#1{\fprojp{\left(#1\right)}}
\newcommand{\cprojp}[1]{#1^\bullet}
\newcommand{\fprojp}[1]{#1^\circ}
\newcommand{\size}[1]{|#1|}
\def\depth#1{\mathbf{d}(#1)}
\def\cutelim{\to_{\cutr}}
\def\cutelims{\cutelim^*}
\def\comp#1{\mathcal{K}(#1)}
\newcommand{\fptime}{\mathbf{FP}}
\newcommand{\ptime}{\mathbf{P}}
\newcommand{\fppoly}{\fptime/\mathsf{poly}}
\newcommand{\ppoly}{\ptime/\mathsf{poly}}
\newcommand{\lpoly}{\mathbf{L}/\mathsf{poly}}
\newcommand{\nl}{\mathbf{NL}}
\newcommand{\logspace}{\mathbf{L}}
\newcommand{\conl}{\mathbf{coNL}}
\newcommand{\f}{\mathbf{F}}
\newcommand{\elementary}{\mathbf{ELEMENTARY}}
\newcommand{\felementary}{\f\elementary}
\newcommand{\cod}[1]{\underline{#1}}
\newcommand{\true}{\mathbf{1}}
\newcommand{\false}{\mathbf{0}}
\newcommand{\term}{\Lambda_{\mathsf{stream}}}
\newcommand{\identity}{\mathsf{I}}
\newcommand{\letid}[2]{\mathsf{let\ }\identity=\, {#1}  \mathsf{ \ in \ }{#2}}
\newcommand{\lettensor}[4]{\mathsf{let\ } {#2}\otimes {#3} ={#1} \mathsf{ \ in \ }{#4}}
\newcommand{\popbool}{\mathsf{pop}}
\newcommand{\erasebool}{\mathsf{disc}}
\newcommand{\ir}[1]{{#1}_i}
\newcommand{\er}[1]{{#1}_e}
\newcommand{\stream}{\mathsf{stream}}
\newcommand{\erase}{\mathbf{W_\Bool}}
\newcommand{\erasetype}[1]{\mathbf{W}_{#1}}
\newcommand{\dupl}{\mathbf{C_\Bool}}
\newcommand{\success}{\mathsf{succ}}
\newcommand{\mult}{\mathsf{mult}}
\newcommand{\add}{\mathsf{add}}
\newcommand{\itrule}{\mathsf{iter}_\Nat}
\newcommand{\itrulestring}{\mathsf{iter}_\String}
\newcommand{\iter}[3]{\mathsf{iter}_\Nat\, {#1}\, {#2}\, {#3}}
\newcommand{\iterstring}[4]{\mathsf{iter}_\String\, {#1}\, {#2}\, {#3}}
\newcommand{\length}{\mathsf{length}}
\newcommand{\RR}{\mathbb R}
\def\cutstep#1#2{#1\mbox{-vs-}#2}
\newcommand{\nwbox}{\mathacronym{nwb}\xspace} 
\newcommand{\opll}{\mathsf{o}\nwpll}
\def\seq#1{(#1)}
\newcommand{\bord}[1]{\mathsf{border}(#1)}
\newcommand{\nwboxes}{\mathacronym{nwb}s\xspace}
\def\sysX{\mathsf{X}}
\newcommand{\tm}{\mathbf{TM}}
\newcommand{\init}{\mathsf{init}}
\newcommand{\idtype}{\mathbf{ID}}
\newcommand{\decomp}{\mathsf{dec}}
\renewcommand{\comp}{\mathsf{comp}}
\newcommand{\tr}{\mathsf{Tr}}
\newcommand{\In}{\mathsf{In}}
\newcommand{\extr}{\mathsf{Ext}}
\newcommand{\down}[1]{\mathsf{down}^{#1}_\Nat}
\newcommand{\dows}[1]{\mathsf{down}^{#1}_\String}
\newcommand{\degree}[1]{\delta(#1)}
\begin{document}

\title[Non-wellfounded parsimonious proofs  and  non-uniform complexity]{Non-wellfounded parsimonious proofs \\ and  non-uniform complexity}

\author[M. Acclavio]{Matteo Acclavio}
 \address{University of Sussex}
 \email{macclavio@gmail.com}

\author[G. Curzi]{Gianluca Curzi}
\address{University of Gothenburg}
\email{gianluca.curzi@gu.se}

\author[G. Guerrieri]{Giulio Guerrieri}
\address{University of Sussex}
\email{g.guerrieri@sussex.ac.uk}

\begin{abstract}
 
 In this paper we investigate the complexity-theoretical aspects of cyclic and non-wellfounded proofs in the context of parsimonious logic, a variant of linear logic where the exponential modality ! is interpreted as a constructor for streams over finite data. We present non-wellfounded parsimonious proof systems capturing the classes $\fptime$ and $\fppoly$. Soundness is established via a polynomial modulus of continuity for continuous cut-elimination. Completeness relies on an encoding of polynomial Turing machines~with~advice within a type assignment system based on parsimonious logic. 
 
 As a byproduct of our proof methods, we establish a series of characterisation results for various finitary proof systems.
\end{abstract}

\maketitle

\section{Introduction}

In its modern guise, \emph{non-wellfounded proof theory} emerged for the first time in the context of  the modal $\mu$-calculus~\cite{niwinski1996games,dax2006proof}. 
Since then,  this area of  proof theory has provided  a promising theoretical framework for studying  least and greatest fixed points,  hence for reasoning about induction and coinduction. What is more, its applications  have spanned, over the years,   a number of rather diverse topics, such as     predicate logic~\cite{brotherston2011sequent, BerardiT19},  algebras~\cite{das2017cut,DP18,das:de:LICS24,das:de:IJCAR24}, arithmetic~\cite{Simpson17, BerardiT17, das2018logical},  proofs-as-programs  interpretations~\cite{BaeldeDS16,DeS19,Das2021-preprint,Kuperberg-Pous21, Das2021}, modal logics and $\mu$-calculi~\cite{niwinski1996games, dax2006proof, SprengerD03-bis, SprengerD03, DekkerKMV23, acc:mon:per:PDL,PDL:interpolation,conversePDL:interpolation}, and continuous cut elimination~\cite{mints1978finite, fortier2013cuts, Saurin23}.
	
Non-wellfounded proof theory studies proofs whose underlying tree structure is possibly infinite  (but finitely branching), where logical  consistency is guaranteed  by appropriate  global proof-theoretic conditions, called \emph{progressing} 
\emph{criteria}.  Within this framework, the so-called \emph{regular} proofs represent a major focus of interest. These are special non-wellfounded proofs having only finitely many distinct subproofs, and admit a finite description  in terms of cyclic directed graphs. Because of their graph-theoretic representation, these proofs are also called  \emph{circular} or \emph{cyclic}.

In~\cite{Das2021-preprint,Kuperberg-Pous21, Das2021} non-wellfounded proof-theory has been investigated from the perspective of the \emph{Curry-Howard correspondence}, where proofs are interpreted as (functional) programs, and program execution is given in terms of cut elimination.  Non-wellfounded proofs can be understood as programs defined by a possibly infinite list of instructions, where the progressing criterion ensures \emph{totality}, i.e., that those programs  are always well-defined on all arguments. 
On the other hand, the regularity condition on proof trees has a natural counterpart in the notion of \emph{computational uniformity}, meaning that programs defined by regular proofs can always be described  by a \emph{finite} sets of machine instructions.

	In \cite{CurziDas} the second author and Das extended the computational reading of non-wellfounded proofs  to the realm of \emph{computational complexity}, introducing circular proof systems capturing  the class of functions computable in polynomial time ($\fptime$) and the elementary functions ($\felementary$). These proof systems  are  based on Bellantoni and Cook's algebra of functions for safe recursion~\cite{BellantoniCook}, and are defined by identifying global conditions on circular progressing proofs motivated by ideas from \emph{Implicit Computational Complexity} (ICC), which studies machine-free characterisations of complexity classes that do not rely on explicit bounds on resources. This paper substantially launched a new topic in ICC called \emph{Cyclic Implicit Complexity} (CIC).

The results in~\cite{CurziDas} have been generalized by the same authors in \cite{Curzi023} 
 to capture  the class of functions computable in polynomial time by Turing machines with access to \emph{polynomial advice} ($\fppoly$) or, equivalently, computable by non-uniform families of polynomial-size circuits (see, e.g., \cite{arora_barak_2009}).  
Specifically, \emph{non-uniform complexity} is modeled by more permissive non-wellfounded proof systems (compared to circular proof systems), obtained by weakening the regularity condition, hence relaxing finite presentability of proofs\footnote{Note that {$\fppoly$} includes \emph{undecidable problems}, and so cannot be characterised by purely circular proof systems, which typically represent only computable functions.}.

In this paper, we take an alternative route to  CIC   based on \emph{linear logic}~\cite{LLL1}. 
Linear logic ($\mathsf{LL}$) is a refinement of both classical and intuitionistic logic that allows a better control over computational resources thanks to the so-called  \emph{exponential modalities} (denoted by $\oc$ and $\wn$), which mark the distinction between those assumptions that can be used linearly (that is, exactly once), and the ones that are reusable at will. According to  the Curry-Howard reading of linear logic, these modalities introduce non-linearity in functional programs: a proof of the linear implication $\oc A \limp B$ is interpreted as a program returning an output of type $B$ using an arbitrary number of times (including zero times) an input of type $A$.

Linear logic has inspired a variety of methods for taming complexity. The central idea  is to  weaken  the exponential rules for inducing a bound on cut elimination, which reduces  the computational strength of the system. These restricted systems of linear logic are  called ``light logics''.  Examples are \emph{soft linear logic}~\cite{LLL4} or \emph{light linear logic}~\cite{LLL1} for $\fptime$, and \emph{elementary linear logic}~\cite{LLL2, LLL3} for $\felementary$. 
Light logics  are typically  endowed with second-order quantifiers, which allow for a direct encoding of (resource-bounded) Turing machines, the crucial step for proving completeness w.r.t.~a complexity class.

Continuing this tradition, in a series of papers~\cite{Mazza14, Mazza15,MazzaT15}  Mazza introduced \emph{parsimonious logic} ($\mathsf{PL}$),  a 
type system based on a
variant of linear logic (defined in a type-theoretic fashion) where the exponential modality $\oc$ satisfies Milner's law (i.e., $\oc A \multimapboth A \otimes \oc A$) and  invalidates   the implications $\oc A \multimap \oc \oc A$ (\emph{digging}) and $\oc A \multimap \oc A \otimes \oc A$ (\emph{contraction}).  
In parsimonious logic, a proof of  $\oc A$ can be interpreted  as a  \emph{stream} over  proofs of  $A$, i.e., as a greatest fixed point.   
The linear implications $A \otimes \oc A \multimap \oc A$ (\emph{co-absorption}) and $\oc A \multimap A \otimes \oc A$ (\emph{absorption}), which form the two directions of Milner's law,  can be  read computationally as the \emph{push} and \emph{pop} operations on streams. 
In particular, in~\cite{MazzaT15} Mazza and Terui presented \emph{non-uniform parsimonious logic} ($\mathsf{nuPL}$), an extension  of $\mathsf{PL}$  equipped with  an {infinitely branching rule} {$\nuprule$ (see \Cref{fig:exponantial-rules-functorial-non-uniform-conditional})} that constructs a stream $\nwpstream{\der_{f(0)}}{\der_{f(1)}}{\der_{f(n)}}$ of type  $\oc A$  from 	a  finite set of proofs $\der_1, \ldots, \der_n$ of $A$ and  a (possibly non-recursive) function $f \colon \mathbb{N}\to \set{1, \ldots, n}$.
	
The fundamental result of~\cite{MazzaT15} is that, when endowed with restricted second-order quantifiers, the logic $\mathsf{PL}$ (resp.~$\mathsf{nuPL}$) characterises the class $\ptime$ of problems decidable in polynomial time  (resp.~the class $\ppoly$ of problems decidable by polynomial size families of circuits)\footnote{
	Despite not  explicitly stated in~\cite{MazzaT15}, the characterisation of $\ptime$  is a direct byproduct.
}. 
On the one hand, the infinitely branching rule $\nuprule$ can be used to  encode   streams, hence to model Turing machines querying an \emph{advice}~\cite{arora_barak_2009};  on the other hand, the absence of digging and contraction induces a polynomial bound on normalisation.

The analysis of parsimonious logic conducted in~\cite{Mazza14, Mazza15,MazzaT15} reveals that fixed point-based definitions of the exponentials are better behaving when  digging and  contraction are discarded. However, these results rely on the co-absorption rule ({$\ocbrule$} in \Cref{fig:exponantial-rules-functorial-non-uniform-conditional}) which is not admissible in $\mathsf{LL}$. 
Proper subsystems of $\mathsf{LL}$ (free of the co-absorption rule) admitting a stream-based interpretation of the exponentials have been provided in our previous work~\cite{CSL}, where we have defined \emph{parsimonious linear logic} ($\mathsf{PLL}$) and its non-uniform version,  called  \emph{non-uniform parsimonious linear logic} ($\mathsf{nuPLL}$).
Furthermore, in this work we also recast $\mathsf{PLL}$ and $\mathsf{nuPLL}$ in a non-wellfounded framework $\nwpllprop$ by identifying appropriate global  conditions that duly reflect the proof-theoretic features of these systems.
As a result, we introduced \emph{regular parsimonious linear logic} ($\mathsf{rPLL}^\infty$), defined in terms of regular non-wellfounded proofs, and \emph{weakly regular parsimonious linear logic} ($\mathsf{wrPLL}^\infty$),  where regularity is relaxed to model non-uniform computation. 
The main contribution of~\cite{CSL} is a \emph{continuous cut elimination theorem} for $\mathsf{rPLL}^\infty$ and $\mathsf{wrPLL}^\infty$, obtained by applying a novel cut elimination technique in the \mbox{non-wellfounded setting}.

\paragraph{Contributions}
In this paper, we define %restricted 
second-order extensions (noted with the subscript $2\ell$) of the proof systems presented in~\cite{CSL}, and we establish the {characterisations} below:
\begin{itemize}
\item $\nupll$ (and $\dpll$) characterise $\fppoly$;
\item  $\cpll$ (and $\pll$)  characterise $\fptime$.
\end{itemize}

The interconnections between our  results are summarised in~\Cref{fig:grand-tour-diagram}. 
The key result for this characterisations is the polynomial modulus of continuity on cut elimination for $\nupll$ and $\cpll$,  (\Cref{prop:polynomial-moduli}), 
from which we infer that $\nupll$ is \emph{sound} for $\fppoly$, and that $\cpll$ is sound for $\fptime$~(\Cref{thm:soundness}.\ref{enum:soundness1}-\ref{enum:soundness2}).  
\emph{Completeness} (\Cref{thm:completeness-non-wellfounded} requires a series of intermediate steps. 
We first introduce the type system $\typestream$, which implements a form of stream-based computation. This system represents an alternative approach to the type systems for parsimonious logic  introduced by Mazza et al. in~\cite{Mazza14,Mazza15,MazzaT15}. 
Then, we describe an encoding of polynomial time Turing machines with (polynomial) advice within  $\typestream$, which allows us to prove  that $\typestream$ is complete for $\fppoly$  (\Cref{thm:completeness}.\ref{enum:completeness2}). 
This is done by adapting standard methods from~\cite{MairsonTerui, Ronchi-Gaboardi} to the setting of non-uniform  computation.   Thirdly, we define  a  translation from $\typestream$ to $\dpll$   (\Cref{thm:embedding}.\ref{enum:embedding2}).  
Finally, we show that computation over strings in $\dpll$ can be simulated within   $\nupll$  (\Cref{thm:simulation}.\ref{enum:simulation2}). 
A  similar completeness argument can be restated  for $\cpll$ (and $\pll$) by considering the type system $\pta$ 
(\Cref{thm:completeness}.\ref{enum:completeness1}, \Cref{thm:embedding}.\ref{enum:embedding1}, and \Cref{thm:simulation}.\ref{enum:simulation1}).

On a technical side, the present paper contributes to the previous literature on the topic in several ways:
\begin{itemize}
	\item \emph{Cyclic Implicit Complexity (CIC).} 
	The contribution of this paper advances the development of CIC and explores it for the first time  in the setting of linear logic.
	Compared to previous work on CIC, this paper adopts rather different techniques for achieving soundness and completeness results. 
	In particular, the proof of soundness introduces novel ideas to estimate moduli of continuity.

	\item \emph{Parsimonious logic.} This paper lifts to a non-wellfounded proof-theoretic setting the characterisation results for parsimonious logic presented by Mazza at al. in~\cite{Mazza14, Mazza15, MazzaT15}, improving certain aspects of the latter. 
	First, as already anticipated, our characterisation theorems do not require the presence of the \emph{co-absorption rule}, which is expressed by the formula $A \otimes \oc A\limp \oc A$. Computationally, this means that it is possible to encode polytime Turing machines (with polynomial advice) in our systems \emph{without}  the need of a ``push'' operation on streams. 	Secondly, we generalise the characterisation of classes of  problems given in~\cite{MazzaT15}  (i.e., $\ppoly$ and $\ptime$) to classes of functions (i.e., $\fppoly$ and $\fptime$).	Last, non-wellfounded proofs do not rely on  infinitely branching rules such as  $\nuprule$,  avoiding the introduction of the  constant growth-rate function  $f \colon \mathbb{N}\to \set{1, \ldots, n}$ and   making our  characterisations more ``implicit'', thus  closer to ICC.
\end{itemize}

\paragraph{Outline of the paper}
\Cref{sec:prelim} introduces some preliminary notions and results on linear logic, non-wellfounded proofs and  (non-uniform) complexity classes. 
\Cref{sec:parsimoniousLogic,sec:coder}  recall parsimonious linear logic and extend to a second-order setting the proof systems introduced in~\cite{CSL}. Specifically, in~\Cref{sec:parsimoniousLogic} we define the proof system $\pll$ and $\dpll$. In~\Cref{sec:coder} we consider their  non-wellfounded proof-theoretic versions  $\cpll$ and $\nupll$, and study cut elimination properties and simulation results relating the systems. 
\Cref{subsec:soundness,subsec:completeness} represent the main contribution of this paper. In~\Cref{subsec:soundness} we prove the soundness theorem, and in~\Cref{subsec:completeness} we define the type systems $\typestream$ and $\pta$ based on parsimonious logic and use them to establish  completeness.

\begin{figure}
	\begin{tikzcd}[column sep=1.2cm]
		&
		\cpll 
		\arrow[r, "\subseteq"]   
		\arrow[dl, bend right, swap, "\text{\Cref{thm:soundness}.\ref{enum:soundness1}}"]  
		& 
		\nupll  
		\arrow[dr, bend left, "\text{\Cref{thm:soundness}.\ref{enum:soundness2}}"] 
		& 
		\\ 
		\fptime 
		\arrow[dr, bend right, swap, "\text{\Cref{thm:completeness}.\ref{enum:completeness1}}"] 
		& 
		\pll 
		\arrow[r, "\subseteq"]  
		\arrow[u, "\text{\Cref{thm:simulation}.\ref{enum:simulation1}}"]   
		& 
		\dpll  
		\arrow[u, "\text{\Cref{thm:simulation}.\ref{enum:simulation2}}"]    
		&  
		\fppoly   
		\arrow[dl, bend left, "\text{\Cref{thm:completeness}.\ref{enum:completeness2}}"] 
		\\
		&	
		\pta 
		\arrow[r, "\subseteq"]\arrow[u, "\text{\Cref{thm:embedding}.\ref{enum:embedding1}}"]    
		& 
		\typestream 
		\arrow[u, "\text{\Cref{thm:embedding}.\ref{enum:embedding2}}"]     
		& 
	\end{tikzcd}
	\caption{Diagram {of the main results}.}
	\label{fig:grand-tour-diagram}
\end{figure}

\section{Preliminary notions}\label{sec:prelim}

In this section we recall some basic notions and notation from (non-wellfounded) proof theory and computational complexity.

\subsection{Derivations and coderivations}\label{subsec:coderivations}

We assume that the  reader is familiar with the syntax of sequent calculus, e.g. \cite{troelstra_schwichtenberg_2000}. 
Here we specify some conventions adopted to simplify the content of  this paper.

We consider (\defin{sequent}) \defin{rules} of the form
$\vlinf{\rrule}{}{\Gamma}{}$
or
$\vlinf{\rrule}{}{\Gamma}{\Gamma_1}$
or
$\vliinf{\rrule}{}{\Gamma}{\Gamma_1}{\Gamma_2}$,
and we refer 
to the sequents $\Gamma_1$ and $\Gamma_2$ as the \defin{premises}, 
and
to the sequent $\Gamma$ as the \defin{conclusion} of the rule $\rrule$.
{Following~\cite{BaeldeDS16},  we define \defin{sequents} as finite {sets of  pairwise disjoint formula occurrences}\footnote{{This can be done by associating with any formula occurrence an address, as done in~\cite{BaeldeDS16}. The benefit of this sequent calculus presentation  is that we can track formula occurrences along a branch of the prooftree while avoiding some technicalities involved in  the sequents-as-lists presentation.}}. In particular, when we refer to a formula in a sequent we always consider a \emph{specific occurrence} of it.
 }
\def\tree{\mathcal T}
\def\branch{\mathcal B}

\begin{definition}%[(Co)derivation]
	\label{def:coderivation}
	
	A (binary, possibly infinite) \defin{tree} $\tree$ is a subset of words in $\set{1,2}^*$
	that~contains the empty word $\epsilon$ (the \defin{root} of $\tree$)
	and is \emph{ordered-prefix-closed}
	(i.e., if $n \in \set{1,2}$ and $vn \in \mathcal{T}$, then $v \in \mathcal{T}$,
	and
	if moreover $v2 \in \mathcal{T}$, then $v1 \in \mathcal{T}$).
	The elements of $\tree$ are called \defin{nodes} and their \defin{height} is the length of the word.
	A \defin{child} of $v\in\tree$ is any $vn\in\tree$ with $n\in\set{1,2}$.
	The \defin{prefix order} is a partial order $\leq_\tree$ on $\tree$ defined by: for any $v, v' \in \tree$, $v \leq_\tree v'$ if $v' = vw$ for some $w \in \{1,2\}^*$. 
	A maximal element of $\leq_\tree$ is a \defin{leaf} of $\tree$. 
	A \defin{branch} of $\tree$ is a set $\branch \subseteq \tree$ such that $\epsilon \in \branch$ and if $w \in \branch$ is not a leaf of $\tree$ then $w$ has exactly one~child~in~$\branch$.
	
	A \defin{coderivation} over a set of rules $\Sys$
	is a labeling $\der$ of a tree $\tree$ by sequents such that
	if $v$ is a node of $\tree$ with children $v_1,\ldots, v_n$ (with $n\in\set{0,1,2}$), 
	then there is an occurrence of a rule $\rrule$ in $\Sys$ with conclusion the sequent $\der(v)$ and premises the sequents $\der(v_1),\ldots, \der(v_n)$.
	The \defin{height} of $\rrule$ in $\der$ is the height of $v \in \tree$ such that $\der(v)$ is the conclusion~of~$\rrule$.
	The \defin{conclusion} of $\der$ is the sequent $\der(\epsilon)$.
	If $v$ is a node of the tree, the \defin{sub-coderivation} of $\der$ rooted at $v$ is the coderivation $\der_v$ defined by $\der_v(w)=\der(vw)$.
	It is \defin{regular} if it has finitely many distinct sub-coderivations; it is \defin{non-wellfounded} if it labels an infinite tree, and 
	it is a \defin{derivation} (with \defin{size} $\size{\der} \in \Nset$) if it labels a finite tree (with $\size{\der}$ nodes).

\end{definition}

Regular coderivations ({aka} circular or cyclic) can be represented  as \emph{finite} directed (possibly cyclic) graphs: a cycle is created by linking the roots of two identical~subcoderivations.

\begin{definition} \label{defn:bars}
	Let $\der$ be a coderivation labeling a tree $\tree$.
	A \defin{bar} (resp.~\defin{prebar}) of $\der$ is a set $\Nodes \subseteq \tree$ where:
\begin{itemize}
	\item any  branch (resp.~infinite branch) of the tree $\tree$ underlying $\dD$ contains a node in $\Nodes$;
	\item any pair of nodes in $\Nodes$ are mutually incomparable with respect to the prefix order $\leq_\tree$.
\end{itemize}
	%	Given $\Nodes$ and $\Nodes'$ prebars, we write $\Nodes <_\tree \Nodes'$  whenever, for all $\rrule \in \Nodes$ and all $\rrule'\in \Nodes'$, if $\rrule$ and $\rrule'$ are comparable then $\rrule <_\tree\rrule' $.
	%	Given a prebar  $\Nodes$ across $\der$, the \defn{completion of} $\Nodes $,  denoted $\barext{\Nodes}$, is the bar obtained by extending $\Nodes$ with the leaves of (the tree $\tree$ underlying) $\der$.
\end{definition}

\subsection{(Non-uniform) complexity classes}\label{subsec:non-uniform-complexity-classes}

We recall that  $\fptime$ is the class of functions computable in polynomial time by a Turing machine, and that $\fppoly$ is the class of functions computable in polynomial time  by a Turing machine with access to a ``polynomial amount of advice'' (determined only by the length of the input). Formally:

\begin{definition}
	$\fppoly$ is the class of functions 
	$f(\vec x)$ for which, for all $n \in \mathbb{N}$, there is a string (called the \defin{advice}) $\alpha_{n}$ of length polynomial in $n$ and $f'(y,\vec x)\in \fptime$ such that
	\mbox{$f(\vec x) = f'(\alpha_{|\vec x|}, \vec x)$}.
\end{definition}

$\fppoly$ extends $\fptime$ and contains some incomputable functions, for instance the characteristic function of undecidable unary languages \cite[Example 6.4]{arora_barak_2009}.
The class $\fppoly$ can be also defined in terms of non-uniform families of circuits.

\begin{theorem}[\cite{arora_barak_2009}, Thm.~6.11]\label{prop:fppoly-circuits}
	A function $f$ is in $\fppoly$ iff there is polynomial-size familiy of circuits computing $f$.
\end{theorem}

Following~\cite{Curzi023}, we adopt a different presentation of $\fppoly$ that {eases} the proof of completeness. 

\begin{definition} 
    Let $\RR \coloneqq \{0,1\}^\mathbb{N} = \set{ r \colon \mathbb{N} \to \set{0,1}}$ and $\fptime(\RR)$ be the set of functions computable in polynomial time by a Turing machine with access to an oracle  $r \in \RR$.
	\footnote{{Note that the elements of $\RR$ may be identified with Boolean streams.}}
\end{definition}

\begin{restatable}[See, e.g.,~\cite{Curzi023}]{proposition}{PTIMER}
	\label{prop:fppoly=fptime(RR)}
	$\fppoly = \fptime(\RR)$.
\end{restatable}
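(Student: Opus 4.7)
The plan is to prove the two inclusions separately, in each case using a standard correspondence between ``advice strings indexed by input length'' and ``oracles whose answers depend only on input lengths''.

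For the inclusion $\fppoly \subseteq \fptime(\RR)$, suppose $f(\vec x)$ is witnessed by an advice sequence $(\alpha_n)_{n\in\Nset}$ with $|\alpha_n| \leq p(n)$ for some polynomial $p$, together with a polynomial-time function $f'(y,\vec x)$ such that $f(\vec x) = f'(\alpha_{|\vec x|}, \vec x)$. I would define a binary relation $r_\alpha \colon \Nset^2 \to \{0,1\}$ by letting $r_\alpha(x,i)$ be the $|i|$-th bit of $\alpha_{|\vec x|}$ when $1 \leq |i| \leq p(|\vec x|)$, and $0$ otherwise; this is manifestly length-only, hence $r_\alpha \in \RR$. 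A machine with oracle $r_\alpha$ can then, on input $\vec x$, reconstruct $\alpha_{|\vec x|}$ bit-by-bit using $p(|\vec x|)$ oracle queries of polynomial-size unary addresses, and then run $f'$; the total running time is polynomial. Strictly speaking one should index $r_\alpha$ by a single ``length variable'' built from $\vec x$, which is a minor bookkeeping step.

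For the converse $\fptime(\RR) \subseteq \fppoly$, suppose $f$ is computed by a polynomial-time Turing machine $M$ in time $q(n)$ with access to finitely many oracles $r_1,\dots,r_m \in \RR$ of arities $k_1,\dots,k_m$. Any oracle query $r_i(\vec y)$ made during the computation satisfies $|y_j| \leq q(n)$ for every component, and by definition of $\RR$ its answer depends only on the tuple of lengths $(|y_1|,\dots,|y_{k_i}|)$. Hence the entire oracle behaviour that is relevant to inputs of length $n$ can be tabulated as the finite table
\[
T_n \;\dfn\; \bigl\{\, (i,\vec\ell,\, r_i(1^{\ell_1},\dots,1^{\ell_{k_i}})) \;\bigm|\; 1\leq i\leq m,\; \vec\ell \in \{0,\dots,q(n)\}^{k_i} \,\bigr\},
\]
whose size is $\sum_{i=1}^m (q(n)+1)^{k_i}$, which is polynomial in $n$ since each $k_i$ and $m$ is a constant. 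I would take $\alpha_n$ to be a binary encoding of $T_n$, and define $f'(\alpha_n,\vec x)$ as the machine that simulates $M$ step by step, answering each oracle query by a table lookup inside $\alpha_n$ instead of invoking the oracle; table lookups cost polynomial time, so $f' \in \fptime$ and $f(\vec x) = f'(\alpha_{|\vec x|},\vec x)$ by construction.

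Neither direction presents a real obstacle; the only delicate point is being careful that the length convention in the definition of $\RR$ (lengths component-wise versus total length) matches the indexing used for the advice, so that the oracle one builds from advice really lies in $\RR$, and so that the tabulation covers every query $M$ can possibly make. Once this is pinned down, both constructions are polynomial-time in both senses and the equality follows.
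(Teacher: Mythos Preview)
Your proposal is correct and follows essentially the same two-inclusion strategy as the paper's proof: encode advice bits as a length-dependent oracle in $\RR$ for one direction, and tabulate the finitely many length-tuples of queries a polytime machine can make for the other. The only cosmetic difference is that the paper routes the $\fppoly \subseteq \fptime(\RR)$ inclusion through the circuit characterisation of $\fppoly$ (reading off bits of $C_{|x|}$ via the oracle) rather than directly from the advice string, but this is the same idea in different clothing.
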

\begin{proof}[Proof sketch]
	For the left-right inclusion, let {$p(n)$ be a polynomial and} $\mathbf C = (C_n)_{n<\omega}$ be a circuit family with each $C_n$ taking $n$ Boolean inputs and having size $<p(n)$. We need to show that the language computed by $\mathbf C$ is also computed in $\fptime(\RR)$.
	Let $c \in \RR$ be the function that, on inputs $x,y$ returns the $|y|$\textsuperscript{th} bit of $C_{|x|}$.
	Using this oracle we can compute $C_{|x|}$ by polynomially queries to $c$, and this may be evaluated as usual using a polynomial-time evaluator in $\fptime$. For the right-left inclusion, notice that a polynomial-time machine can only make polynomially many calls to oracles with inputs of only polynomial size. Thus, if $f \in \fptime (\RR)$ then there is some $p_f$ with $f\in \fptime (\RR^{<p_f})$, where $\RR^{<p_f}$ is the restriction of each $r \in \RR$ to only its first $p_f(|\vec x|)$ many bits.
	Now, since $f$ can only call a fixed number of oracles from $\RR$, we can collect these finitely many polynomial-length prefixes into a single advice string for computation in $\fppoly$.
\end{proof}

\section{Second-Order Parsimonious Linear Logic}\label{sec:parsimoniousLogic}

{In~\cite{acclavio2023infinitary}, we introduced \emph{parsimonious linear logic} ($\mathsf{PLL}$), a subsystem of linear logic inspired by parsimonious logic~\cite{Mazza14,Mazza15,MazzaT15}.  In $\mathsf{PLL}$ the usual {promotion} rule is replaced by \emph{functorial promotion}  $\fprule$, and the usual contraction and  dereliction  rules  by the {\emph{absorption}} rule  $\wnbrule$ (see \Cref{fig:exponantial-rules-functorial-non-uniform-conditional}).
As a consequence, the exponential modalities of $\mathsf{PLL}$ are weaker than the usual linear logical modalities. In particular,  the  \emph{digging} formula $\oc A \limp \oc \oc A$
and the \emph{contraction} formula $\oc A \limp \oc A \otimes \oc A$
are not provable in $\mathsf{PLL}$. 

As already remarked in~\cite{Mazza14,Mazza15,MazzaT15}, systems based on parsimonious logic like $\mathsf{PLL}$ admit a straightforward  computational reading based on  streams: on the one hand, functorial promotion can be interpreted as a constructor  $\oc (\_) : A \limp \oc A$, which takes  a program $M$ of type $A$  and returns the stream $\oc( M)\dfn (M, M, \ldots, M, \ldots)$; on the other hand,  the absorption rule can be seen as a \emph{pop} operation on streams $\mathsf{pop}: \oc A \limp A \otimes \oc A$, which extracts the first element off the stream, i.e., 
$\mathsf{pop} (\oc( M)) = M \otimes \oc ( M)$. This step of computation is reflected by  the  cut elimination rule governing the interaction between $\fprule$ and $\wnbrule$ (see~\Cref{fig:exponential-functorial-ce}).

The computational interpretation outlined above can be pushed further by considering \emph{non-uniform parsimonious linear logic}  $\mathsf{nuPLL}$, which replaces $\fprule$ with {its} infinitely branching version {$\nuprule$ in \Cref{fig:exponantial-rules-functorial-non-uniform-conditional}}, called \emph{non-uniform promotion}~\cite{acclavio2023infinitary,MazzaT15, Mazza15}, 
which allows the construction of  general streams of the form $(M_1, M_2, \ldots, M_n, \ldots)$, where the side condition ensures that streams are over {finitely many} data.

Motivated by our complexity-theoretic goals, we present $\pll$ and $\dpll$,  second-order versions of  $\mathsf{PLL}$ and $\mathsf{nuPLL}$  where instantiation of the existential quantifier $\exists$ is restricted to \emph{linear formulas}, i.e., ($\oc,\wn$)-free formulas. On the one hand, second-order quantifiers are essential to implement iteration of a polynomial time Turing machine transition function (whose advice will be encoded as a stream via the rule $\nuprule$). On the other hand,  the weaker exponential rules of (non-uniform) parsimonious linear logic and the linearity restriction on second-order instantiation together will guarantee a polynomial bound on cut elimination\footnote{{If we dropped the linearity restriction on $\exists$, cut elimination for $\pll$ and $\dpll$  would become superexponential. See~\Cref{rem:exponential-blow-up} for an example formulated within a type-theoretic version of these logics.}}.
}

\subsection{The proof systems $\pll$ and $\dpll$}

The proof systems considered in this paper are formulated in the sequent calculus style presentation, with  \emph{formulas}  from second-order multiplicative-exponential linear logic with units ($\mell$).
These are generated by a countable set of propositional variables $\atoms=\set{X,Y,\ldots}$ using the following grammar:
\[
A\Coloneqq X \mid \cneg X   \mid A\ltens  A\mid A\lpar A \mid \oc A \mid \wn A \mid \lone \mid \lbot \mid \forall X. A \mid \exists X. A % \mid \exists X.A \mid \forall X.A
\]

A \emph{$\oc$-formula}  (resp.~\emph{$\wn$-formula}) is a formula of the form $\oc A$ (resp.~$\wn A$).  An \emph{exponential formula} is either a $\oc$-formula or a $\wn$-formula. We denote by $\mathsf{FV}(A)$ the set of propositional variables occurring free in $A$, and by $A[B/X]$ the standard meta-level capture-avoiding substitution of $B$ for the free occurrences of the propositional variable $X$ in $A$. 
\emph{Linear negation} $\cneg{(\cdot)}$ is defined by De Morgan's~laws
$\cneg{({\cneg A})}		= A$, 
$\cneg{(A \ltens B)} 	= \cneg A \lpar \cneg B$, 
$\cneg {(\oc A)} 		= \wn \cneg A$,  
$\cneg {(\lone)} 	= \lbot $,    and    $\cneg{(\forall X. A)}= \exists X. \cneg{A}$,
while \emph{linear implication} is \mbox{$A \limp B \coloneqq \cneg A \parr B$}.

\begin{figure*}[!t]
	\def\myskip{\hskip3em}
	\centering
	\[
		\vlinf{\axr}{}{A, \cneg A}{}
		\myskip
		\vliinf{\cutr}{}{ \Gamma, \Delta}{\Gamma, A}{\cneg A\!,\Delta}
	\]

	\[
		\vliinf{\ltens}{}{\Gamma,\Delta,A \ltens B}{\Gamma, A}{B, \Delta}
		\myskip
		\vlinf{\lpar}{}{\Gamma, A \lpar B}{\Gamma, A , B}
		\myskip
		\vlinf{\lone}{}{\lone}{}
		\myskip
		\vlinf{\lbot}{}{ {\Gamma}, \lbot}{ {\Gamma}}
	\]
			
\medskip
	\[	
		\vlinf{\forall}{\mbox{\scriptsize \ with $X \not \in \mathsf{FV}(\Gamma)$}}{\Gamma, \forall X. A}{\Gamma, A} 
		\myskip
		\vlinf{\exists}{\mbox{\scriptsize \ where $B$  is ($\oc$,$\wn$)-free }}{\Gamma, \exists X. A}{\Gamma, A[B/X]} 
	\]
	\caption{
	Identity (first line), multiplicative (second line), and second-order (third line) sequent calculus rules. 
	}
	\label{fig:sequent-system-pll}
\end{figure*}

\begin{figure}
	\centering
	\[
	\vlinf{\wnwrule}{}{ {\Gamma}, \wn A}{ {\Gamma}}
	\qquad\qquad
	\vlinf{\wnbrule}{}{ {\Gamma}, {\wn A}}{ {\Gamma}, A, {\wn A}}
	\qquad\qquad
    \vlinf{\fprule}{}{  \wn \Gamma,\oc A}{  \Gamma, A} 
	\qquad\qquad
	\vliinf{\cprule}{}{\wn \Gamma, \oc A}{\Gamma, A}{\wn \Gamma, \oc A}
	\]
	\vspace{0.5cm}
	\[
	\vlderivation{
		\vliiiin{\nuprule}{
			\scriptsize{\text{$\set{\der_i\mid i\in\Nset}$ is finite}}
		}{\wn \Gamma, \oc A}{\vldr{\der_{0}}{\Gamma,A}}{\vlhy{\cdots}}{\vldr{\der_{n}}{\Gamma,A}}{\vlhy{\cdots}}
	}
	\qquad
	\vlinf{\ocwrule}{}{\oc A}{}
	\qquad\qquad
	\vliinf{\ocbrule}{}{\Gamma, \Delta, \oc A}{\Gamma, A}{\Delta, \oc A}
	\]
	\caption{Exponential sequent calculus rules.}
	\label{fig:exponantial-rules-functorial-non-uniform-conditional}
\end{figure}

\begin{definition}[Sequent calculus rules]
		We consider the sequent calculus rules 
		\emph{axiom} ($\axr$), 
		\emph{cut} ($\cutr$), 
		%\defin{exchange} ($\exch$),   
		\emph{tensor} ($\ltens$), 
		\emph{par} ($\lpar$),
		\emph{one} ($\lone$),
		\emph{bottom} ($\lbot$),
		\emph{weakening} ($\wnwrule$),  
		\emph{absorption} ($\wnbrule$),
        \emph{functorial promotion} ($\fprule$),  
        \emph{conditional promotion} ($\cprule$), 
        \emph{non-uniform functorial promotion} ($\nuprule$),  
        \emph{co-weakening} ($\ocwrule$),
        \emph{co-absorption} ($\ocbrule$)
		\emph{universal quantifier} ($\forall$), 
		\emph{existential quantifier} ($\exists$) from 
		\Cref{fig:sequent-system-pll,fig:exponantial-rules-functorial-non-uniform-conditional}.
        Rules
		%$\axr$,
        $\ltens$, $\lpar$, $\lone$ and $\lbot$
		are  \emph{multiplicative},  rules $\fprule$, $\cprule$, $\nuprule$, $\wnwrule$ and $\wnbrule$  are  \emph{exponential}, rules $\forall$ and $\exists$ are \emph{second-order}. 
        The formulas $A$ and $\cneg{A}$ in the rule $\cutr$ in \Cref{fig:sequent-system-pll} are the \emph{cut-formulas} of the rule.
        A $\cutr$ rule is:
        \begin{itemize}
            \item of shape \emph{$\cutstep{\rrule_1}{\rrule_2}$} if its premises are conclusions of rules $\rrule_1$ and $\rrule_2$;
            \item \emph{multiplicative} if it is of shape $\cutstep{\rrule_1}{\rrule_2}$ where either at least one among $\rrule_1$ and $\rrule_2$ is $\axr$, or $\rrule_1$ and $\rrule_2$ are multiplicative and introduce the cut-formulas;
            \item \emph{exponential} (resp.~\emph{second-order}) if it is of shape $\cutstep{\rrule_1}{\rrule_2}$ where $\rrule_1$ and $\rrule_2$ are \emph{exponential} (resp.~\emph{second-order}) and introduce the cut-formulas.
        \end{itemize}
\end{definition}

Every proof system we will consider is a subset of the rules in \Cref{fig:sequent-system-pll,fig:exponantial-rules-functorial-non-uniform-conditional}.
        
\begin{definition}[Proof systems $\pll$ and $\dpll$]
	\defin{Second-order parsimonious linear logic}, noted $\pll$, is defined by the set of rules 
	$ \set{\axr,\cutr,\ltens,\lpar,\lone,\lbot,\wnbrule,\wnwrule,\fprule, \forall, \exists}$.
	The set of derivations over the rules in $\pll$ is also denoted by $\pll$.
	The \emph{propositional} fragment of $\pll$ (both the set 	of rules and the set of its derivations) is denoted~by~$\pllprop$. 
	
	\defin{Second-order non-uniform parsimonious linear logic}, noted $\dpll$,  is defined by the set of rules $ \set{\axr,\cutr,\ltens,\lpar,\lone,\lbot,\wnbrule,\wnwrule,\nuprule, \forall, \exists}$, i.e., by replacing $\fprule$ with $\nuprule$ in $\pll$. 
	The set of derivations over the rules in $\dpll$ is also noted~$\dpll$.%
	\footnote{This requires a slight change in \Cref{def:coderivation}: the tree labelled by a derivation in $\dpll$ must be over $\Nset^\omega$ instead of $\set{1,2}^*$, to deal with infinitely branching derivations.}
\end{definition}

Note that $\dpll$ subsumes $\pll$. Indeed, rule $\fprule$ is simulated by rule $\nuprule$ when the derivations $\der_0, \der_1, \ldots \der_n, \ldots$ in its premises are the same.

\begin{remark}
	The proof systems  $\dpll$ and $\pll$ are inspired by Mazza's (type) systems \emph{non-uniform parsimonious logic} $\mathsf{nuPL}_{\forall\ell}$  and \emph{parsimonious logic} $\mathsf{PL}_{\forall\ell}$. 
	The main difference lies in the fact that  our proof systems are free of the \emph{co-absorption rule} ($\ocbrule$) and the \emph{co-weakening rule} ($\ocwrule$) in \Cref{fig:exponantial-rules-functorial-non-uniform-conditional}, which are not admissible in linear logic.
	Because of the absence of these rules,  $\dpll$ and $\pll$  can be properly seen as subsystems of linear logic. 
\end{remark}

\begin{figure*}[!t]
	\adjustbox{max width = \textwidth}{$\begin{array}{c}
		\begin{array}{c|c|c|c}
			\trueder
		&
			\falseder
		&
			\der_\mathsf{abs}
		&
			\der_\mathsf{der}
		\\\cline{1-4}
			\vlderivation{
				\vlin{\forall}{}{
					\vlin{\lpar}{}{( {X^\bot_1}  \parr  {X^\bot_2}) \lpar ({X_3} \otimes {X_4})}
					{\forall X.({X^\bot}  \parr { X^\bot}) \lpar ({X} \otimes X) }
				}{
					\vlin{\parr}{}{ ({ X_1^\bot}  \parr { X^\bot_2}) , ({X_3} \otimes {X_4})}
					{\vliin{\otimes}{}{ { X^\bot_1}  ,  { X^\bot_2} ,  {X_3} \otimes {X_4}}
						{\vlin{\axr}{}{{ X^\bot_1}, {X_3}}{\vlhy{}}}
						{\vlin{\axr}{}{{ X^\bot_2}, {X_4}}{\vlhy{}}}
					}
				}
			} 
		&
			\vlderivation{
				\vlin{\forall}{}{
					\vlin{\lpar}{}{( {X^\bot_1}  \parr  {X^\bot_2}) \lpar ({X_3} \otimes {X_4})}
					{\forall X.({X^\bot}  \parr { X^\bot}) \lpar ({X} \otimes X) }
				}{
					\vlin{\parr}{}{ ({ X_1^\bot}  \parr { X^\bot_2}) , ({X_3} \otimes {X_4})}
					{\vliin{\otimes}{}{ { X^\bot_1}  ,  { X^\bot_2} ,  {X_3} \otimes {X_4}}
						{\vlin{\axr}{}{{ X^\bot_1}, {X_4}}{\vlhy{}}}
						{\vlin{\axr}{}{{ X^\bot_2}, {X_3}}{\vlhy{}}}
					}
				}
			}  
		&
			\vlderivation{
				\vlin{\lpar}{}{\wn \cneg{A} \lpar (A \ltens \oc A)}
				{
					\vlin{\wnbrule}{}{\wn \cneg{A}, A \ltens \oc A}
					{
						\vliin{\ltens}{}{\cneg{A}, \wn \cneg{A}, A \ltens \oc A}
						{
							\vlin{\axr}{}{\cneg{A}, A}{\vlhy{}}
						}
						{
							\vlin{\axr}{}{\wn \cneg{A}, \oc A}{\vlhy{}}
						}
					}
				}
			}
		&
			\vlderivation{
				\vlin{\lpar}{}{\wn \cneg{A} \lpar A}
				{
					\vlin{\wnbrule}{}{\wn \cneg{A}, A}
					{
						\vlin{\wnwrule}{}{\cneg{A}, \wn \cneg{A}, A}
						{
							\vlin{\axr}{}{\cneg{A}, A}{\vlhy{}}
						}
					}
				}
			}
		\end{array}
		\end{array}$}
	\caption{
		Examples of derivations in $\pll$ ($\trueder,
		\falseder,
		\der_\mathsf{abs},
		\der_\mathsf{der}$). % and of coderivations in $\nwpll$ ($\zeroder,\wnder$).
		%		proving 
		%		the \emph{Milner's law} $\oc A \multimap A \ltens \oc A$  
		%%		($\dD_{\mathsf{Mil}}$)
		%		and
		%		the \emph{dereliction law} $\oc A \multimap A$
		%%		($\dD\mathsf{der}$)
		%		respectively.
		%of the formula $\Bool=\forall X.({\cneg X}  \parr {\cneg X}) \lpar ({X} \otimes X)$.
	}
	\label{fig:der:examples-of-derivations}
\end{figure*}

\begin{example}\label{ex:derivations}
	\Cref{fig:der:examples-of-derivations} gives some examples of derivation in $\pll$.
	The (distinct) derivations $\falseder$ and $\trueder$ prove the same formula $\Bool = \forall X.\allowbreak ({\cneg X}  \parr {\cneg X}) \parr ({X} \otimes X)$, where 
	{we distinguish the occurrences of the variable $X$ by writing them as $X_1$, $X_2$, $X_3$, and $ X_4$ to highlight the differences between the two derivations. 
	}
	Derivations $\dD_{\mathsf{abs}}$ and  $\dD_\mathsf{der}$
	respectively prove  the 
	\emph{absorption law} $\oc A \multimap A \ltens \oc A$
	and
	\mbox{the \emph{dereliction law} $\oc A \multimap A$}.
\end{example}

\begin{definition}[Cut elimination steps in $\pll$ and $\dpll$]
The \defin{cut elimination} relation $\cutelim$ in $\pll$ (resp., $\dpll$)
is the union of
multiplicative, second-order and exponential cut elimination steps 
in  	\Cref{fig:cut-elim-finitary,fig:exponential-functorial-ce} (resp., 	\Cref{fig:cut-elim-finitary,fig:damiano-cut-elimination}). The non-commutative steps are called \emph{principal}.
The reflexive-transitive closure of $\cutelim$ is noted $\cutelim^*$.
\end{definition}

Termination of cut elimination in $\pllprop$ has been proved in~\cite{CSL}, relying on strong normalization of (second-order) linear logic \cite{pag:tor:StrongNorm}. Hence, the same argument extends straightforwardly to $\pll$.
\begin{theorem}
	\label{thm:cut-elimination-pll} 
	
	For every $\der \in \pll$,
	there is a cut-free $\der'\in\pll$ such that $\der\cutelims\der'$.
\end{theorem}

	%	Magari altro esempio di derivazione}

\begin{figure*}[t]
\[
\vlderivation{
	\vliin{\cutr}{}{\Gamma, A}{\vlin{\axr}{}{A, \cneg A}{\vlhy{}}}{\vlhy{\Gamma, A}}
}
\quad 
\cutelim
\quad 
\vlderivation{\vlhy{\Gamma, A}}
\qquad \qquad 
\vlderivation{
	\vliin{\cutr}{}{\Gamma}
	{
		\vlin{\lbot}{%\mbox{\scriptsize $X \not \in FV(\Gamma)$} 
		}{\Gamma, \lbot}{\vlhy{\Gamma}}
	}{
		\vlin{\lone}{}{\lone}{}
	}
}
\quad 
\cutelim
\quad 
\vlderivation{
	\vlhy{\Gamma}
}	
\]
\smallskip
\[
\vlderivation{
	\vliin{\cutr}{}{\Gamma, \Delta,\Sigma}{
		\vlin{\lpar}{}{\Gamma, A\lpar B}{\vlhy{\Gamma, A, B}}
	}{
		\vliin{\ltens}{}{ \Delta,\cneg A \ltens \cneg{B}\!, \Sigma}{\vlhy{ \Delta,\cneg A}}{\vlhy{\cneg{B}\!,\Sigma}}
	}
}\quad 
\cutelim
\quad 
\vlderivation{
	\vliin{\cutr}{}{\Gamma, \Delta, \Sigma}{
		\vliin{\cutr}{}{\Gamma, \Delta, B}{\vlhy{\Gamma, B,A}}{\vlhy{\cneg{A}\!, \Delta}}
	}{
		\vlhy{\cneg{B}\!,\Sigma}
	}
}
\]
	\vspace{0.5cm}
	\[
			\vlderivation{
		\vliin{\cutr}{}{\Gamma, \Delta}
		{
			\vlin{\forall}{%\mbox{\scriptsize $X \not \in FV(\Gamma)$} 
			}{\Gamma, \forall X. A}{\vlhy{\Gamma, A}}
		}{
			\vlin{\exists}{}{\Delta, \exists X. \cneg{A}}{\vlhy{\Delta, \cneg{A}[B/X]}}
		}
	}
\quad 
	\cutelim
	\quad 
	\vlderivation{
		\vliin{\cutr}{}{\Gamma, \Delta}
		{
			\vlhy{\Gamma, A[B/X]}
		}{
			\vlhy{\Delta, \cneg{A}[B/X]}
		}
	}
\]
	\vspace{0.5cm}
\[
	\begin{array}{cl}
		\qquad 
			\vlderivation{
			\vliin{\cutr}{}{\Gamma, \Delta}
			{\vlin{\rrule_1}{}{\Gamma, A}{\vlhy{  \Gamma_1, A}}}
			{\vlhy{\cneg{A}, \Delta}}
		}
		\quad \cutelim \quad 
		\vlderivation{
			\vlin{\rrule_1}{}{\Gamma, \Delta} 
			{\vliin{\cutr}{}{\Gamma_1, \Delta}
				{\vlhy{   \Gamma_1, A }}
				{\vlhy{ \cneg{A}, \Delta}}
			}
		}\\[1cm]
	\qquad 	\vlderivation{
	\vliin{\cutr}{}{\Gamma, \Delta}
	{\vliin{\rrule_2}{}{\Gamma, A}{\vlhy{\Gamma_1,A}}{\vlhy{\Gamma_2}}}
	{\vlhy{\Delta, \cneg A}}
}
\quad \cutelim \quad 
\vlderivation{
	\vliin{\rrule_2}{}{\Gamma,\Delta} 
	{\vliin{\cutr}{}{\Gamma_1,\Delta}
		{\vlhy{\Gamma_1, A}}
		{\vlhy{\cneg{A},\Delta}}
	}
	{\vlhy{\Gamma_2}}
}
&\qquad\mbox{with $\rrule_2 \neq \cutr$.}
	\end{array}
\]

	\caption{Multiplicative {(first two lines)}, second-order {(third line)}, and commutative {(last two lines)} cut elimination steps.}
	\label{fig:cut-elim-finitary}
\end{figure*}

\begin{figure}
	\[
	\vlderivation{
		\vliin{\cutr}{}{\wn \Gamma, \wn \Delta, \oc B}{
			\vlin{\fprule}{}{\wn \Gamma, \oc A}{
				\vlhy{\Gamma, A}
			}
		}{
			\vlin{\fprule}{}{\wn \cneg {A}\!, \wn \Delta, \oc B}{
				\vlhy{\cneg {A}\!, \Delta, B}
			}
		}
	}
	\quad 
	\cutelim
	\quad 
	\vlderivation{
		\vlin{\fprule}{}{\wn \Gamma, \wn \Delta, \oc B}{
			\vliin{\cutr}{}{\Gamma, \Delta, B}{
				\vlhy{\Gamma, A}
			}{
				\vlhy{\cneg {A}\!, \Delta, B}
			}
		}
	}
	\]
	\smallskip
	\[
	\vlderivation{
		\vliin{\cutr}{}{\wn \Gamma, \Delta}{
			\vlin{\fprule}{}{\wn \Gamma, \oc A}{
				\vlhy{\Gamma, A}
			}
		}{
			\vlin{\wnwrule}{}{\Delta,\wn \cneg A}{\vlhy{\Delta}}
		}
	}
	\quad 
	\cutelim
	\quad
	\vlderivation{
		\vliq{\wnwrule}{}{\wn\Gamma,\Delta}{\vlhy{\Delta}}
	}		
	\]
	\smallskip
	\[
	\vlderivation{
		\vliin{\cutr}{}{\wn \Gamma, \Delta}{
			\vlin{\fprule}{}{\wn \Gamma, \oc A}{
				\vlhy{\Gamma, A}
			}
		}{
			\vlin{\wnbrule}{}{\Delta, \wn \cneg A}{
				\vlhy{\Delta,\cneg {A}\!, \wn\cneg A}
			}
		}
	}
	\quad 
	\cutelim
	\quad 
	\vlderivation{
		\vliq{\wnbrule}{}{\wn\Gamma, \Delta}{
			\vliin{\cutr}{}{\Gamma, \wn \Gamma,\Delta }{
				\vlhy{\Gamma, A}
			}{
				\vliin{\cutr}{}{\wn\Gamma,  \Delta, \cneg A}{
					\vlin{\fprule}{}{\wn \Gamma,\oc A}{\vlhy{\Gamma, A}}
				}{
					\vlhy{\Delta, \cneg {A}\!, \wn \cneg A}
				}
			}
		}
	}
	\]
	\caption{Exponential cut elimination steps {in $\pll$ (with $\fprule$)}.}
	\label{fig:exponential-functorial-ce}
\end{figure}

A byproduct of our grand tour diagram in~\Cref{fig:grand-tour-diagram} is that $\pll$ {represents} exactly the class of functions in $\fptime$. %(\Cref{thm:fp}).  
To see this,  we  introduce a rather permissive notion of representability for $\pll$, along the lines of~\cite{Ronchi-Gaboardi,Uniform-enc}.
This notion smoothly adapts to other proof systems we shall study in this paper.

\begin{definition}[Representability]\label{defn:representability-bis} 
	A set $T$ is \defin{represented in $\pll$} by a formula $\mathbf{T}$  if 
	there is an injection $(\cod{\,\cdot\,})$ from $T$ to the set of $\cutr$-free derivations in $\pll$ with conclusion $\mathbf{T}$.
	
	A derivation $\der$ in $\pll$ \defin{represents} a (total) function  $f \colon T_1 \times \ldots \times T_n \to T$  if it proves {$ \mathbf{T_1}\limp \ldots\limp \mathbf{T_n}\limp \mathbf{T}$} where $\mathbf{T_1}, \ldots, \mathbf{T_n}, \mathbf{T}$ represent $T_1, \ldots, T_n, T$ respectively, and for all $x_1 \in T_1, \ldots, x_n \in T_n$, the reduction  in~\Cref{fig:representability} holds.
	A (total) function  $f \colon T_1 \times \ldots \times T_n \to T$ is \defin{representable in $\pll$}   if there is a derivation in $\pll$ representing $f$.
	We denote by $\cod{f}$ a derivation representing $f$. 
	%	\[
	%	\vlderivation{
		%		\vliin{\cutr}{}{\mathbf{T}}{\vldr{\cod{x_n}}{\mathbf{T_n}}}
		%		{
			%		\vliin{\cutr}{}{\cneg{\mathbf{T_n}},\mathbf{T}}{\vldr{\cod{x_{n-1}}}{\mathbf{T_{n-1}}}}{
				%		\vliin{\cutr}{}{\vdots}{\vldr{\cod{x_1}}{\mathbf{T_1}}}{\vldr{\cod{f}}{\cneg{\mathbf{T_1}},  \ldots \cneg{\mathbf{T_n}}, \mathbf{T}}}
				%}
			%}
		%	} \qquad \cutelim^*\qquad  
	%	\vlderivation{
		%		\vltr{\cod{f(x_1, \ldots, x_n)}}{\mathbf{T}}{\vlhy{\ \ \ }}{\vlhy{\ \ \ \ \ \ \ \ \ \ \ \ \ \ \ \ \ }}{\vlhy{\ \ \ }}
		%	}
	%	\]
	%	\gianluca{Definizione molto generale perché  usiamo $N[]$ e $S[]$ dove possiamo averre due $S[A]$ e $S[B]$ con $A$ e $B$ differenti. Possiamo inoltre avere un termine "identità" di tipo $ \String[]\limp\oc  \String[]$ e $\Nat[]\limp \oc \Nat[]$ (aumentando il nesting di $\String[]$ e $\Nat[]$), per cui i tipi di dato $S[]$ sussumono ad esempio $\oc S[]$ (Non serve una definizione di rappresentability con bang esterni al tipo).  }
\end{definition}

\begin{figure*}[t]
	\centering
\[
		\vlderivation{
			\vliin{\er \limp}{}{\mathbf{T}}
			{
				\vliin{\er \limp}{}{\mathbf{T_n}\limp \mathbf{T}}
				{
					\vliin{\er \limp}{}{\vdots}
					{	
						\vldr{\der}{ \mathbf{T_1}\limp \ldots\limp \mathbf{T_n}\limp \mathbf{T}}
					}
					{
						\vltr{\cod{x_1}}{\mathbf{T_1}}{\vlhy{\ }}{\vlhy{ \ }}{\vlhy{\ }}
					}
				}
				{
					\vltr{\cod{x_{n-1}}}{\mathbf{T_{n-1}}}{\vlhy{\ }}{\vlhy{ \ \ \ \ }}{\vlhy{\ }}
				}
			}
			{
				\vltr{\cod{x_{n}}}{\mathbf{T_{n}}}{\vlhy{\ }}{\vlhy{ \ }}{\vlhy{\ }}
			}
		}
		\quad 
		\cutelim^*  
		\vlderivation{
			\vltrf{\cod{f(x_1, \ldots, x_n)}}{\mathbf{T}}{\vlhy{}}{\vlhy{\qquad\qquad\qquad}}{\vlhy{}}{0.5}
		}
		\]
		\[		
		\mbox{where}\quad
		\vliinf{\er \limp}{}{\Gamma, \Delta, B}{
			\Gamma, A \limp B
		}{
			\Delta, A 
		}
		\dfn
		\vlderivation{
			\vliin{\cutr}{}{\Gamma, \Delta, B}{
				\vlhy{\Gamma, A \limp B}
			}{
				\vliin{\otimes}{}{\Delta, \cneg{(A \limp B)}, B}{
					\vlhy{\Delta, A}
				}{
					\vlin{\axr}{}{\cneg{B}, B}{\vlhy{}}
				}
			}
		}
		\]
	\caption{Representability of a function $f: T_1 \times \ldots \times T_n \to T$.}
	\label{fig:representability}
\end{figure*}

\begin{example}\label{ex:representation-bool}
	The set of Booleans $\Boolset = \{\false, \true\}$ is represented in $\pll$ by the formula $\Bool$ in~\Cref{ex:derivations} thanks to the derivations $\falseder$ and $\trueder$  in~\Cref{fig:der:examples-of-derivations}.
	The set $\{\false,\true\}^*$ of Boolean strings is represented by the formula $	\String \dfn\forall X. \oc (\Bool \limp X \limp X)\limp X\limp X$.
	We will actually mainly work with a parametric version of $\String$, i.e., the instantiation	$\String[A]\dfn\oc (\Bool \limp A \limp A)\limp A\limp A$  for any formula $A$. 
	We  write $\String[]$ to denote $\String[A]$ for some $A$.	
	Each string $b_1 \cdots b_n \in \{\false,\true\}^*$ with $n \geq 0$ is then encoded in $\pll$ by the derivation $\cod{b_1 \cdots b_n}$ of $\String[A]$ shown in \Cref{eq:string-encoding} below:
\begin{equation}\label{eq:string-encoding}
\vlderivation{
	\vliq{2\times\parr}{}{S[A]}{
		\vliq{n\times \wnbrule}{}{
			\wn ({\Bool} \otimes {A}\otimes  \cneg{A}),  \cneg{A} , A
		}{
			\vlin{\wnwrule}{}{
				({\Bool} \otimes {A}\otimes  \cneg{A}), 
				\overset{n }{\ldots }, 
				({\Bool} \otimes {A}\otimes  \cneg{A}), 
				\wn ({\Bool} \otimes {A}\otimes  \cneg{A}),
				\cneg{A} , A
			}{
				\vliin{\ltens}{}{ 
					({\Bool} \otimes {A}\otimes  \cneg{A}), \overset{n }{\ldots }, ({\Bool} \otimes {A}\otimes  \cneg{A}), \cneg{A} , A
				}{
						\vldr{\cod{b_1}}{\Bool}
				}{
					\vliin{\otimes}{}{
						({\Bool} \otimes {A}\otimes  \cneg{A}), \overset{n -1}{\ldots }, ({\Bool} \otimes {A}\otimes  \cneg{A}), {A}\otimes  \cneg{A},  \cneg{A} , A
					}{
						\vlin{}{}{
							({\Bool} \otimes {A}\otimes  \cneg{A}), \overset{n-1 }{\ldots }, ({\Bool} \otimes {A}\otimes  \cneg{A}), \cneg{A} , A
						}{
							\vlin{}{}{\vdots}{
								\vliin{\otimes }{}{
									({\Bool} \otimes {A}\otimes  \cneg{A}),  \cneg{A} , A
								}{
									\vldr{\cod{b_n}}{\Bool}
								}{
									\vliin{\otimes}{}{ 
										{A}\otimes  \cneg{A},  \cneg{A} , A
									}{ 
										\vlin{\axr}{}{\cneg{A}, A}{\vlhy{}}
									}{
										\vlin{\axr}{}{\cneg{A}, A}{\vlhy{}}
									}
								}
							}
						}
					}{
						\vlin{\axr}{}{\cneg{A}, A}{\vlhy{}}
					}
				}
			}
		}
	}
}
\end{equation}

	%	
	%	The set of boolean strings will be represented by formulas of the form $\String[A]= \oc (\Bool \limp A\limp A)\limp A \limp A$,  for some 
	%	formula $A$. 
	%	\mmatteo{perché c'è un $A$ alieno nella def? Cercherei un altro esempio perché questa def degli stream di booleani confonde con l'idea che $\oc$ permette di costruire gli stream}
	%	With  $\String[]$ we denote $\String[A]$ for some $A$,  and  $ \String$  will denote $\forall X. \String[X] $. 
	%\todo[inline]{Mettere figura con encoding of boolean strings using (co)derivations.}
\end{example}

\begin{figure*}[t]
\[
			\vlderivation{
				\vliin{\cutr}{}{\wn \Gamma, \wn \Delta, \oc B}{
					\vlin{\nuprule}{}{\wn \Gamma, \oc A}{
						\vlhy{\left\{ \vlderivation{ \vldr{\der_i}{\Gamma, A}}\right\}_{i \in \Nset} }
					}
				}{
					\vlin{\nuprule}{}{\wn \cneg{A}\!, \wn \Delta, \oc B}{
						\vlhy{\left\{ \vlderivation{ \vldr{\der'_i}{\cneg{A}\!, \Delta, B}}\right\}_{i \in \Nset} }
					}
				}
			}
		\quad 
			\cutelim\quad 
			\vlderivation{
				\vlin{\nuprule}{}{\wn \Gamma, \wn \Delta, \oc B}{
					\vlhy{
						\left\{  \vlderivation{ 
							\vliin{\cutr}{}{\Gamma, \Delta, B}
							{
								\vldr{\der_i}{\Gamma, A}
							}{
								\vldr{\der'_i}{\cneg{A}\!, \Delta, B}
							}
						} 	\right\}_{i \in \Nset}
					}
				}
			}
			\]
			\medskip
			\[
			\vlderivation{
				\vliin{\cutr}{}{\wn \Gamma, \Delta}{
					\vlin{\nuprule}{}{\wn \Gamma, \oc A}{
						\vlhy{\left\{ \vlderivation{ \vldr{\der_i}{\Gamma, A}}\right\}_{i \in \Nset} }
					}
				}{
					\vlin{\wnwrule}{}{\Delta,\wn \cneg A}{\vlhy{\Delta}}
				}
			}
			\quad \cutelim \quad 
			\vlderivation{
				\vliq{\size{\Gamma}\times\wnwrule}{}{\wn\Gamma,\Delta}{\vlhy{\Delta}}
			}
			\]
			\[
			\vlderivation{
				\vliin{\cutr}{}{\wn \Gamma, \Delta}{
					\vlin{\nuprule}{}{\wn \Gamma, \oc A}{
						\vlhy{\left\{ \vlderivation{ \vldr{\der_i}{\Gamma, A}}\right\}_{i \in \Nset} }
					}
				}{
					\vlin{\wnbrule}{}{\Delta, \wn \cneg A}{
						\vlhy{\Delta,\cneg{A}\!, \wn\cneg A}
					}
				}
			}
			\quad \cutelim\quad 
			\vlderivation{
				\vliq{\size{\Gamma}\times\wnbrule}{}{\wn\Gamma, \Delta}{
					\vliin{\cutr}{}{\Gamma, \wn \Gamma,\Delta }{
						\vldr{\der_0}{\Gamma, A}
					}{
						\vliin{\cutr}{}{\wn\Gamma,  \Delta, \cneg A}{
							\vlin{\nuprule}{}{\wn \Gamma,\oc A}
							{
								\vlhy{\left\{ \vlderivation{ \vldr{\der_{i+1}}{\Gamma, A}}\right\}_{i \in \Nset} }
							}
						}{
							\vlhy{\Delta, \cneg{A}\!, \wn \cneg A}
						}
					}
				}
			}
\]
	\caption{Exponential cut elimination steps {in $\dpll$ (with $\nuprule$)}.}
	\label{fig:damiano-cut-elimination}
\end{figure*}

\section{Non-wellfounded Second-Order Parsimonious Linear Logic }\label{sec:coder}

\subsection{The non-wellfounded proof system $\nwpll$}
\label{subsect:from-infitely-branching}

{In~{\cite{CSL}} we introduced $\nwpllprop$, a non-wellfounded (finitely branching) version of propositional parsimonious linear logic $\pllprop$, by exploiting the notion of coderivation, as opposed to derivation (see~\Cref{def:coderivation}). 
	We now introduce $\nwpll$, a second-order extension of $\nwpllprop$.}

\begin{definition}[The proof system $\nwpll$] 
\defin{Non-wellfounded second-order parsimonious linear logic}, noted $\nwpll$, is defined as the set of \emph{coderivations} over the set of rules $ \set{\axr,\ltens,\lpar, \allowbreak\lone, \lbot, \allowbreak\cutr,\wnbrule,\wnwrule,\cprule, \forall, \exists}$, i.e., obtained from $\pll$ (resp., $\dpll$) by replacing  $\fprule$ (resp., $\nuprule$) with {the \defin{conditional promotion} rule $\cprule$ (see \Cref{fig:exponantial-rules-functorial-non-uniform-conditional}).} 
\end{definition}

{Non-wellfounded second-order parsimonious linear logic $\nwpll$ subsumes both $\pll$ and $\dpll$. 
	Indeed, both $\fprule$ and $\nuprule$ can be simulated in $\nwpll$ by an infinite coderivation called \emph{non-wellfounded box} (see~\Cref{eq:box} and the definition below) obtained by iterating  $\cprule$ (to simulate $\fprule$, the coderivations $\der_0, \der_1, \ldots \der_n, \ldots$ in \Cref{eq:box} have to be the same).}

\begin{figure}[t]
	\centering
{
$\begin{array}{c|c|c}
	\zeroder
	&
	\wnder 
	&
	\derstream{\der_0,\ldots, \der_n} 
	\\\cline{1-3}
	\vlderivation{
		\vliin{\cutr}{}
		{\Gamma, A}
		{
			\vlin{\axr}{}{\cneg A, A}{\vlhy{}}
		}
		{
			\vliin{\cutr}{}
			{\Gamma,  A}
			{
				\vlin{\axr}{}{\cneg A, A}{\vlhy{}}
			}
			{\vlin{\cutr}{}{\Gamma, A}{\vlhy{\vdots}}}
		}
	}
	\quad 
	&
	\quad 
	\vlderivation{
		\vlin{\wnbrule}{}{\wn A}{
			\vlin{\wnbrule}{}{A, \wn A}{
				\vlin{\wnbrule}{}{A, A, \wn A}{\vlhy{\vdots}}
			}
		}
	}
	\quad
	&
	\quad
	\vlderivation{
		\vliin{\cprule}{}{\wn \Gamma, \oc A}{\vldr{\der_0}{\Gamma, A}}{
			\vliin{\cprule}{}{{\wn \Gamma, \oc A}}{\vldr{\der_1}{\Gamma, A}}{				 			
				\vliin{\cprule}{}{\reflectbox{$\ddots$}}{\vldr{\der_n}{\Gamma, A}}{
					\vlin{\cprule}{}{{\wn \Gamma, \oc A}}{\vlhy{\vdots}}
				}
			}
		}
	}
\end{array}$
}	
	\caption{{Examples of coderivations in $\nwpll$ ($\zeroder,\wnder$) and the non-wellfounded box $\derstream{\der_0,\ldots, \der_n}$ in $\nwpll$.}}
	\label{eq:box}
\end{figure}

\begin{definition}\label{defn:boxes} 
	A   \defin{non-wellfounded box} 
	(\nwbox for short)
	is a coderivation of  $  \nwpll$ of the form $\derstream{\der_0,\der_1,\ldots, \der_n}$  as in~\Cref{eq:box}, for {any} formula $A${, sequent $\Gamma$} and coderivations $\der_0, \der_1, \ldots \der_n, \ldots \in \nwpll$.  Its \defin{principal formula} is the formula $\oc A$; its \defin{main branch} is the  infinite branch $\set{\epsilon, 2,22,\dots}$, and its $i$-th  \defin{call} is the coderivation $\der_i$.  
	
	Non-wellfounded boxes will range over $\nwpromotion$, where $\nwpromotion(i)$ denotes its $i$-th call, and  $\supportof\nwpromotion=\set{\nwpromotion(i)\mid i\in\Nset}$ denotes the set of its calls.

	We say that $\nwpromotion$ has \defin{finite support} ({resp.~}is \defin{periodic} with \defin{period} $k$)
	if $\supportof\nwpromotion$ is finite ({resp.~}if there is a minimal $k\in\Nset$ such that $\nwpromotion(i)= \nwpromotion(k+i)$ for any $i\in\Nset$).
	A coderivation $\der$ 
	has  \defin{finite support} ({resp.~}is \defin{periodic}) 
	if any $\nwbox$ in $\der$ has finite support ({resp.~}is periodic).

\end{definition}

\begin{example}\label{ex:nonReg}
Streams of booleans  can be encoded in $\nwpll$ by $\nwbox$s 
$\nwpromotion=\derstream{\der_0,\ldots, \der_n}$ as in \Cref{eq:box} with  $A \dfn \Bool$ and  $\der_i\in  \set{\falseder,\trueder}$ for each $i \in \Nset$. Then, $\nwpromotion$ has finite support, as its only calls can be $\falseder$ or $\trueder$, and it is periodic if and only if so is the stream  $\seq{\der_0,\ldots, \der_n,\ldots} \in  \set{\falseder,\trueder}^\omega$.
\end{example}

\begin{definition}[Cut elimination and representability in $\nwpll$]
	The \defin{cut elimination} relation $\cutelim$ in $\nwpll$ is the union of multiplicative, second-order, commutative and exponential cut elimination steps in~\Cref{fig:cut-elim-finitary,fig:cut-elim-pll}.
	The reflexive-transitive closure of $\cutelim$ is noted $\cutelim^*$.
	The notion of \defin{representability} for $\nwpll$  can be obtained by  adapting  \Cref{defn:representability-bis} to coderivations in $\nwpll$ {in the obvious way}. \Cref{ex:nonReg} shows that streams of booleans are representable in $\nwpll$.
\end{definition}

Similarly to $\pll$ and $\dpll$, the non-wellfounded proof-system $\nwpll$ admits a computational interpretation based on streams.   A stream of data $\der_0, \der_1, \ldots, \der_n, \ldots$ of type $A$ can be encoded by a \nwbox of the form $\derstream{\der_0,\ldots, \der_n}$ as in~\Cref{eq:box} (see, e.g., ~\Cref{ex:nonReg}).
The cut elimination step $\cutstep\cprule\wnbrule$ {then pops}  the head of the stream, the step $\cutstep\cprule\wnwrule$ erases a stream and,  finally,   $\cutstep\cprule\cprule$ allows us to stepwise ``zip'' two streams, that is, to create a new \nwbox whose $i$-th call is obtained by cutting the $i$-th calls of the two input \nwboxes.
 
\begin{remark}\label{eqn:non-finite-support}
	Unlike  the streams encoded by the rule $\nuprule$ of $\dpll$,  \nwboxes can encode streams with infinitely many distinct entries. As an example, the stream whose $i^\textup{th}$ entry is the  boolean stream $\seq{\trueder, \overset{i}{\ldots}, \trueder, \falseder, \dots}$ is represented by the  $\nwbox$	 $\nwpromotion=\derstream{\der_0,\ldots, \der_n}$ as in \Cref{eq:box} with $A \dfn \oc  \Bool$ and 
	  $	\der_i\dfn \derstream{\trueder, \overset{i}{\ldots}, \trueder, \falseder}$, whose calls are (cut-free and) pairwise~distinct.
\end{remark}

In~\Cref{subsec:reg-weak-reg} we will introduce conditions on coderivations of $\nwpll$ enforcing finite support of the stream encoded by a \nwbox by requiring that it has \emph{finitely} many distinct calls.

\begin{figure*}[t]
	\[
			\vlderivation{
				\vliin{\cutr}{}{\wn \Gamma, \wn \Delta, \oc B}{
					\vliin{\cprule}{}{\wn \Gamma, \oc A}{
						\vlhy{\Gamma, A}
					}{
						\vlhy{\wn \Gamma, \oc A}
					}
				}{
					\vliin{\cprule}{}{\wn \cneg A\!, \wn \Delta, \oc B}{
						\vlhy{\cneg{A}\!, \Delta, B}
					}{
						\vlhy{\wn \cneg{A}\!, \wn \Delta, \oc B}
					}
				}
			}
		\quad 
			\cutelim
			\quad 
			\vlderivation{
				\vliin{\cprule}{}{\wn \Gamma, \wn \Delta, \oc B}{
					\vliin{\cutr}{}{\Gamma, \Delta, B}{
						\vlhy{\Gamma, A}
					}{
						\vlhy{\cneg A, \Delta, B}
					}
				}{
					\vliin{\cutr}{}{\wn\Gamma, \wn\Delta,\oc B}{
						\vlhy{\wn\Gamma, \oc A}
					}{
						\vlhy{\wn \cneg{A}\!, \wn \Delta, \oc B}
					}
				}
			}
			%			\\\hline&\\[-10pt]
			%			\cprule\vs\wnwrule
			\]
			\medskip
			\[
			\vlderivation{
				\vliin{\cutr}{}{\wn \Gamma, \Delta}{
					\vliin{\cprule}{}{\wn \Gamma, \oc A}{
						\vlhy{\Gamma, A}
					}{
						\vlhy{\wn \Gamma, \oc A}
					}
				}{
					\vlin{\wnwrule}{}{\Delta,\wn \cneg A}{\vlhy{\Delta}}
				}
			}
		\quad 
			\cutelim
			\quad 
			\vlderivation{
				\vliq{\size{\Gamma}\times\wnwrule}{}{\wn\Gamma,\Delta}{\vlhy{\Delta}}
			}
			%	
			%			\\\hline&\\[-10pt]
			%			\cprule\vs\wnbrule
			\]
			\medskip
			\[
			\vlderivation{
				\vliin{\cutr}{}{\wn \Gamma, \Delta}{
					\vliin{\cprule}{}{\wn \Gamma, \oc A}{
						\vlhy{\Gamma, A}
					}{
						\vlhy{\wn \Gamma, \oc A}
					}
				}{
					\vlin{\wnbrule}{}{\Delta, \wn \cneg A}{
						\vlhy{\Delta,\cneg{A}\!, \wn\cneg A}
					}
				}
			}
		\quad 
			\cutelim
			\quad 
			\vldownsmash{\vlderivation{
					\vliq{\size{\Gamma}\times\wnbrule}{}{\wn\Gamma, \Delta}{
						\vliin{\cutr}{}{\Gamma, \wn \Gamma,\Delta }{
							\vlhy{\wn\Gamma, \oc A}
						}{
							\vliin{\cutr}{}{\Gamma,  \Delta, \wn \cneg A}{
								\vlhy{ \Gamma, A}
							}{
								\vlhy{\Delta, \cneg{A}\!, \wn \cneg A}
							}
						}
					}
			}}
\]
	\caption{Exponential cut elimination steps {in $\nwpll$ (with $\cprule$)}.}
	\label{fig:cut-elim-pll}
\end{figure*}

\subsection{Totality via  the progressivity criterion}

The non-wellfounded proof system $\nwpll$ is logically inconsistent, as the  coderivation $\zeroder$ in {\Cref{eq:box}} shows that any non-empty sequent is provable in $\nwpll$.
 %%%%%%%%%%%%%%%%%%%%%%%%%%%%%%%%%%
	In particular, this coderivation is not $\cutr$-free and can  only reduce to itself by a cut elimination step, so that  cut elimination  fails for $\nwpll$. 	
	From a computational viewpoint, this means that  $\nwpll$ can represent non-total functions.

In non-wellfounded proof theory, the typical way to recover  logical consistency and (computationally) \emph{totality} of representable functions, is to introduce a global soundness condition on coderivations called \emph{\prog criterion} \cite{brotherston2011sequent,Kuperberg-Pous21,Das2021,Das2021-preprint}.
In $\nwpll$, this criterion relies on tracking occurrences of \mbox{$\oc$-formulas in coderivations} \cite{CSL}.

\begin{definition}\label{defn:octhread}
	Let $\dD$ be a coderivation in $\nwpll$.
	An occurrence of a formula in a premise of a rule $\rrule$ is the \defin{parent} of an occurrence of a formula in the conclusion if they are connected according to the edges depicted in \Cref{fig:threads}.
	A \defin{\octhread} in $\der$ is a maximal sequence $(A_i)_{i \in I}$ of $\oc$-formulas for some downward-closed $I \subseteq \Nset$ such that $A_{i+1}$ is the parent of $A_i$ for all $i \in I$.
	A \octhread $(A_i)_{i \in I}$ is \defin{\prog} if $A_j$ is in the conclusion of a $\cprule$ for infinitely many $j\in I$.
	$\der$ is \defin{\prog} if every infinite branch contains a \prog \octhread. 
\end{definition}

Note that every derivation in $\nwpll$ is (vacuously) progressing.

\begin{figure*}[!t]
\[
			\vliinf{\cutr}{}{
				\vF1_1, \ldots,\vF2_n,\vG1_1,\ldots,\vG2_m
			}{
				\vF3_1,\dots\vF4_n, A
			}{\cneg{A},\vG3_1,\ldots,\vG4_m}
			\Tedges{F1.center/F3.center,F2.center/F4.center,G1.center/G3.center,G2.center/G4.center}
			\quad 
			\vlinf{\lpar}{}{
				\vF1_1, \ldots,\vF2_n,\vA2\lpar \vB2
			}{
				\vF3_1,\dots\vF4_n, \vA1\;,\;\vB1
			}
			\Tedges{F1.center/F3.center,F2.center/F4.center,A1.center/A2.center,B1.center/B2.center}
			\quad 
			\vliinf{\ltens}{}{
				\vF1_1, \ldots,\vF2_n,\vA2\ltens \vB2,\vG1_1,\ldots,\vG2_m
			}{
				\vF3_1,\dots\vF4_n, \vA1
			}{\vB1,\vG3_1,\ldots,\vG4_m}
			\Tedges{F1.center/F3.center,F2.center/F4.center,G1.center/G3.center,G2.center/G4.center,A1.center/A2.center,B1.center/B2.center}
			\]
			\medskip
\[
			\vlinf{\lbot}{}{\vF1_1, \ldots,\vF2_n, \lbot}{ {\vF3_1,\dots,\vF4_n \quad}}
			\Tedges{F1.center/F3.center,F2.center/F4.center}
\qquad 
			\vliinf{\cprule}{}{
				\wn \vF1_1, \ldots,\wn \vF2_n,\oc \vA1
			}{F_1,\ldots, F_n, A\quad}{
				\wn \vF3_1, \ldots,\wn\vF4_n,\oc \vA2
			}
			\Tedges{F1.center/F3.center,F2.center/F4.center,A1.center/A2.center}
			\quad 
			\vlinf{\wnwrule}{}{\vF1_1, \ldots,\vF2_n, \wn A}{ {\vF3_1,\dots,\vF4_n \quad}}
			\Tedges{F1.center/F3.center,F2.center/F4.center}
		\]
		\medskip
		\[
			\vlinf{\wnbrule}{}{\vF1_1, \ldots,\vF2_n, \wn \vA2}{ {\vF3_1,\dots,\vF4_n,A, \wn \vA1 }}
			\Tedges{F1.center/F3.center,F2.center/F4.center,A1.center/A2.center}
			\qquad
				\vlinf{\forall}{X \not \in FV(\Gamma)}{
				\vF1_1, \ldots,\vF2_n,\forall X. \vA2
			}{
				\vF3_1,\dots\vF4_n, \vA1
			}
			\Tedges{F1.center/F3.center,F2.center/F4.center,A1.center/A2.center}
			\qquad 
				\vlinf{\exists}{}{
				\vF1_1, \ldots,\vF2_n,\exists X.\vA2
			}{
				\vF3_1,\dots\vF4_n, \vA1[B/X]
			}
			\Tedges{F1.center/F3.center,F2.center/F4.center,A1.center/A2.center,B1.center/B2.center}			
\]
	\caption{$\nwpll$ rules: edges connect a formula in the conclusion with its parent(s) in a premise.}
	\label{fig:threads}
\end{figure*}

	%\mcor{Clearly, a}{A}ny \prog coderivation is weakly \prog too, but the converse fails (\Cref{ex:prog}), therefore $\ppll \subsetneq \wppll$.

\newemptyvertex{ghost}{}
\begin{example}\label{ex:prog}
	The coderivations  $\zeroder$ and  $\wnder$ in {\Cref{eq:box}} are not %weakly \prog (hence, not \prog):
	\prog:
	the rightmost branch of $\zeroder$, i.e., the branch  $\set{\epsilon,2,22,\ldots}$,
	and the unique branch of $\wnder$
	are infinite and contain no $\cprule$-rules.
	By contrast, the \nwbox $\derstream{\cod{i_0},\ldots,\cod{i_n}}$ discussed in  \Cref{ex:nonReg} is \prog  since the only infinite branch is its main branch,  which contains a $\oc$-thread of formulas $\oc A$, each one principal for a $\cprule$ rule. Finally, the regular coderivation below is not \prog:
	the branch $\set{\epsilon, 2, 21, 212, 2121, \dots}$ is infinite but has no progressing \octhread (where $X_1, X_2, X_3$ are distinct occurrences of the propositional variable $X$).
	
	\[
			\vlderivation{
				\vliin{\cprule}{}{\wn\vnX1 \;, \;\;\;\oc{\vX1 _1}}{
					\vlin{\axr}{}{X, \ \cneg{X}}{\vlhy{}}}{
					\vliin{\cutr}{}{\wn \vnX2 \;\;,  \;\;\;\; \oc\vX2 _1}{
						\vliin{\cprule}{}{\wn\vnX3, \ {\oc\vX3 _2}}{
							\vlin{\axr}{}{X, \ \cneg{X}}{\vlhy{}}}{
							\vliin{\cutr}{}{\wn \vnX4, \ \oc\vX4_2}{
								\vlin{\cprule}{}{\wn\vnX5, {\oc \vX5 _3}}{
									\vlhy{  
										\vghost2{~}\;\; \vdots \quad\vghost1{~}
									}
								}
							}{
								\vlin{\axr}{}{{\wn \cneg X},\oc \vX7 _2}{\vlhy{}}
							}
						}
					}{
						\vlin{\axr}{}{{\wn \cneg X},\oc \vX6 _1}{\vlhy{}}
					}
				}
			}
			\Tedges{
				X1.225/X1.135,
				X1.135/X2.225,
				X2.225/X2.135,
				X2.135/X6.225,
				X6.225/X6.135,
				X3.225/X3.135,
				X3.135/X4.225,
				X4.225/X4.135,
				X4.135/X7.225,
				X7.225/X7.135,
				X5.225/ghost1.south%
			}
			\dTedges{ghost1.south/ghost1.north}
		\]
\end{example}

\begin{remark}\label{rem:uniqueness}
	Any infinite branch in a \prog coderivation $\der \in \nwpll$ contains exactly one \prog $\oc$-thread.
	This follows from maximality of $\oc$-threads and the fact that conclusions of $\cprule$-rules contain at most one $\oc$-formula. 
	As a consequence, 	any infinite $\oc$-thread  in a branch of  $\der$ must be \prog. 
\end{remark}

In~\cite{CSL}, we proved a cut elimination result for the \emph{propositional} progressing coderivations of {$\nwpllprop$} (i.e., without second-order),   called \emph{continuous cut elimination theorem}. Its proof relies on defining particular infinitary rewriting strategies, showing that the infinite branches of the limit cut-free coderivation constructed are well-defined and contain progressing $\oc$-threads. %\giulioinline{Non mi sembra chiarissima come spiegazione.}. 
The proof smoothly extends to the whole $\nwpll$ thanks to ($\oc,\wn$)-freeness of  the formulas instantiated by the rule $\exists$. Indeed, by virtue of that condition, a $\oc$-thread of $\nwpll$  never starts at the active formula of $\exists$: as in the propositional case, it can only start at a formula in the conclusion of the coderivation or at a cut-formula. Therefore,  our restricted second-order quantifiers, and the corresponding cut elimination step $\cutstep{\exists}{\forall}$, do not change the \emph{geometry} of  coderivations. As a consequence, the $\cutr$ admissibility result below holds.

\begin{theorem}
	[Cut elimination for progressing $\nwpll$]
For every progressing coderivation of $\nwpll$ there is a  cut-free progressing coderivation with the same conclusion.
\end{theorem}

\subsection{Approximating coderivations}\label{subsec:cut-elim-approx}

Rewriting a coderivation to a cut-free one may require infinitely many steps of  cut elimination. In this subsection we  introduce a notion of approximation for coderivations and show that for \emph{finite} approximations  there is a bound on the number of cut elimination steps.

\begin{definition}
	We define the set of rules $\opll \dfn \nwpll \cup \set{\zero}$, where $\zero\dfn \vlupsmash{\vlinf{\zero}{}{\Gamma}{}}$  for any sequent $\Gamma$.
	We will also refer to $\opll$  as the set of coderivations over $\opll$, which we call \defin{open coderivations}. 
	An  \defin{open derivation} is a derivation in $\opll$.  	Previously introduced notions and definitions on coderivations extend to open coderivations in the obvious way,  e.g., the global condition in~\Cref{defn:octhread}  as well as cut elimination $\cutelim$.
\end{definition}
		
Note that there are open coderivations containing $\cutr$ rules that cannot be further reduced by the cut elimination steps in~\Cref{fig:cut-elim-finitary} and~\Cref{fig:cut-elim-pll}, since no  cut elimination step is defined for $\zero$.  Henceforth, we will call such open coderivations \defin{normal}.

\newcommand{\weightcp}[1]{\mathsf{C}(#1)}
		
\begin{definition}
	Let  
	$\der$ be an open coderivation and 
	$\Nodes=\{v_1, \ldots, v_n\} \allowbreak\subseteq \set{1,2}^*$ be a finite set of mutually incomparable  nodes of $\der$ (w.r.t. the prefix  order). If $\set{\der'_{i}}_{ 1 \leq i \leq n}$ is a set of open coderivations such that $\der'_{i}$ has the same conclusion  as the sub-coderivation $\der_{v_i}$ of $\der$,  denote by 
	$
	\der(\der'_{1}/v_1, \ldots, \der'_{n}/v_n)$, 
	the open coderivation obtained by replacing each $\der_{v_i}$ with 
	$\der'_i$ in $\der$.  The \defin{pruning} of $\der$ over $\Nodes$
	is the open coderivation {$\prun{\der}{\Nodes}=\der(\zero/v_1, \ldots, \zero/v_n)$}.

	If $\der$ and $\der'$ are open coderivations, 
	$\der$ is an \defin{approximation} of $\der'$
	(noted $\der \preceq \der'$) 
	iff $\der= \prun{\der'}{\Nodes}$ for some $\Nodes\subseteq\set{1,2}^*$.
	An approximation is \defin{finite} if it is an open derivation.
\end{definition}
		
		%		
		%		Note that $\der$ and $\prun{\der}{\Nodes}$ (and hence $\der'$ if $\der \preceq \der'$) have the same conclusion.
		%		
		%		\begin{prop}
			%			\label{prop:scott-domain}
			%			For any sequent $\Gamma$,  
			%			the poset $\left(\opll(\Gamma), \preceq \right)$
			%			%	\footnote{With little abuse of notation, we identify $\preceq$ and its restriction to $\opll(\Gamma)$.\giulio{This footnote is useless. By definition $\der \preceq \der'$ implies that $\der$ and $\der'$ have the same conclusion.}\matteo{concordo}} 
			%			is a  Scott-domain
			%			with least element
			%			the open derivation $\zero$
			%			and
			%			with maximal elements the  %coderivations in $\nwpll(\Gamma)$. \giuliomargin{$\nwpll(\Gamma)$ doesn't exist as a notation.}
			%			coderivations (in $\nwpll$) with conclusion $\Gamma$.
			%			The compact elements are precisely the open derivations in $\opll(\Gamma)$. 
			%		\end{prop}

Cut elimination steps do  not increase the size of open derivations:
		
\begin{proposition}[Cubic bound]
	\label{thm:cut-elimApp} 
	Let $\der$ be an open derivation and let $\mathsf{S}(\der)$ be the maximum number of $\wn$-formulas in the conclusion of a $\cprule$ rule of $\der$. 
	If 
	$
	\der= \der_0 \cutelim \cdots \cutelim\der_n
	$
	%	for a $n\in\mathcal{O}(\vert \der \vert^3)$.
	then:
	\begin{enumerate}
		\item \label{enum:cubic1} $n\in\mathcal{O}(\mathsf{S}(\der)^3\cdot\vert \der \vert^3)$
		\item  \label{enum:cubic2} $\size{\der_i}\in \mathcal{O}(\mathsf{S}(\der)\cdot\vert \der \vert)$ 	for any $i\in\intset0n$.
	\end{enumerate}
\end{proposition}
		%	\begin{proof}[Proof sketch]
			%		The only cut-elimination step that may  increase the number of rules of a derivation $\der$ is $\cutstep{\cprule}{\wnbrule}$, which introduces at most $\mathsf{S}(\der)$ new $\wnbrule$ rules. So, any cut-elimination sequence starting from $\der$ has   at most $\mathcal{O}(\mathsf{S}(\der)\cdot\vert \der \vert)$ principal  steps, alternated by  sequences of commutative steps (each one having at most a quadratic number of steps in the size of the derivation).  
			%	\end{proof}
\begin{proof}
	For $\der$ an open derivation, let $\weightcp{\der}$ be the number of $\cprule$ in $\der$ and $\heightcut{\der}$ be the sum of the sizes of all subderivations
	of $\der$ whose root is the conclusion of a $\cutr$ rule.
	If~$\der \cutelim \der'$~via:
	\begin{itemize}
		\item a commutative cut elimination step (\Cref{fig:cut-elim-finitary}), then $\weightcp{\der} =  \weightcp{\der'}$, $\size{\der} = \size{\der'}$ and $\heightcut{\der} > \heightcut{\der'}$;
		\item a multiplicative or second-order cut elimination step (\Cref{fig:cut-elim-finitary}),  then $\weightcp{\der} = \weightcp{\der'}$ and $\size{\der} > \size{\der'}$;
		\item an exponential cut elimination step (\Cref{fig:cut-elim-pll}),  then 
		and  $\weightcp{\der} > \weightcp{\der'}$.
	\end{itemize}
	Since the lexicographic order over  the tuple
	$\seq{\weightcp{\der},\size{\der},\heightcut{\der}}$
	is wellfounded,  we conclude that there is no infinite sequence $\seq{\der_i}_{i \in \Nset}$ such that $\der_0 = \der$ and $\der_{i} \cutelim \der_{i+1}$.
	
	Now, let $\der= \der_0 \cutelim \cdots \cutelim\der_n$. 
	First, we show that  the number $n_{p}$ of its principal cut elimination steps  is bounded by $\weight{\der}\dfn   \mathsf{S}(\der)\cdot \weightcp{\der}+ \mathsf{M}(\der)$, 
	where  $\mathsf{M}(\der)$ is the number of inference rules different from $\cprule$ in $\der$. 
	This boils down to showing that $\der' \cutelim \der''$ implies $ \weight{\der''}<\weight{\der'}$.  Indeed:
	\begin{itemize}
		\item every cut elimination step cannot increase  $\mathsf{S}(\der)$
		\item every multiplicative  cut elimination step  decreases $\mathsf{M}(\der)$  and cannot increase $\weightcp{\der}$
		\item the exponential  steps $\cutstep{\cprule}{\wnwrule}$ and $\cutstep{\cprule}{\cprule}$  decrease $\weightcp{\der}$ and cannot increase $\mathsf{M}(\der)$ 
		\item if $\der' \cutelim \der''$ is obtained by applying  a $\cutstep{\cprule}{\wnbrule}$ step then 
		\[
		\def\arraystretch{1.2}
		\arraycolsep=2pt
		\begin{array}{rcl}
			\weight{\der''} &\dfn& \mathsf{S}(\der'') \cdot \weightcp{\der''}+ \mathsf{M}(\der'') \\
			&\leq &\mathsf{S}(\der') \cdot \weightcp{\der''}+ (\mathsf{M}(\der')-1 + \mathsf{S}(\der'))\\
			&=& \mathsf{S}(\der') \cdot (\weightcp{\der'}-1)+ \mathsf{M}(\der')-1 + \mathsf{S}(\der')\\
			&=& \mathsf{S}(\der') \cdot \weightcp{\der'}+ \mathsf{M}(\der')-1 < \weight{\der'} \\
		\end{array}
		\]
	\end{itemize}
	
	At the same time, 
	the number 
	$n^i_c$ of commutative steps performed after the $i$-th principal 
	is bounded by the square of the maximum size of the proof during  rewriting, which can be bounded by  $ \weight{\der}$.
	Hence, we have:
	$$
	\def\arraystretch{1.2}
	\arraycolsep=2pt
	\begin{array}{rcl}
		n  
		&= & n_{p}+\sum_{i=1}^{n_{p}}n^i_c 
		\leq 
		n_{p}+n_{p}\max_i \set{n^i_c }
		\\
		&\leq & 
		n_{p}\left(\max_i \set{n^i_c}+1\right)
		\leq 
		\weight{\der}\cdot (\weight{\der}^2+1) \\
		&\leq&
		2 \weight{\der}^3  
	\end{array}
	$$
	We conclude as $ \weight{\der}\in \mathcal{O}(\mathsf{S}(\der)\cdot\vert \der \vert)$ and $\size{\der_i}\leq \weight{\der_i}\leq \weight{\der}$.
\end{proof}

\begin{corollary}\label{lem:normalisation-confluence}
	$\cutelim$ over open derivations is strongly normalizing and confluent.
\end{corollary}
\begin{proof}
	Strong normalisation is a consequence of~\Cref{thm:cut-elimApp}.		Moreover, since cut elimination $\cutelim$ is strongly normalizing over open derivations 
	and it is locally confluent by  inspection of critical pairs, by Newman's lemma it is also confluent. 
\end{proof}

\subsection{The proof systems $\nupll$ and $\cpll$}\label{subsec:reg-weak-reg}
Starting from~\Cref{eqn:non-finite-support}, in~\cite{acclavio2023infinitary,CSL} we introduced two  proof systems, $\mathsf{wrPLL}^\infty$ and $\mathsf{rPLL}^\infty$ representing the non-wellfounded proof-theoretic counterparts of $\mathsf{PLL}$ and $\mathsf{nuPLL}$ respectively. Specifically, $\mathsf{wrPLL}^\infty$ and $\mathsf{rPLL}^\infty$ are obtained from $\nwpllprop$ by identifying additional global conditions called \emph{weak regularity} and \emph{finite expandability}. Roughly, weak regularity corresponds to a relaxation of the regularity property (see~\Cref{subsec:coderivations}) that allows us to discard those \nwboxes  with infinitely many distinct calls, so that only streams with finite support can be encoded. On the other hand, finite expandability discards those infinite branches whose sequents have an unbounded  number of $\oc$- and $\wn$-formulas. Indeed, $\cutr$ and $\wnbrule$ are the only rules that can increase that number (recall that  $\exists$ can only instantiate $(\wn, \oc)$-free formulas).

In this subsection, we define the second-order versions of $\mathsf{wrPLL}^\infty$ and $\mathsf{rPLL}^\infty$.

\begin{definition}\label{def:wR}[Proof systems $\nupll$ and $\cpll$]
	%[Weak regularity and \FElity]
	\label{defn:weak-regulairity}   	A coderivation is 
	\defin{weakly regular} if it has only finitely many distinct sub-coderivations 
	whose conclusions are left premises of $\cprule$-rules;
	it is
	\defin{\FE} if any branch contains finitely many $\cutr$ and $\wnbrule$ rules.
\defin{Weakly regular second-order parsimonious logic},  noted $\nupll$, is the set of progressing, \FE, and weakly regular coderivations of $\nwpll$. 
\defin{Regular second-order parsimonious logic},  noted $\cpll$, is the set of progressing, \FE, and  regular coderivations of $\nwpll$. 
\end{definition}

\begin{remark}\label{prop:weak-regular-finite-support}
	Regularity implies weak regularity and the converse fails (see 
	\Cref{ex:weak-regular} below), so $\cpll \subsetneq \nupll$.	
	A progressing and \FE $\der \in \nwpll$  is  regular (resp. weakly regular)
	if and only if any $\nwbox$ in $\der$ is periodic (resp.~has finite support).
\end{remark}

\begin{example}\label{ex:weak-regular}
	$\zeroder$ and $\wnder$ in {\Cref{eq:box}}  are weakly regular (they have no $\cprule$ rules) but not \FE (their only infinite branch has infinitely many $\cutr$ or  $\wnbrule$).
	The coderivation in \Cref{eqn:non-finite-support} is not weakly regular (it has infinitely many distinct~calls).
	
	An example of a weakly regular but not regular coderivation is  the  $\nwbox$ 
	$\derstream{\cod{i_0},\ldots,\cod{i_n}}$ in \Cref{ex:nonReg} when the infinite sequence 
	$(i_j)_{j \in \Nset} \in \set{\false,\true}^\omega$ is not periodic (see~\Cref{prop:weak-regular-finite-support}).
\end{example}

By inspecting   \Cref{fig:cut-elim-finitary,fig:cut-elim-pll} for $\nwpll$, we prove the following. 

\begin{restatable}{proposition}{preserves}\label{prop:cut-elim-preserves-finexp-reg-weakreg}
	Cut elimination preserves progressivity, weak-regularity, regularity and \FElity.
	Therefore, if $\dD\in\sysX$ with $\sysX\in\set{\cpll,\nupll}$ and
	$\der \cutelim \der'$, then also $\der' \in \sysX$.
\end{restatable}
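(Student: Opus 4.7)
The plan is to proceed by case analysis on the single step $\der \cutelim \der'$, with cases given by the rewrite rules in Figures~\ref{fig:cut-elim-finitary} and~\ref{fig:cut-elim-pll}. Each of the three properties---finite expandability, weak regularity, regularity---is examined in turn; preservation of the progressing criterion under $\cutelim$ is a standard thread-tracking argument assumed in the background (established in~\cite{CSL} for the propositional fragment and extending smoothly to the second-order rules, which do not alter the geometry of \octhreads).

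First I would handle finite expandability. The key observation is that each cut-elimination step acts locally: it rewrites a bounded region of $\der$ (the eliminated cut together with the topmost rules of its two premises) into another bounded region of $\der'$. Hence any infinite branch $\branch'$ of $\der'$ is obtained from an infinite branch $\branch$ of $\der$ by a bounded substitution in that region, and the number of occurrences of $\cutr$ and $\wnbrule$ along the branch changes by at most a constant depending only on the step (e.g., the $\cutstep{\cprule}{\wnbrule}$ step of Figure~\ref{fig:cut-elim-pll} adds at most $\size{\Gamma}$ $\wnbrule$-rules and two fresh cuts). Therefore, if $\branch$ contains finitely many such rules, so does $\branch'$.

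Next I would turn to regularity and weak regularity. Let $v$ be the node where the cut is eliminated and let $d$ be its depth in $\der$. The sub-coderivations of $\der'$ that differ from the corresponding ones in $\der$ are exactly those rooted at $v$ or at one of the $d$ proper ancestors of $v$; hence the set of distinct sub-coderivations of $\der'$ grows by at most $d+1$ elements and remains finite whenever the original set was. The same counting, restricted to sub-coderivations whose conclusion is a left premise of a $\cprule$-rule, handles weak regularity: inspecting the rewrite rules shows that a single step introduces at most one fresh $\cprule$-rule (only in the $\cutstep{\cprule}{\cprule}$ and $\cutstep{\cprule}{\wnbrule}$ cases), contributing at most one new element to the relevant set.

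The hard part will be the $\cutstep{\cprule}{\wnbrule}$ case, where the \nwbox is ``unfolded'' once and the original $\cprule$-rule reappears as the right-premise side of an inner cut. One must check that this recurring instance does not count as a fresh sub-coderivation: it is literally the same $\nwpromotion$ already present in $\der$, so its left premises were already enumerated in the weak regularity count, and its calls were already present for the finite-support bookkeeping. Combining the three preservation results with preservation of the progressing criterion yields $\der' \in \sysX$ whenever $\der \in \sysX$, for $\sysX \in \{\cpll,\nupll\}$.
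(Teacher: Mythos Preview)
Your case-by-case inspection of the cut-elimination steps is exactly the paper's approach; indeed the paper's entire proof is the single phrase ``by inspecting Figures~\ref{fig:cut-elim-finitary} and~\ref{fig:cut-elim-pll}''. Your write-up has a couple of minor miscounts (the $\cutstep{\cprule}{\wnbrule}$ step of Figure~\ref{fig:cut-elim-pll} does not actually create a fresh $\cprule$-rule, and for regularity you must also count the new nodes inside the rewritten region, not only the ancestors of $v$), but these do not affect the soundness of the argument.
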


	Akin to linear logic, we can recover a notion of \emph{depth} for coderivations, defined as the maximal number of ``nested" $\nwbox$s.

\begin{definition}[Nesting and depth]\label{def:depth}
	Let {$\der \in \nwpll$}.
	The \defin{nesting level of a rule} $\rrule$ in $\der$ is the number of nodes below it that are roots of a call of a $\nwbox$. The  \defin{nesting level of a $\nwbox$} is the nesting level of its bottommost rule. Finally, the  \defin{nesting level of a formula occurrence} in $\der$ is the nesting level of the rule whose conclusion contains it. 
	
	We say that a  rule (resp., formula occurrence, $\nwbox$) is \defin{shallow} if it has nesting level $0$. 
	The \defin{depth of $\der$}, written $\depth \der$, is the supremum of the nesting level of its rules.
\end{definition}

Notice that, if  $\nwpromotion$  is a $\nwbox$ with nesting level $n$, then the nesting level of its calls are $n+1$, and the nesting level of the formula occurrences in the main branch of $\nwpromotion$ is $n$. 
Note also that, although the depth of a coderivation can be infinite in general, weakly regular coderivations always have finite depth.

\begin{proposition}[\cite{CSL}]\label{lem:finite-depth}\label{lem:depth}
	If $\der$ is weakly regular then $\depth{\der} \in \Nset$. Moreover,  $\der \cutelim \der'$ implies $\depth{\der} \geq \depth{\der'}$.
\end{proposition}

\subsection{NL-decidability of $\cpll$}\label{app:properties}

We call a coderivation $\dD$  in $\nwpll$  \defin{weakly progressing} if every infinite branch contains infinitely many right premises of $\cprule$-rules. As already shown in~\cite{CSL} for the propositional setting, \prog and weak \prog coincide for finitely expandable coderivations.
\begin{restatable}{lemma}{infinitebrances}
	\label{lem:infinite-branches}\label{cor:simple-structure}
	Let $\der \in \nwpll$ be  \FE.  If $\der$ is weakly progressing then any infinite branch contains the main branch of a $\nwbox$.	
	Moreover, $\der \in \nwpll$  is progressing  if and only if it is weakly progressing.
\end{restatable}
\begin{proof}
	Clearly, a progressing coderivation is also weakly progressing. Now, 
	let 	 $\der\in \nwpll$ be \FE and weakly progressing, and let $\mathcal{B}$ be an infinite branch in $\der$. By \FElity there is  $h\in\Nset$ such that 
	$\mathcal{B}$ contains no conclusion of a $\cutr$ or $\wnbrule$ with height greater than $h$. 
	Moreover, by weak progressing condition  
	there is an infinite sequence $h\leq h_0 < h_1 <  \ldots <h_n <\ldots$ such that the sequent of $\mathcal{B}$ at height $h_i$ has shape $\wn \Gamma_i, \oc A_i$. 	By  inspecting the rules in~\Cref{fig:sequent-system-pll},  each such $\wn \Gamma_i, \oc A_i$ can  be the conclusion of either a $\wnwrule$ or a $\cprule$ (with right premise $\wn \Gamma_i, \oc A_i$). So, there is a $k$ large enough such that, for any $i \geq k$,  only the latter case applies (and, in particular, $\Gamma_i=\Gamma$ and $A_i=A$ for some $\Gamma, A$). Therefore, $h_k$ is the root of a  \nwbox. This also shows that $\der$ is progressing.
\end{proof}

Moreover, by an argument similar to~\cite[Corollary 32]{CurziDas} we have

\begin{restatable}{proposition}{NL}
	It is $\nl$-decidable if a regular coderivation is in $\cpll$.
\end{restatable}
\begin{proof}
	A regular coderivation is represented by a finite cyclic graph. By~\Cref{cor:simple-structure} checking progressivity comes down to checking that no branch has infinitely many occurrences of a particular rule, which in turn   reduces to checking acyclicity for this graph (see \cite{CurziDas}).
	%	We conclude since checking acyclicity a well-known $\conl(=\nl)$ problem.
	We conclude since checking acyclicity is a well-known $\conl$ problem, and $\conl=\nl$\cite{arora_barak_2009}.
\end{proof}

Of course a similar decidability result cannot hold for $\nupll$,  this proof system containing continuously many coderivations, as hinted by the $\nwprule$ depicted in~\Cref{ex:nonReg}.

\begin{figure*}[t]
	\adjustbox{max width=\textwidth}{$\begin{array}{ccc}
			%		\fproj{
				%			\vlderivation{
					%				\vlin{\rrule}{}
					%				{\Gamma}
					%				{\vldr{\der}{\Gamma'}}
					%			}
				%		}
			%		=
			%		\vlderivation{
				%			\vlin{\rrule}{}
				%			{\Gamma}
				%			{\vldr{\fprojp{\der}}{\Gamma'}}
				%		}
			%		%
			%		&
			%		%
			%		\fproj{
				%			\vlderivation{
					%				\vliin{\trule}{}{\Gamma}{
						%					\vldr{\der_1}{\Gamma_1}
						%				}{
						%					\vldr{\der_2}{\Gamma_2}
						%				}
					%			}
				%		}
			%		=
			%		\vlderivation{
				%			\vliin{\trule}{}{\Gamma}{
					%				\vldr{\fprojp{\der_1}}{\Gamma_1}
					%			}{
					%				\vldr{\fprojp{\der_2}}{\Gamma_2}
					%			}
				%		}
			%		%
			%		&
			%
			\fproj{
				\vlderivation{
					\vlin{\fprule}{}
					{\wn \Gamma, \oc A}
					{\vldr{\der}{\Gamma, A}}
				}
			}
			=
			\vlderivation{
				\vliin{\cprule}{}
				{\wn \Gamma, \oc A}
				{\vldr{\fprojp{\der}}{ \Gamma, A}}
				{
					\vliin{\cprule}
					{}
					{\wn \Gamma, \oc A}
					{\vldr{\fprojp{\der}}{ \Gamma, A}}
					{\vlin{\cprule}{}{\wn \Gamma, \oc A}{\vlhy{\vdots}}}
				}
			}
			&
			\qquad
			&
			\cproj{
				\vlderivation{
					\vliiiin{\nuprule}{}{\wn \Gamma, \oc A}{
						\vldr{\der_{0}}{\Gamma,A}
					}{
						%						\vldr{\der_{1}}{\Gamma,A}
						%					}{
						\vlhy{\cdots}
					}{
						\vldr{\der_{n}}{\Gamma,A}}{\vlhy{\cdots}
					}
				}
			}
			=
			\vlderivation{
				\vliin{\cprule}{}{\wn \Gamma, \oc A}{
					\vldr{\der_0^\bullet}{\Gamma, A}
				}{
					\vliin{\cprule}{}{
						\reflectbox{$\ddots$}
					}{
						\vldr{\der_n^\bullet}{\Gamma, A}
					}{
						\vlin{\cprule}{}
						{{\wn \Gamma, \oc A}}
						{\vlhy{\vdots}}
					}
				}
			}
		\end{array}$}
	\vspace{-10pt}
	\caption{
		%		The 
		Translations  $\fprojp{(\cdot)}\colon\pll\to\cpll$, 
		and
		$\cprojp{(\cdot)}\colon\dpll\to\nupll$.
	}
\label{fig:translations-pll}
\end{figure*}

\begin{figure}
    \centering
    \[
        \tiny
    \begin{array}{ccc}
      \vlderivation{
		\vliin{\cutr}{}{\wn \Gamma, \wn \Delta, \oc B}{
			\vlin{\fprule}{}{\wn \Gamma, \oc A}{
				\vldr{\der}{\Gamma, A}
			}
		}{
			\vlin{\fprule}{}{\wn \cneg {A}\!, \wn \Delta, \oc B}{
				\vldr{\der'}{\cneg {A}\!, \Delta, B}
			}
		}
	}
&\overset{\fprojp{{(\_)}}}{\longrightarrow}&
   \vlderivation{
   \vliin{\cutr}{}{\wn \Gamma, \wn \Delta, \oc B}
   {
				\vliin{\cprule}{}
				{\wn \Gamma, \oc A}
				{\vldr{\fprojp{\der}}{ \Gamma, A}}
				{
					\vliin{\cprule}
					{}
					{\wn \Gamma, \oc A}
					{\vldr{\fprojp{\der}}{ \Gamma, A}}
					{\vlin{\cprule}{}{\wn \Gamma, \oc A}{\vlhy{\vdots}}}
				}
                }
                {
				\vliin{\cprule}{}
				{\wn \cneg A, \wn \Delta, \oc B}
				{\vldr{\fprojp{\der'}}{\cneg A, \Delta, B}}
				{
					\vliin{\cprule}
					{}
					{\wn \cneg A, \wn \Delta , \oc B}
					{\vldr{\fprojp{\der'}}{\cneg A,  \Delta, B}}
					{\vlin{\cprule}{}{\wn \cneg A,\wn \Delta, \oc B}{\vlhy{\vdots}}}
				}
                }
			}  
 \\  \\ \rotatebox[origin=c]{270}{$\cutelim$} &&  \rotatebox[origin=c]{270}{$\cutelim^{\omega}$} \\ \\
 \vlderivation{
		\vlin{\fprule}{}{\wn \Gamma, \wn \Delta, \oc B}{
			\vliin{\cutr}{}{\Gamma, \Delta, B}{
				\vldr{\der}{\Gamma, A}
			}{
				\vldr{\der'}{\cneg {A}\!, \Delta, B}
			}
		}
	}  

    &\overset{\fprojp{{(\_)}}}{\longrightarrow}&
            \vlderivation{
            	\vliin{\cprule}{}
				{\wn \cneg{A},\wn \Gamma, \oc B}
				{\vliin{\cutr}{}{\Gamma, \Delta, B}{
				\vldr{\fprojp{\der}}{\Gamma, A}
			}{
				\vldr{\fprojp{\der'}}{\cneg {A}\!, \Delta, B}
			}}
				{
					\vliin{\cprule}
					{}
					{\wn \cneg{A},\wn \Gamma, \oc B}
					{\vliin{\cutr}{}{\Gamma, \Delta, B}{
				\vldr{\fprojp{\der}}{\Gamma, A}
			}{
				\vldr{\fprojp{\der'}}{\cneg {A}\!, \Delta, B}
			}}
					{\vlin{\cprule}{}{\wn \cneg{A},\wn \Gamma, \oc B}{\vlhy{\vdots}}}
				}
                }
    \end{array}
    \]
    \caption{Simulation of $\cutstep{\fprule}{\fprule}$ in $\cpll$ via $\fprojp{{(\_)}} $.}
    \label{fig:simulation-cp-vs-cp}
\end{figure}

\subsection{Simulation results}
We conclude this section by showing that all functions representable in  $\pll$ and  $\dpll$ are also representable in $\cpll$ and  $\nupll$ respectively (\Cref{thm:simulation}). To this end, we first prove that a cut elimination sequence in $\pll$ or $\dpll$ can be simulated by a $\omega$-long cut elimination sequence in their non-wellfounded counterparts (\Cref{lem:construction-of-omega-reduction-sequence}). \Cref{thm:simulation} is then proved by observing that, when derivations in $\pll$ and $\dpll$ have $\oc$-free conclusion, cut elimination sequences that fully eliminate the cut rule can be, in fact, simulated by \emph{finite} cut elimination sequences on coderivations (\Cref{lem:construction-of-finitary-reduction-sequence}).

We begin with some useful structural properties. 

\begin{lemma}\label{prop:downward-oc}
	Let $\rrule \in \pll \cup \dpll \cup \nwpll$ be a rule such that either $\rrule\neq  \cutr$ or  $\rrule\in\{\cutstep{\fprule}{\fprule}, \cutstep\nuprule\nuprule, \cutstep{\cprule}{\cprule}\}$. 
	If a $\oc$ occurs in $\rrule$, then a $\oc$ occurs in its conclusion.
\end{lemma}
\begin{proof}
	If $\rrule = \cutr$ then it is in  $\{\cutstep{\fprule}{\fprule}, \cutstep\nuprule\nuprule, \cutstep{\cprule}{\cprule}\}$ by hypothesis, and in all such  cases $\rrule$ contains a $\oc$-formula in the conclusion. Otherwise,  $\rrule$ is not a $\cutr$, and the property follows by inspecting the other rules of $\pll \cup \dpll \cup \nwpll$, recalling that instantiation in the $\exists$ rule requires $\oc$-freeness. 
\end{proof}

\begin{lemma}\label{prop:cut-free-oc-free-derivation}
Any cut-free progressing $\der \in \nwpll$ with a $\oc$-free conclusion is a derivation.
\end{lemma}
\begin{proof}
    By progressivity, every infinite branch of $\der$ would contain a sequent with an occurrence of $\oc$. Since $\der$ is cut-free, by repeatedly applying \Cref{prop:downward-oc} we  have that the conclusion of $\der$ must contain an occurrence of $\oc$, contradicting the hypothesis.
\end{proof}

Simulation of cut elimination relies on two translations for $\pll$ and $\dpll$ into their non-wellfounded formulations $\cpll$ and $\nupll$, respectively.

\begin{definition}[Translation]
	We define   two (conclusion-preserving) translations  
	$\fprojp{(\cdot)}\colon\pll\to\cpll$ 
	and
	$\cprojp{(\cdot)}\colon\dpll\to\nupll$, 
	which  expand bottom-up the promotion rules $\fprule$ and $\nuprule$
	into \nwboxes as in~\Cref{fig:translations-pll}, leaving the other rules unchanged.
\end{definition}

Note that the images of the translations $\fprojp{(\cdot)}$ and $\cprojp{(\cdot)}$ are in $\cpll$ and $\nupll$, respectively, by~\Cref{prop:weak-regular-finite-support}.

	Observe that if $\der_1 \cutelim \der_2$ is a cut elimination step of the form $\cutstep\fprule\fprule$ in $\pll$ then $\fprojp{\der_2}$ can only be obtained from
 $\fprojp{\der_1}$ by applying {infinitely} many cut elimination steps in $\cpll$, as shown in~\Cref{fig:simulation-cp-vs-cp}, and similarly for the cut elimination step  $\cutstep\nuprule\nuprule$ in $\dpll$. 	
Nonetheless, we can show that coderivations of $\pll$ and $\dpll$  with $\oc$-free conclusion  can be turned into cut-free coderivations  using only \emph{finitely many} cut elimination steps. As a straightforward consequence, we can infer that any function over binary strings representable in $\pll$ (resp.~$\dpll$) is also representable in  $\cpll$ (resp.~$\nupll$).

To this end, we introduce some definitions inspired by~\cite{Saurin23}. In what follows, coderivations in $\nwpll$ will be equipped with a \emph{distance} $\delta \colon \nwpll \times \nwpll \to \Nset$  given by % $\delta(\der, \der')=\inf\{2^{-h}\mid \der$ { and } $\der'$ { coincide up to height }$h \}$. 
$$
\delta(\der, \der') =
\begin{cases}
	0 &\text{if $\der = \der'$;}
	\\
	\min\{2^{-h}\mid \der \text{ and } \der' \text{ coincide in all nodes up to height }h \} &\text{otherwise.}
\end{cases}
$$
Note that this is well defined even when $\der$ or $\der'$ is a derivation.
As well known, this distance forms a complete (ultra)metric space over any set of (binary, possibly infinite) labeled trees, inducing the so-called \emph{tree~topology}.
So, sequences of coderivations in $\nwpll$ may have \emph{limits}.

In the next definition, we identify $\Nset$ with the least limit ordinal $\omega$.

\begin{definition}[$\lambda$-reduction sequence] 
Let $\lambda \in \Nset \cup \{\omega\}$.  A \defin{$\lambda$-reduction sequence} is a $\lambda$-indexed sequence $\sigma\dfn (\der_i)_{i \in \lambda}$ for any $i$ such that $i+1 \in \lambda$. 
The height of the $\cutr$ rule reduced at the cut elimination step $\der_i \cutelim \der_{i+1}$ in $\sigma$ is denoted by~$h_\sigma(i)$. 
We say that $\sigma $ is \defin{height-increasing} if $\lim_{i \in \lambda} h_\sigma(i)=\infty$. 
It is \defin{weakly converging} if $\lim_{i \in \lambda} \der_i$ exists; we then write  $\sigma \dfn \der_0 \cutelim^\lambda \der$ to mean that $\lim_{i \in \lambda} \der_i= \der$. 
    Finally, $\sigma$ is \defin{strongly converging} if it is weakly converging and height-increasing. 
\end{definition}

If $\lambda \in \Nset$, any $\lambda$-reduction sequence is weakly converging and not height-increasing.

\begin{definition}[Splitting function]
Given a height-increasing $\omega$-reduction sequence $\sigma$, a strictly monotone function 
    $\ell \colon \Nset \to \Nset$ is a \defin{splitting function (for $\sigma$)} if, for all $j \in \Nset$:
    \begin{itemize}
        \item $h_\sigma(i)\leq j$ for all $ i \leq\ell(j)$;
        \item $h_\sigma(i) > j$ for all $i > \ell(j)$. 
    \end{itemize}
\end{definition}

Note that every height-increasing $\omega$-reduction sequence $\sigma$ has a splitting function.
Indeed, since $\lim_{i \in \omega} h_\sigma(i) = \infty$, for every $j \in \Nset$ there is $n_j \in \Nset$ such that, for all $i \in \omega$, if $i > n_j$ then $h_\sigma(i) > j$, and $h_\sigma(i) \leq j$ otherwise; 
we can shift the $n_j$'s so that the sequence $(n_j)_{j\in \Nset}$ is strictly increasing;
a splitting function for $\sigma$ is then $\ell \colon \Nset \to \Nset$ defined by $\ell(j) = n_j$.

\begin{lemma}[$\omega$-compression]\label{lem:compression}
  Let $\sigma\dfn \der_0 \cutelim^\omega \der_\omega$ be a strongly converging $\omega$-reduction sequence in $\nwpll$, and let $\ell$ be a splitting function for $\sigma$. Then:
\begin{enumerate}
    \item \label{enum:compression-1} If $\der_\omega \cutelim \der_{\omega+1}$ reduces a cut with height $k$ {there is a $\lambda$-reduction sequence $\sigma'\dfn \der_0 \cutelim^{\lambda} \der_{\omega+1}$ such that, if $\lambda=\omega$:
    \begin{itemize}
        \item $\sigma'$ is  strongly converging 
        \item there is  a splitting function  $\ell'$ for $\sigma'$ such that $\sigma(j)=\sigma'(j)$ for all $j \leq \ell(k)\leq \ell'(k)$.
    \end{itemize}}
    \item \label{enum:compression-2} If $\tau \dfn \der_\omega \cutelim^{\omega} \der_{\omega\cdot2}$ is a strongly converging $\omega$-reduction sequence such that $(h_\tau(\omega+i))_{i \in \omega}$ is strictly increasing, {then  there is a $\lambda$-reduction sequence $\sigma^* \dfn \der_0 \cutelim^{\lambda} \der_{\omega\cdot 2}$ such that, if $\lambda=\omega$ then $\sigma^* $ is strongly converging.}
\end{enumerate}
\end{lemma}
\begin{proof}~
    \begin{enumerate}
    	\item Let $\ell$ be a splitting function for $\sigma$ and let $k$ be the height of  the cut reduced by the cut elimination step $\der_\omega \cutelim \der_{\omega+1}$. {We have two cases. If the cut elimination step is not of the form $\cutstep{\cprule}{\wnwrule}$ then it commutes with any step $\der_i \cutelim\der_{i+1}$ with $i > \ell(k)$, and so we can construct a strongly converging $\omega$-reduction sequence $\sigma'\dfn \der \cutelim^\omega \der_{\omega+1}$. Notice that the function $\ell'$ defined by $\ell'(j)\dfn \ell(j)+1$ for all $j \geq k$ and $\ell'(j)\dfn \ell(j)$ otherwise is splitting for $\sigma'$. It is easy to see that $\sigma(j)=\sigma'(j)$ for all $j \leq \ell(k)\leq \ell'(k)$. Otherwise, the cut elimination step $\der_\omega \cutelim \der_{\omega+1}$ is of the form $\cutstep{\cprule}{\wnwrule}$. In this case  it only \emph{weakly} commutes with any step $\der_i \cutelim\der_{i+1}$ with $i > \ell(k)$. Indeed, a cut elimination step $\cutstep{\cprule}{\wnwrule}$ erases possibly infinite inference rules, and so anticipating it in the sequence $\sigma$ might prevent infinitely many reduction steps being applied. Therefore, we can construct a $\lambda$-reduction sequence $\sigma'\dfn \der_0 \cutelim^{\lambda} \der_{\omega+1}$  satisfying the required conditions.
        }
    	
    	\item {We have two cases. If by  repeatedly applying~\Cref{enum:compression-1} we obtain a $\lambda$-reduction sequence $\sigma': \der_0 \cutelim^\lambda \der_{\omega+n}$ where $\lambda\in \omega$, then we simply define $\sigma^*$ by concatenating $\sigma'$ with $\tau(\omega+n)\cutelim^{\omega}\der_{\omega \cdot 2}$. Clearly, $\sigma^*$ is a strongly convergent $\omega$-reduction sequence, and so we are done. Otherwise, 
        }
         we can construct a family $(\sigma_n \dfn \der_0 \cutelim^\omega \der_{\omega+n})_{n \in \omega}$ of strongly convergent $\omega$-reduction sequences and a family $(\ell_n)_{n \in \omega}$ such that $\ell_n$ is a splitting function for $\sigma_n$. For the base case, we  set $\sigma_0 \dfn \sigma$ and $\ell_0$ any splitting function for $\sigma$. Concerning the inductive case,  $\sigma_{n+1}$ and $\ell_{n+1}$ are obtained by applying~\Cref{enum:compression-1} to $\sigma_n$, $\ell_n$, and $\der_{\omega+n}\cutelim \der_{\omega+n+1}$. Note that $\sigma_{n}(j)=\sigma_{n+1}(j)$ for all $j \leq \ell_n(h_{\tau}(n))\leq \ell_{n+1}(h_{\tau}(n))$ by construction of $\sigma_{n+1}$. 
    	
    	We  now consider the sequence $\sigma^*$ of length $\omega$ defined as $\sigma^*(i)\dfn \lim_n \sigma_{n}(i)$ for all $i \in \omega$. Notice that the limit defining $\sigma^*(i)$ exists. Indeed, by construction $(\sigma_n(i))_{n\in \omega}$ is eventually constant. 
        Indeed,  since $(h_\tau(n))_{n \in \omega}$ is strictly increasing and each $\ell_{n}$ is strictly monotone, for every $i \in \omega$ there is  $n_0 \in \omega$ such that $i \leq \ell_{n_0}(h_\tau(n_0))$. Moreover, by construction $\ell_{n_0}(h_\tau(n_0))\leq \ell_{m}(h_\tau(n_0))$ for every $m \geq n_0$ and so   $\sigma_{m}(i)=\sigma_{n_0}(i)$. Hence, $\sigma^*(i)= \lim_n \sigma_{n}(i)= \sigma_{n_0}(i)$. This also shows that $\sigma^*$ is an $\omega$-reduction sequence. 
    	
        Moreover, the sequence is weakly converging. 
        Indeed, we have that  $\lim_i \sigma^*(i)=\lim_i \lim_n \sigma_{n}(i)=\lim_n \lim_i \sigma_{n}(i)= \lim_n \der_{\omega+n}= \der_{\omega \cdot 2}$. To show that it is also strongly converging we need to prove that it is height-increasing. Notice that by construction, for every $n \in \omega$,  $\sigma^*$ contains a cut elimination step reducing a cut of height $h_\tau(n)$. Since by hypothesis  $(h_\tau(n))_{n \in \omega}$ is strictly increasing, it must be that $\lim_{i \in \omega} h_{\sigma^*}(i)= \infty$.
    	\qedhere
    \end{enumerate}
    
\end{proof}

\begin{lemma}[$\omega$-simulation]\label{lem:construction-of-omega-reduction-sequence}
     Let $\der$ be a derivation of $\pll$ (resp., $\dpll$). If $\der \cutelims\der'$ then there is a $\lambda$-reduction sequence $\sigma\dfn \fprojp{\der} \cutelim^{\lambda}\fprojp{\der'}$ (resp., $\sigma\dfn \cprojp{\der} \cutelim^{\lambda}\cprojp{\der'}$) with $\lambda\leq \omega$. Moreover, if $\lambda=\omega$ then the sequence is strongly converging. 
\end{lemma}
\begin{proof}
    We prove the statement by induction on the length of $\der \cutelims\der'$. We only consider the case where $\der$ is a derivation of $\pll$, as the case for $\dpll$ can be treated similarly. If $\der= \der'$ then we are done. Otherwise, we have $\der \cutelims\der'' \cutelim\der'$. By induction hypothesis, there is a $\lambda$-reduction sequence $\sigma''\dfn \fprojp{\der} \cutelim^{\lambda}\fprojp{\der''}$ satisfying the conditions. We do case analysis on $\der'' \cutelim \der'$. If the cut elimination step reduces a cut $\rrule \neq \cutstep{\cprule}{\cprule}$ then it is easy to check that $\fprojp{\der''}\cutelim \fprojp{\der'}$. So, if  $\lambda= \omega$ then $\sigma''$ is strongly converging, and we can apply~\Cref{lem:compression}.\ref{enum:compression-1}. Otherwise, $\rrule = \cutstep{\cprule}{\cprule}$ and we simulate $\der'' \cutelim \der'$ by an $\omega$-reduction sequence $\sigma'\dfn \fprojp{\der''}\cutelim^\omega \fprojp{\der'}$  as in~\Cref{fig:simulation-cp-vs-cp}. It is easy to see that  $\sigma'$ can be constructed as a strongly converging  $\omega$-reduction sequence and,  in particular, such that $(h_{\sigma'}(i))_{i \in \omega}$ is strictly increasing. So, if $\lambda= \omega$ we conclude by applying~\Cref{lem:compression}.\ref{enum:compression-2}.
\end{proof}

\begin{lemma}[Finite simulation]\label{lem:construction-of-finitary-reduction-sequence} Let $\der$ be a derivation of $\pll$ (resp., $\dpll$) with $\oc$-free conclusion. If $\der \cutelims \widehat{\der}$ with $\widehat{\der}$ cut-free,  then $\fprojp{\der}\cutelims \fprojp{\widehat{\der}}$ (resp., $\cprojp{\der}\cutelims \cprojp{\widehat{\der}}$).    
\end{lemma}
\begin{proof}
We only consider the case where $\der$ is a derivation of $\pll$, as the case for $\dpll$ can be treated similarly. By~\Cref{lem:construction-of-omega-reduction-sequence} we obtain a $\lambda$-reduction sequence  $\sigma\dfn \fprojp{\der_0}\cutelim^{\lambda}\fprojp{\hat{\der}}$ where $\lambda\leq \omega$ such that, if $\lambda= \omega$ then $\sigma$ is strongly converging. Since $\widehat{\der}$ is cut-free and has $\oc$-free conclusion, then it is a derivation by~\Cref{prop:cut-free-oc-free-derivation}.  This implies that  $\lambda<\omega$. Indeed, if $\lambda= \omega$ then  $\sigma$ would be strongly converging, and so height-increasing. But then  $\lim_i h_\sigma(i)= \infty$, which contradicts finiteness of $\widehat{\der}$.
\end{proof}

\begin{theorem}[Simulation] \label{thm:simulation}  
	Let $f \colon (\{\false,\true\}^*)^n \to \{\false,\true\}^*$.
	\begin{enumerate}
		\item \label{enum:simulation1} 
		If $f$ is representable in $\dpll$, 
		then so it is  in $\nupll$.
		\item  \label{enum:simulation2} 
		If $f$ is representable in $\pll$,  then so it is in $\cpll$.
	\end{enumerate}
\end{theorem}
\begin{proof}
	We  only prove~\Cref{enum:simulation1}, as~\Cref{enum:simulation2} is proven similarly. Let   $\der: \String[] \limp \overset{n\geq 0}{\ldots} \limp \String[] \limp \String$ represent $f \colon (\{\false,\true\}^*)^n \to \{\false,\true\}^*$ in $\dpll$ and let  $x_1, \ldots, x_n \in \{\false,\true\}^*$. This means that the reduction in~\Cref{fig:representability} holds for $\mathbf{T_1}=\ldots= \mathbf{T}_n= \String[]$ and $\mathbf{T}=\String$.	Let $\sigma\dfn   \der_0 \cutelim \der_1 \cutelim \ldots \cutelim \der_n=\cod{f(s_1, \ldots, s_n)} $ be such a reduction  sequence, and notice that  $\cod{f(s_1, \ldots, s_n)}$ is cut-free by definition.  
    % By~\Cref{lem:postponing-lemma}.\ref{enum:post3} we can assume,  w.l.o.g., that $\sigma$ never applies  the cut elimination steps  $\cutstep\fprule\fprule$ and $\cutstep\nuprule\nuprule$. 
    We conclude by applying~\Cref{lem:construction-of-finitary-reduction-sequence}, observing that $\cod{s} ^\bullet=\cod{s} $ for any binary string $s \in \{\false, \true\}^*$.
\end{proof}

\section{Soundness}\label{subsec:soundness}

In this section we prove the soundness theorem, which states that every function $f$ over binary strings  representable by coderivations of $\nupll$ (resp.~$\cpll$) is  in  $ \fppoly$ (resp.~$\fptime$). Soundness is a straightforward consequence of a polynomial modulus of continuity result for cut elimination (\Cref{prop:polynomial-moduli}), according to which  we can extract a family of polynomial size circuits $C_f$ computing $f$. We conclude by observing that  $C_f$  is, in fact,  P-uniform whenever the coderivation representing $f$ is regular.

\newcommand{\derround}[1]{\der^{#1}}
\newcommand{\derphase}[1]{\der_{\mathsf{e}}^{#1}}	
\newcommand{\mphase}[2]{{#1}\cutelim^{*\mathsf{m}} {#2}}
\newcommand{\ephase}[2]{{#1}\cutelim^{*\mathsf{e}} {#2}}
\newcommand{\round}[2]{{#1}\cutelim^{*\mathsf{r}} {#2}}

\subsection{Shallow cut elimination strategy}\label{subsec:shallow}

Eliminating cuts in coderivations of   $\nupll$ and $\cpll$ typically requires infinitary rewriting (see, e.g.,~\cite{CSL}). However, if we focus on coderivations with $\oc$-free  conclusion we can define a cut elimination strategy, we call it \emph{shallow}, that always halt after a  \emph{finite} number of steps. This restricted form of cut elimination is enough for establishing soundness, in that computation over binary strings can  be duly simulated by reducing cuts in such coderivations.

The so-called \emph{shallow cut elimination strategy} for $\oc$-free coderivations of $\nupll$ and $\cpll$ is divided into rounds, each one divided into two phases. Phase 1 reduces all shallow cuts (i.e., those cuts with nesting level $0$) that do not involve $\nwboxes$. The shallow cuts affecting $\nwboxes$,  called \emph{steady cuts}, are dealt with in Phase 2. This phase reduces hereditarily all steady cuts produced by Phase 1 except those of the form $\cutstep{\cprule}{\cprule}$ (as reducing the latter would ``merge two boxes" and can be  avoided thanks to~\Cref{prop:invariance}.\ref{enum:invariance2}).

\begin{definition}[Steady cuts and $\nwboxes$] 
	Let {$\der \in \nwpll$}. A \defin{steady cut} is a  shallow cut with  an active formula that is $\oc$-principal for a (shallow) $\nwbox$ $\nwpromotion$. We call  $\nwpromotion$ a  \defin{steady $\nwbox$}.
\end{definition}

\begin{proposition}\label{prop:advanced-properties-exp-flow}
	Let $\der \in \nupll$ be a coderivation  of a   $\oc$-free sequent  whose shallow cuts are all steady. Then:
	\begin{enumerate}
		\item  \label{enum:steady1} All shallow $\nwboxes$ are all steady.
		\item   \label{enum:steady2} There is a cut $\rrule\neq \cutstep{\cprule}{\cprule}$.
	\end{enumerate}
\end{proposition}
\begin{proof}
	The proof of~\Cref{enum:steady1}  is analogous to~\Cref{prop:cut-free-oc-free-derivation}. If there is a shallow $\nwbox$ $\nwpromotion$ 
that	 is not steady and all shallow cuts are steady then, by  \Cref{prop:downward-oc},  the conclusion of $\der$ contains a $\oc$, which contradicts the hypothesis.
	
	Concerning~\Cref{enum:steady2}, by~\Cref{enum:steady1} there are exactly $n$ steady cuts and $n$ steady $\nwboxes$. By the tree structure of $\der$ there must be a shallow cut that with an active formula that is not in the conclusion of a shallow $\nwbox$.
\end{proof}

\begin{definition}[Shallow rewriting strategy] 
	Let $\der\in \nupll$ be a coderivation of a $\oc$-free sequent. The \defin{shallow cut elimination strategy} iterates $\depth{\der}+1$ times the two phases below\footnote{To make the strategy deterministic, we can give priority   to the rightmost reducible $\cutr$ with smallest height. This would ensure that the strategy eventually applies a cut elimination step to every reducible cut.}:
	\begin{itemize}
		\item  \textbf{Phase 1.} Reduce all shallow cuts that are not  steady.
		%			
		%			
		%			except the exponential cuts  $\cutstep{\cprule}{\cprule}$, $\cutstep{\cprule}{\wnbrule}$, $\cutstep{\cprule}{\wnwrule}$ with active $\oc$-formulas  principal for a $\nwprule$.
		%		containing  the principal $\oc$-thread of a maximal $\nwprule$ as cut formula. 
		%		  and whose premises are not conclusions of $\nwprule$s\footnote{Notice that there might be occurrences of $\cprule$ that do not belong to $\nwprule$.}.
		%		\begin{itemize}
			%			\item we eliminate  all linear and commutative cuts with nesting level $0$ 
			%			\item we eliminate all the exponential cuts with nesting level $0$ whose premises are not conclusions of $\nwprule$s 
			%		\end{itemize}
		\item  \textbf{Phase 2.} Fully reduce all steady cuts $\rrule \neq \cutstep{\cprule}{\cprule}$ except those that become shallow during this phase. 
		
		%			Reduce all  shallow bordered cuts of the form $\cutstep{\cprule}{\wnbrule}$ and  $\cutstep{\cprule}{\wnwrule}$ for residues of $\nwpromotion_1, \ldots, \nwpromotion_n$.
		
		%			$\cutr$ rules with nesting level $0$   $\cutstep{\cprule}{\cprule}$, $\cutstep{\cprule}{\wnbrule}$, $\cutstep{\cprule}{\wnwrule}$ whose active $\oc$-formulas are principal for a $\nwprule$.
		%		Reduce all $\cutr$ rules whose active formulas occur in the main branch of  $\nwpromotion_1, \ldots, \nwpromotion_n$	except those of the  form $\cutstep{\cprule}{\cprule}$.
		
		%			Given $\oc\tau_1, \ldots, \oc \tau_n$ ($n \geq 0$) the $\oc$-threads that start at cut-formulas with nesting level $0$, reduce all exponential   $\cutr$ rules (by applying exponential and commutative steps) with a cut-formula in $\oc\tau_1, \ldots, \oc \tau_n$	except $\cutstep{\cprule}{\cprule}$.
		%		\red{In other words, the cut-elimination sequence  reduces only exponential and commuting cuts  produced from $\rrule_1, \ldots, \rrule_k$ that ``unpack" the $\nwprule$s $\nwpromotion_1, \ldots, \nwpromotion_k$ as much as we can by decreasing the nesting level of their calls, possibly producing new cuts with nesting level $0$ that are not reduced during this phase. }
	\end{itemize}
	We call each iteration of the two phases above a \defin{round}.  	We set $\derround 0\dfn \der$ and, for all $1 \leq d \leq \depth{\der}+1$,  $\derround d$ to be the coderivation obtained after the $d$-th round. For all $1 \leq d \leq \depth{\der}+1$ we set  $\derphase d$ as the coderivation obtained from $\derround{d-1}$  by applying \textbf{Phase 1}.  We write 
	$\mphase{\derround d}{\derphase {d+1}}$ to denote that $\derphase {d+1}$ has been obtained in a finite number of steps from 
	$\derround d$ by applying  \textbf{Phase 1}; $\ephase{\derphase d}{\derround d}$ to denote that
	$\derround d$ has been obtained in a finite number of steps from $\derphase d$
	by applying \textbf{Phase 2}; finally, we set $\round{\derround d}{\derround {d+1}}\dfn  \ephase{\mphase{\derround d}{\derphase {d+1}}}{\derround {d+1}}$.  
\end{definition}

In the next subsections we will introduce the technical definitions and results to prove that the shallow cut elimination strategy applied to coderivations with $\oc$-free conclusion always terminates. The idea behind termination is that after each round all steady $\nwboxes$ are eventually erased, so that the depth of the coderivation decreases by 1. This happens because during each round all steady cuts will eventually be reduced to a $\cutstep{\cprule}{\wnwrule}$ cut.

\subsection{Decomposition prebar and truncations}

As already noticed in~\cite{acclavio2023infinitary,CSL} in a propositional setting, thanks to finite expandability and (weak) regularity,  coderivations of $\nupll$ and $\cpll$ can be ``decomposed" into a finite tree together with a finite number of   \nwboxes. Such a decomposition property will allow an inductive description of   coderivations for our non-wellfounded  proof systems.

Recalling~\Cref{defn:bars}, the following is a straightforward consequence of~\Cref{cor:simple-structure}:
\begin{restatable}{proposition}{BAR}
	\label{prop:canonicity} 
	Let $\der \in \nupll$. There  is a prebar $\Nodes\subseteq \set{1,2}^*$ of  $\der$
	such that each $v\in\Nodes$ is the root of a \nwbox.
\end{restatable}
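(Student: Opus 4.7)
The plan is to construct the prebar by identifying, for each infinite branch $B$ of $\der$, a node on $B$ that roots a \nwbox, and then collecting the set of prefix-minimal such nodes. The key observation is that \FElity combined with the progressing criterion forces any infinite branch of $\der$ to eventually enter a regime in which only $\cprule$ rules occur, which is precisely the shape of a \nwbox's main branch.

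Fix an infinite branch $B$ of $\der$. By \FElity, there is a height $N_1$ above which $B$ contains no $\cutr$ or $\wnbrule$ rule. By the progressing criterion, $B$ carries a progressing $\oc$-thread $(A_i)$; since the parent relation for $\oc$-formulas advances by exactly one rule per step, this thread's formulas appear at every height of $B$ from some point onward. Moreover, the thread passes through infinitely many $\cprule$-conclusions with its formula as principal. Pick one such $\cprule$-conclusion on $B$ at a height $h_1 \geq N_1$; the sequent at this node necessarily has shape $\wn \Gamma, \oc A$ where $\oc A$ is the thread formula.

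Next, I carry out a case analysis on which rules can occur on $B$ above $h_1$. Rules $\cutr$ and $\wnbrule$ are excluded by \FElity beyond $N_1$; rules $\axr$ and $\lone$ have no premises and so cannot lie on an infinite branch; and rules $\ltens, \lpar, \lbot, \forall, \exists$ require a principal formula of a shape incompatible with $\wn \Gamma, \oc A$. Hence only $\cprule$ and $\wnwrule$ can apply, and the shape $\wn \Gamma, \oc A$ is preserved along $B$. Since each $\wnwrule$ strictly decreases $|\Gamma|$ while $\cprule$ preserves it, the number of $\wnwrule$ applications on $B$ above $h_1$ is bounded by $|\Gamma|$ at height $h_1$. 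Hence there is a height $N \geq h_1$ above which only $\cprule$ rules occur on $B$; since the thread lives on $B$ and its parent at each such $\cprule$ lies in the right premise, $B$ must take the right premise each time, so the main branch $\set{\epsilon, 2, 22, \dots}$ of the sub-coderivation rooted at the node on $B$ at height $N$ consists entirely of $\cprule$-conclusions, making this sub-coderivation a \nwbox.

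Finally, define $\Nodes$ as the set of nodes $v$ of $\der$ such that $\der_v$ is a \nwbox and no proper prefix of $v$ enjoys this property. Elements of $\Nodes$ are pairwise $\leq_\tree$-incomparable by construction, and the previous step ensures every infinite branch $B$ meets $\Nodes$ (via the prefix-minimal ancestor on $B$ of the node produced above). The main obstacle is the rule-classification step: it relies crucially on the sequent at the first $\cprule$-conclusion on $B$ above $N_1$ being forced into the form $\wn \Gamma, \oc A$, after which syntactic inspection of principal formulas excludes virtually every other rule.
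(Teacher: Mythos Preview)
Your argument is correct and follows essentially the same route as the paper: the paper derives the proposition as an immediate consequence of \Cref{cor:simple-structure}, whose proof is exactly the ``\FElity forces only $\cprule$/$\wnwrule$ above some height, and the $\wnwrule$s run out'' analysis you carry out inline. One small presentational wrinkle: your claim that ``the shape $\wn\Gamma,\oc A$ is preserved along $B$'' and that ``$\cprule$ preserves $|\Gamma|$'' already presuppose that $B$ takes the right premise at each $\cprule$, which you only justify afterwards via the thread; making the induction on height explicit (sequent shape, thread at $\oc A$, rule case, right premise, repeat) removes the apparent circularity.
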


\begin{definition}%[Decomposition]
	\label{def:decomp}
	Let $\der\in  \nupll$. 
	The \defin{decomposition prebar}
	of $\der$ is the
	minimal prebar $\Nodes$ of $\der$ such that, for all $v \in \Nodes$,  
	$\der_{v}$ is a $\nwbox$. We denote with $\bord \der$ such a prebar and we set     
	$\base{\der}\coloneqq\prun{\der}{\bord\der}$.
\end{definition}

\begin{remark}\label{rem:finite-derivation}
If  $\der\in  \nupll$   then,  by weak K\"{o}nig's lemma, $\bord \der$  is finite and $\base{\der}$ is a finite approximation of $\der$. 
\end{remark}

{The \emph{$n$-truncation} of a coderivation $\der$ is a particular finite approximation of $\der$ that  only considers the first $n$ calls of each $\nwprule$}. Similarly, an \emph{$n$-hypertruncation} only considers the first $n$-calls of the shallow \nwboxes of $\der$.

\begin{figure}
	\[
	\vlderivation{
		\vliin{\cprule}{}{\wn \Gamma, \oc A}{\vltr{\der_0}{\Gamma, A}{\vlhy{ \  \ }}{\vlhy{\  \ }}{\vlhy{\ \   }}}{
			\vliin{\cprule}{}{\wn \Gamma, \oc A}{\vltr{\der_1}{\Gamma, A}{\vlhy{\  \ }}{\vlhy{\  \ }}{\vlhy{\ \  }}}{		\vliin{\cprule}{}{\reflectbox{$\ddots$}}{\vltr{\der_{n-1}}{\Gamma, A}{\vlhy{\ \  }}{\vlhy{\  \ }}{\vlhy{\ \  }}}{
					%					\vlin{\oc \zero}{}{\wn \Gamma, \oc A}{\vlhy{}}
					\vliin{\cprule}{}{\wn \Gamma, \oc A}{\vlin{\zero}{}{\Gamma, A}{\vlhy{}}}{\vlin{\zero}{}{\wn \Gamma, \oc A}{\vlhy{}}}
				}
			}
		}
	}
	\]
	\caption{A finite non-wellfounded promotion.}
	\label{fig:finNWP}
\end{figure}

\begin{definition}[Finite $\nwboxes$ and truncations]\label{defn:trunc}
	A \defin{finite non-wellfounded promotion} is defined as a coderivation $\finitenwpromotion= \derfinstream{\der_0}{\der_1}{\der_{n-1}}$   in \Cref{fig:finNWP}.
	We  write $\finitenwpromotion(i)$ to denote $\der_i$.

	Let $\der\in \nupll$ with $\bord \der= \set{v_1, \ldots, v_k}$ (so  $\nwpromotion_i \coloneqq \der_{v_i}$ is a $\nwprule$ for all $1 \leq i \leq k$).  
	The \defin{$n$-truncation} $\prun \der n$ and the \defin{$n$-hypertruncation} $\trunk \der n$ of $\der$ are the open derivations defined for all $n>0$ as follows: if $\depth \der=0$, then $\prun \der n=\trunk \der n\dfn \base{\der}= \der$, and  if $\depth \der >0$, then 
	\begin{align*}
		\prun \der n &\dfn \base{\der}(\vec{\finitenwpromotion_i}/\vec{v_i})
		&
		\trunk \der n &\dfn \base{\der}(\vec{\finitenwpromotion'_i}/\vec{v_i})
	\end{align*}	
	where for all $i\in\intset1k$, $\finitenwpromotion_i =  \derfinstreamshort{\prun{\nwpromotion_i(0)}n}{\prun{\nwpromotion_i(1)}n}{\prun{\nwpromotion_i(n-1)}n}$ and $\finitenwpromotion_i' =  \derfinstreamshort{\base{\nwpromotion_i(0)}}{\base{\nwpromotion_i(1)}}{\base{\nwpromotion_i(n-1)}}$.
\end{definition}

Notice  that $\prun \der n$ and $\trunk \der n$ are  finite,  and  $\trunk \der n\preceq \prun \der n$. 

\subsection{Exponential flows} 

We now   introduce the  \emph{exponential graph} of a coderivation $\der$, a  directed graph associated to $\der$ that allows  a static analysis of the cut elimination steps reducing steady cuts.  Directed paths of this graph, called \emph{exponential flows}, can be then used to precompute  the maximum number of calls of  a shallow $\nwprule$ that eventually become shallow as a consequence of a $\cutstep\cprule\wnbrule$ cut elimination step. This number will be called \emph{rank} of $\der$, and plays a crucial role for establishing a polynomial bound on cut elimination.  

\newcommand{\cpp}{\mathsf{p}}
\newcommand{\cpa}{\mathsf{a}}
\newcommand{\expflow}{\mathcal{\phi}}
\newcommand{\measure}[2]{{#1}({#2})}
\newcommand{\res}{\widehat{\expflow}}
	
\begin{definition}[Exponential flow] 
	Let $\der\in \nupll$. The \defin{exponential graph of $\der$}, written $\egraph \der$, is a finite directed forest\footnote{A directed forest is a directed acyclic graph whose underlying undirected graph is a forest.} whose nodes are (labelled by) the shallow exponential formulas in $\base{\der}$ and whose directed edges connect a node $A$ to a node $B$ if:
	\begin{itemize}
		\item $A=\wn \cneg{C}$ and  $B=\oc C$ are the conclusions of an $\axr$ rule;
		%		 with $A=\wn \cneg{C}$, $B=\oc C$;
		\item $A=\wn \cneg{C}$ and $B=\oc C$ are conclusions of a $\cprule$ rule;
		%		 with $A=\wn \cneg{C}$, $B=\oc C$;
		%		\item $A=\wn C$ and $B=\wn C$ 
		%%		are occurrences of the same $\wn$-formula, and 
		%		where
		%		$A$ is the principal formula of a $\wnbrule$ rule whose active formula is $B$;
		\item $A=\wn C$ is principal for a $\wnbrule$-rule with active formula $B=\wn C$;
		\item $A=\oc {C}$ and $B=\wn \cneg{C}$ are the cut-formulas of a $\cutr$ rule;
		%		 with $A=\wn \cneg{C}$, $B=\oc C$;		
		\item  $A=B$, where $A$ is a non-principal  $\wn$-formula in the conclusion of a rule $\rrule \neq \cprule$, and $B$ is the corresponding non-active $\wn$-formula in a premise of $\rrule$.
		\item $A=B$, where $B$ is a non-active  $\oc$-formula in a premise of a rule $\rrule \neq \cprule$, and $A$ is the corresponding non-principal $\oc$-formula in the conclusion of $\rrule$.
		%		\item $A$ and $B$ are $\oc$-formulas in a context and $A$ is an immediate ancestor of $B$ (see~\Cref{defn:ancestry}).
	\end{itemize}
	A \defin{$\oc$-node}  (resp.~\defin{$\wn$-node}) is a node labelled by a $\oc$-formula (resp.~$\wn$-formula). 
	A \defin{$\brule$-node} (resp.~\defin{$\wrule$-node}) is a   node labelled by  the  principal formula  for a $\wnbrule$ rule (resp.~for a $\wnwrule$ rule). 

	Directed paths of $\egraph{\der}$ range over $\phi, \phi', \phi'', \ldots$.  Maximal directed graphs of $\egraph{\der}$  will be called  \defin{exponential flows}.  We say that a directed path $\phi$ \defin{crosses} an exponential $\cutr$ rule  when it crosses  (both of its) active formulas. The \defin{rank of a directed path} $\phi$, written  $\rank{}\phi$, is the number of $\brule$-nodes crossed by $\phi$.  Finally, the \defin{rank of $\der$}, written  $\rank{}{\der}$, is the  number $\brule$-nodes of $\egraph{\der}$.
\end{definition}

\begin{figure}[t]
	\[
	\small
	\vlderivation{
		\vliin{\cutr}{}{ \tikzmarknode[circle, draw,  very thick,minimum width=0.8cm]{d1}{\wn \cneg{ A}}\ \ ,\ \  \tikzmarknode[circle, draw,  very thick,minimum width=0.8cm]{d2}{\oc A}}{
			\vliin{\cprule}{}{ \tikzmarknode[circle, draw, very thick, minimum width=0.8cm]{a1}{\wn \cneg{A}}\ \ , \ \ \tikzmarknode[circle, draw,  very thick,minimum width=0.8cm]{b1}{\oc A}}{\vlin{\axr}{}{  \cneg{A},  A}{\vlhy{}}}{
				\vliin{\cprule}{}{ \wn \cneg{A},  \oc A}{\vlin{\axr}{}{  \cneg{A},  A}{\vlhy{}}}{\vlin{\cprule}{}{\wn \cneg{A}, \oc A}{\vlhy{\vdots}}}
			}
		}
		{
			\vlin{\wnbrule}{}{\tikzmarknode[circle, draw,  very thick,minimum width=0.8cm]{c1}{\wn \cneg{ A}}\ \ ,\ \  \tikzmarknode[circle, draw,  very thick,minimum width=0.8cm]{c2}{\oc A}}
			{\vlin{\wnwrule}{}{\cneg{A} ,\ \ \tikzmarknode[circle, draw,  very thick,minimum width=0.8cm]{c3}{\wn \cneg{ A}}\ \ ,\ \  \tikzmarknode[circle, draw,  very thick,minimum width=0.8cm]{c4}{\oc A}}{\vlin{\axr}{}{\tikzmarknode[circle, draw,  very thick,minimum width=0.8cm]{c5}{\wn \cneg{ A}}\ \ ,\ \  \tikzmarknode[circle, draw,  very thick,minimum width=0.8cm]{c6}{\oc A}}{\vlhy{}}}}
		}
	}
	\begin{tikzpicture}[overlay, remember picture]
		\draw[->, >=latex', very thick] (a1) to node [ above]{\scriptsize{$b$}} (b1); 
		\draw[->, >=latex', very thick] (c5) to node [ above]{\scriptsize{$g$}}(c6); 
		\draw[->, >=latex', very thick] (b1) to node [ above]{\scriptsize{$c$}}  (c1); 
		\draw[->, >=latex', very thick] (d1) to[bend left=20] node [ below]{\scriptsize{$a$}}(a1); 
		\draw[->, >=latex', very thick] (c2) to[bend left=20] node [ below]{\scriptsize{$l$}} (d2); 
		\draw[->, , >=latex', very thick] (c1) to node [ left, pos=0.9]{\scriptsize{$e$}} (c3); 
		\draw[->, , >=latex', very thick] (c3) to node [ left, pos=0.2]{\scriptsize{$f$}} (c5); 
		\draw[->, , >=latex', very thick] (c6) to  node [ right, pos=0.1]{\scriptsize{$h$}}(c4); 
		\draw[->, , >=latex', very thick] (c4) to node [ right, pos=0.95]{\scriptsize{$i$}} (c2); 
	\end{tikzpicture}
	\]
	\caption{A coderivation $\der$ and its exponential graph $\egraph{\der}$.}
		\label{fig:example-egraph}
\end{figure}

\begin{example}\label{rem:example-exponential-graph}
	Let $\der$ be the coderivation in~\Cref{fig:example-egraph}. Notice that the sub-coderivation $\der_{1}$ (i.e., the one with conclusion the right premise of the $\cutr$ rule) is a steady $\nwprule$ $\nwpromotion$ whose calls are axioms. There is only one exponential flow in  $\egraph{\der}$, which is   $\phi\dfn abcdefghil$. We have $\rank{}{\phi}=\rank{}{\der}=1$. Finally,  $\der$ contains only one $\cutr$ which is shallow and steady.
\end{example}

\newcommand{\rescut}{\hat{\rrule}}

As we mentioned before,  exponential flows can be seen as  static representations of cut elimination steps applied to steady cuts. We now study their  invariance properties under rewriting. First, we need a notion of residue of cut rules and exponential flows along  a  cut elimination step.

\begin{definition}[Residues]
	Let  $\der \in \nupll$ and  $\rrule$ be a steady cut. The \defin{residue of $\rrule$ (along $\der \cutelim\der'$)} is the cut rule $\rescut$ of $\der'$ defined as follows:
	\begin{itemize}
		\item if $\der \cutelim\der'$  does not reduce $\rrule$ then $\rescut\dfn \rrule$ 
		\item if   $\der \cutelim\der'$    reduces $\rrule$  and $\rrule\neq \cutstep{\cprule}{\wnwrule}$ then $\rescut$ is the \emph{unique   steady} cut that is obtained by reducing $\rrule$ (see~\Cref{fig:cut-elim-pll}).
		\item otherwise, $\rescut$ is undefined. 
	\end{itemize}
	
	We denote by  $\phi_\rrule$ an exponential flow that crosses $\rrule$. We call the \defin{residue of $\phi_\rrule$ (along $\der \cutelim\der'$)} the \emph{unique} exponential flow of $\egraph{\der'}$ defined by  $\widehat{\phi_\rrule}\dfn \phi_{\widehat{\rrule}}$. 
\end{definition}

\begin{proposition}[Invariance]\label{prop:invariance}
	Let  $\der \in \nupll$ with $\oc$-free conclusion where all shallow cuts are steady. Then:
	\begin{enumerate}
		\item \label{enum:invariance1} Every exponential flow $\phi$ of $\egraph{\der}$ ends at a $\wrule$-node.
		\item \label{enum:invariance2} If  $\der \cutelim \der'$ reduces a steady cut $\rrule^*$ then, for any steady cut $\rrule$ in $\der$,  $\rank{}{\widehat{\phi_{\rrule}}}\leq \rank{}{\phi_{\rrule}}$ holds whenever  $\rrule^*\neq \rrule$ or $\rrule^*\neq \cutstep{\cprule}{\wnwrule}$.
	\end{enumerate}
\end{proposition}
\begin{proof}
	Let $\phi$ be an exponential flow of $\egraph{\der}$. We show that:
	\begin{enumerate}[(a)]
\item   every $\wn$-node that has no directed edge to another $\wn$-node is either a $\wrule$-node at the conclusion of a $\cprule$-rule.
\item  every $\oc$-node with no directed edge to another $\oc$-node is the active formula of a $\cutr$ rule.
	\end{enumerate}
The Point (a) follows by inspecting the definition of exponential flow. Concerning Point (b), let $v$ be a $\oc$-node $v$ with no directed edge to another $\oc$-node. Since all cuts are steady and the conclusion of $\der$ is $\oc$-free, by~\Cref{prop:downward-oc}, $v$ must be the $\oc$-active formula of a $\cutr$-rule.

 We now show that~\Cref{enum:invariance1} follows from the two points above. Indeed,  by maximality of exponential flows, if  $\phi$ crosses a  $\wn$-principal formula in the conclusion of a $\cprule$-rule (resp. the $\wn$-active formula of a $\cutr$ rule), then it also crosses its $\oc$-principal conclusion (resp. its $\oc$-active formula).  This means that, since exponential flows are finite,  $\phi$ must end at a $\wrule$-node.

Let us now prove~\Cref{enum:invariance2}.  Let $\expflow$  be an exponential flow $\expflow$ of $\egraph{\der}$. If $\rrule^*$ is not crossed by $\phi_\rrule$ then $\rank{}{\widehat{\phi_{\rrule}}}= \rank{}{\phi_{\rrule}}$. Otherwise, $\rrule^*$ is  crossed by $\phi_\rrule$ , and we proceed by case analysis. The cases where $\rrule$ is a commuting cut or a $\cutstep{\cprule}{\cprule}$ cut are straightforward. Suppose now $\rrule$ is a  cut $\cutstep{\cprule}{\wnwrule}$.  Since by assumption $\rrule^*\neq\rrule$, $\rescut$ is defined, and so is $\widehat{\phi_{\rrule}}$. By inspecting~\Cref{fig:exponential-flow-exponential-cut-elimination} (top) we conclude that $\rank{}{\widehat{\phi_{\rrule}}}\leq \rank{}{\phi_{\rrule}}$.
  Finally, we consider the case where $\rrule^*=\cutstep{\cprule}{\wnbrule}$ as in~\Cref{fig:exponential-flow-exponential-cut-elimination} (bottom). Then, $\phi_\rrule=\phi' x_iz_iab \phi''$ for some $i$ and for some directed paths $\phi'$ and $\phi''$. We notice that the existence of the directed edge labelled with $c$ in the figure is inferred from the fact that $\rrule^*$ is steady by hypothesis. This means that $\widehat{\phi_{\rrule}}=\phi' u_iv_icde\phi''$. But then $\rank{}{\widehat{\phi_{\rrule}}}=\rank{}{\phi'}+ \rank{}{\phi''}+1\leq \rank{}{\phi_{\rrule}}$.
\begin{figure*}
\[
\footnotesize
\vlderivation{
	\vliin{\cutr}{}{\tikzmarknode[rectangle, double, draw, very thick, minimum width=0.6cm, minimum  height=0.6cm]{g1}{\wn \Gamma}\qquad   \Delta}{
		\vliiin{\cprule}{}{\tikzmarknode[circle, double, draw, very thick, minimum width=0.6cm]{g2}{ \wn \Gamma}\qquad    \tikzmarknode[circle, , draw, very thick, minimum width=0.6cm]{a1}{\oc A}}{
			\vldr{\der_1}{\Gamma, A}
		}
		{\vlhy{\ \ }}
		{
			\vldr{\der_2}{ \wn \Gamma, \oc A}
		}
	}{
		\vlin{\wnwrule}{}{ \tikzmarknode[circle, , draw, very thick, minimum width=0.6cm]{na1}{\, \wn \cneg A}\ \  \Delta}{\vldr{\der_3}{\Delta}}
	}
}
\quad 
\cutelim
\quad 
\small
\vlderivation{
	\vliq{\size{\Gamma}\times\wnwrule}{}{ \tikzmarknode[rectangle, double , draw, very thick, minimum width=0.6cm, minimum  height=0.6cm]{g}{\wn\Gamma} \ \  \Delta}{	\vldr{\der_2}{\Delta}}
}
\vspace{0.2cm}
\begin{tikzpicture}[overlay, remember picture]
	\draw[-latex, very thick, double] ([shift=({0,0})]g1.160) to[bend left=0] node[below left, pos=0.4, xshift=1.5]{\scriptsize{$\vec{x}$}} ([shift=({0,0})]g2.300);
	\draw[-latex, very thick, double] ([shift=({0,0})]g2.east) to[bend left=0] node[above, pos=0.4, yshift=-0.5 ]{\scriptsize{$\vec{z}$}}   ([shift=({0,0})]a1.west);
	\draw[-latex, very thick, ] ([shift=({0,0})]a1.east) to[bend left=0] node[below, pos=0.5, yshift=1]{\scriptsize{$a$}}   ([shift=({0,0})]na1.west);
\end{tikzpicture}
\]
\medskip
\[
\footnotesize
\vlderivation{
	\vliin{\cutr}{}{\tikzmarknode[rectangle, double , draw, very thick, minimum width=0.6cm, minimum  height=0.6cm]{g1}{\wn \Gamma}\qquad  \Delta}{
		\vliiin{\cprule}{}{\tikzmarknode[rectangle, double , draw, very thick, minimum width=0.6cm, minimum height=0.6cm]{g2}{\wn \Gamma}\qquad     \tikzmarknode[circle,  , draw, very thick, minimum width=0.6cm]{a1}{\oc A}}
		{
			\vldr{\der_1}{\Gamma, A   }
		}
		{\vlhy{  }}
		{
			\vldr{\der_2}{\wn \Gamma,\oc A}
		}
	}{
		\vlin{\wnbrule}{}{\tikzmarknode[rectangle , draw, very thick, minimum width=0.6cm, minimum height=0.6cm]{na1}{\,  \wn \cneg A}\ \  \Delta\quad }{
			\vldr{\der_3}{\ \ \cneg{A}\ \   \tikzmarknode[rectangle,  , draw, very thick, minimum width=0.6cm, minimum  height=0.6cm]{na2}{\, \wn\cneg A} \ \  \Delta}
		}
	}
}
\quad 
\cutelim
\quad 
\small
\vlderivation{
	\vlin{}{}{\quad \tikzmarknode[rectangle, double , draw, very thick, minimum width=0.6cm, minimum height=0.6cm]{wg1}{\wn\Gamma}\quad  \Delta}{
		\vlin{}{}{\qquad \quad \vdots\ \scalebox{0.7}{$\size{\Gamma}\times\wnbrule$}}{
			\vliin{\cutr}{}{ \tikzmarknode[rectangle, double , draw, very thick, minimum width=0.6cm, minimum height=0.6cm]{wg2}{\wn \Gamma}\quad \Gamma \quad \Delta }{
				\vldr{\der_2}{\tikzmarknode[rectangle, double , draw, very thick, minimum width=0.6cm, minimum  height=0.6cm]{wg3}{\wn\Gamma}\quad \    \tikzmarknode[rectangle,  , draw, very thick, minimum width=0.6cm, minimum  height=0.6cm]{oa1}{\oc A}\ \ }
			}{
				\vliiin{\cutr}{}{ \tikzmarknode[rectangle , draw, very thick, minimum width=0.6cm, minimum height=0.6cm]{wna1}{\, \wn \cneg A}\quad \Gamma\quad   \Delta\quad }{
					\vldr{\der_1}{  \Gamma, A }
				}{\vlhy{}}{
					\vldr{\der_3}{  \cneg{A}\quad \tikzmarknode[rectangle,  , draw, very thick, minimum width=0.6cm, minimum  height=0.6cm]{wna2}{\,  \wn \cneg A} \quad  \Delta }
				}
			}
		}
	}
}
\vspace{0.2cm}
\begin{tikzpicture}[overlay, remember picture]
	\draw[-latex, very thick, double] ([shift=({0,0})]g1.160) to[bend left=0] node[below left, pos=0.4, xshift=1.5]{\scriptsize{$\vec{x}$}} ([shift=({0,0})]g2.300);
	\draw[-latex, very thick, double] ([shift=({0,0})]g2.east) to[bend left=0] node[above, pos=0.4, yshift=-0.5 ]{\scriptsize{$\vec{z}$}}   ([shift=({0,0})]a1.west);
	\draw[-latex, very thick, ] ([shift=({0,0})]a1.east) to[bend left=0] node[below, pos=0.5, yshift=1]{\scriptsize{$a$}}   ([shift=({0,0})]na1.west);
	\draw[-latex, very thick] ([shift=({0,0})]na1.40) to[bend left=0] node[below right, pos=0.85, xshift=-4]{\scriptsize{$b$}} ([shift=({0,0})]na2.south);
	\draw[-latex, very thick, ] ([shift=({0,0})]oa1.east) to[bend left=0] node[below, pos=0.5, yshift=1]{\scriptsize{$d$}}  ([shift=({0,0})]wna1.west);
		\draw[-latex, thick, dashed] ([shift=({0,0})]wg3.east) to[bend left=0] node[below, pos=0.5, yshift=1]{\scriptsize{$c$}}  ([shift=({0,0})]oa1.west);
	\draw[-latex, very thick, double] ([shift=({0,0})]wg1.100) to[bend left=0] node[left, pos=0.4]{\scriptsize{$\vec{u}\, $}}  ([shift=({0,0})]wg2.280);
	\draw[-latex, very thick, ] ([shift=({0,-0.1})]wna1.30) to[bend left=0] node[below right, pos=0.6, xshift=-3, yshift=2]{\scriptsize{$e$}}   ([shift=({0,0})]wna2.220);
	\draw[-latex, very thick, double] ([shift=({0,0})]wg2.160) to[bend left=0] node[below left, pos=0.4]{\scriptsize{$\vec{v} $}}  ([shift=({0,0})]wg3.300);
\end{tikzpicture}
\]
\caption{From top, a cut elimination step $\der \cutelim \der'$  reducing $\cutstep{\cprule}{\wnwrule}$ and $\cutstep{\cprule}{\wnbrule}$, and the corresponding exponential graphs (only the relevant nodes and edges  are displayed).   Double circles (resp., double edges) represent multiple nodes (resp., multiple edges), while squared nodes are nodes shared by  $\egraph{\der}$ and $\egraph{\der'}$.				Edges are labelled by letters, and vectors $\vec{x}=x_1, \ldots, x_n$ represent a list of labels, one for each edge. Finally, the dashed edge labelled by $c$ exists if $\der_2$ is a $\nwbox$, i.e., if the cut is steady.} 
\label{fig:exponential-flow-exponential-cut-elimination}
\end{figure*}
\end{proof}

\newcommand{\ebcut}[1]{\mathsf{bcut}_{\rrule}({#1})}
\newcommand{\ephi}[1]{\phi_{#1}}
\newcommand{\errule}{\rrule}
\newcommand{\bcut}[1]{\mathsf{bcut}({#1})}
\newcommand{\boxes}[1]{\mathsf{box}(#1)}
\newcommand{\bpath}[1]{\wnbrule({#1})}

\subsection{Termination theorem} 

Termination of the shallow cut elimination strategy is an immediate consequence of the Key Lemma (\Cref{lem:termination-key-lemma}).  To this end, we introduce a partial ordering on steady cuts, denoted $ \preceq_\der $, which  relates two  cuts $\rrule$ and $\rrule$ connected by an exponential flow $\phi$ in a coderivation $\der$. Intuitively,  the \emph{distance} of a cut $\rrule$ to the end node of $\phi$, called \emph{weight of $\rrule$}, is a measure of  the number of cut elimination steps required to fully reduce $\rrule$. Reducing a cut $\rrule$ strictly decreases such a distance for $\rrule$, while it might increase it for smaller cuts  $\rrule' \preceq_\der \rrule $. Termination for shallow cut elimination can be then proven by induction on a lexicographic ordering defined on $\preceq_\der$. The Key Lemma will also provide an estimation of  the number of calls of a shallow $\nwbox$ that become shallow after each round, which depends on  the rank of exponential flows and the fact that the latter cannot increase during cut elimination.

\newcommand{\wt}[1]{\mathsf{wt}(#1)}

\begin{definition}[Partial ordering and weight]
	Let $\der\in \nupll$. We define a partial ordering on steady cuts $\preceq_\der$ defined by $\rrule \preceq_\der \rrule'$ if only only if there is a directed path from the active formulas of $\rrule$ to those of $\rrule'$.  We set $\prec_\der$ as the strict version of $\preceq_\der$.
	
	Let $\rrule$ be a steady cut of $\der$. The  \defin{weight of  $\rrule$ (in $\der$)}, written $\wt{\rrule}$,   is the number of nodes  in $\phi_\rrule$ from the $\wn$-active formula of $\rrule$ to its end node.  
\end{definition} 

	The following properties are straightforward from the above definition, the definition of residue, and~\Cref{prop:invariance}.\ref{enum:invariance2}: 
	
	\begin{proposition}\label{prop:properties-preceq-weight}
	Let $\der\in \nupll$, and let $\rrule, \rrule'$ be steady cuts. Then:
	\begin{enumerate}
\item  \label{enum:properties-preceq-weight1} $\rrule \prec_\der \rrule'$  implies $\phi_\rrule= \phi_{\rrule'}$ and $\wt{\rrule}>\wt{\rrule'}$. 
\item \label{enum:properties-preceq-weight2} If  $\der \cutelim \der'$ is a cut elimination step reducing a steady cut $\rrule^* \not \in \{\rrule, \rrule'\}$ such that $\rrule^* \neq \cutstep{\cprule}{\wnwrule}$ then $\rescut$ and $\widehat{\rrule'}$ exist and $\rescut \preceq_\der \widehat{\rrule'}$.
	\end{enumerate}
	\end{proposition}

We can now prove the key lemma for the termination of shallow cut elimination strategies.

\begin{lemma}[Key Lemma]\label{lem:termination-key-lemma}
	Let $\der\in \nupll$ with $\oc$-free conclusion. Then, the shallow cut elimination strategy applied to $\der$ satisfies the following properties:
	\begin{enumerate}
		\item \label{enum:round1}     \textbf{Phase 1} terminates (i.e., $\mphase{{\derround {d-1}}}{{\derphase d}}$). In particular,  ${\base{\derround {d-1}}} \cutelims{\base{\derphase d}}$ for every $1 \leq d \leq \depth{\der}+1$.
		%	\item \label{enum:round2} $\ephase{\prun{\derphase d}{\rank{}{\derphase d}}}{{\derround d}}$.
		\item   \label{enum:round3}  \textbf{Phase 2}  terminates (i.e., $\ephase{\derphase d}{\derround d}$). In particular, ${\trunk{\derphase d}{\rank{}{\derphase d}}}\cutelims{\base{\derround d}}$   for every $1 \leq d \leq \depth{\der}+1$.
		\item  \label{enum:round4} If $\depth{\derround {d-1}}>0$ then 
		$\depth{\derround d}= \depth{\derround {d-1}}-1$.
		%	\item  \label{enum:round2trunk} {${\trunk{\derphase d}{\rank{}{\derphase d}}}\cutelims{\base{\derround d}}$}.
	\end{enumerate}
\end{lemma}
\begin{proof}
	\Cref{enum:round1} follows from	the fact that shallow cuts that are not steady  only affect $\base{\derround {d-1}}$, so that 
${\base{\derround {d-1}}}\cutelims{\base{\derphase d}}$  by~\Cref{thm:cut-elimApp}. 	
	
		Let us prove~\Cref{enum:round3}. First, we show that \textbf{Phase 2} terminates. Notice that after \textbf{Phase 1} all  shallow cuts  are steady, so $\derphase d$ contains only steady cuts. Let $\der_n$ be the $n$-th step of cut elimination of this phase (so $\der_0=\derphase d$). 	We prove the statement by induction on the lexicographical ordering $L(\der_n)\dfn(n^s, \wt{\rrule^n_1}, \ldots, \wt{\rrule^n_{n^s}})$ where:
		\begin{itemize}
			\item  $n^s$ is the number of steady cuts of $\der_n$ (except the new ones produced during this phase). 
			\item  	$\rrule^n_1\prec_{\der_n}^l\ldots   \prec^l_{\der_n}\rrule^n_{n^s} $ is a (strict) linear ordering that extends the strict ordering $\prec_{\der_n}$
		\end{itemize}
	Now, suppose that $\der_n \cutelim \der_{n+1}$ reduces a steady cut $\rrule^n_{i_0}$ for some $i_0$. We have two cases:
		\begin{itemize}
			\item 	If $\rrule^n_{i_0}= \cutstep{\cprule}{\wnwrule}$ then  ${(n+1)}^s<n^s$. This means that $L(\der_{n-1})<L(\der_n)$.
			\item Otherwise, all steady cuts have a steady residue. Since the  linear ordering is preserved by~\Cref{prop:properties-preceq-weight}.\ref{enum:properties-preceq-weight2}, we have $\widehat{\rrule^n_i}=\rrule_i^{n+1}$, and so 	${(n+1)}^s=n^s$.  Moreover, since $\wt{\rrule^{n}_i}>\wt{\rrule^n_{i+1}}$  by~\Cref{prop:properties-preceq-weight}.\ref{enum:properties-preceq-weight1}, we also have   $\wt{\rrule^{n+1}_i}>\wt{\rrule^{n+1}_{i+1}}$ for all $i$. 	By inspecting the  commuting and exponential cut elimination rules (see~\Cref{fig:cut-elim-finitary,fig:cut-elim-pll}), we notice that $\der_n \cutelim \der_{n+1}$   decreases the weight of the residue  of $\rrule^n_{i_0}$, i.e., $\wt{\rrule^{n+1}_{i_0}}=\wt{\widehat{\rrule^{n}_{i_0}}}<\wt{\rrule^n_{i_0}}$. Moreover, that reduction step can only increase the weight of the residue of steady cuts $\rrule^n_i \preceq_{\der_n}\rrule^n_{i_0}$. So, for any  $\rrule^n_i$ with $i \neq i_0$:
			\begin{itemize}
\item  if  $\rrule^n_i$ and $\rrule^{n}_{i_0}$ are incomparable w.r.t.  $\preceq_{\der_n}$, we have 
 $\wt{\rrule^{n+1}_{i}}=\wt{\widehat{\rrule^{n}_{i}}}=\wt{\rrule^n_{i}}$
 \item If   $\rrule^n_{i_0}\preceq_{\der_n}\rrule^n_{i}$,  we have again 
 $\wt{\rrule^{n+1}_{i}}=\wt{\widehat{\rrule^{n}_{i}}}=\wt{\rrule^n_{i}}$
 \item otherwise $\rrule^n_i\preceq_{\der_n}\rrule^n_{i_0}$, in which case $\wt{\rrule^{n+1}_i}= \wt{\widehat{\rrule^n_i}}\geq \wt{\rrule^n_i}$.
			\end{itemize}
			 This shows that $L(\der_{n-1})<L(\der_n)$.
		\end{itemize}
		Therefore, \textbf{Phase 2} terminates. We now show that ${\trunk{\derphase d}{\rank{}{\derphase d}}}\cutelims{\base{\derround d}}$,  for every $1 \leq d \leq \depth{\der}+1$. For every $n$, let  $\rrule^n_{i_0}$ be the steady cut reduced by $\der_n \cutelim \der_{n+1}$, and let $\phi_{\rrule^n_{i_0}}$ be the exponential flow of $\egraph{\der_n}$ that crosses $\rrule^n_{i_0}$. If $\rrule^n_{i_0}= \cutstep{\cprule}{\wnwrule}$ then $\rrule^n_{i_0}$ has no residue. Otherwise, by~\Cref{prop:invariance}.\ref{enum:invariance2} we have $\rank{}{\widehat{\phi_{\rrule^n_{i_0}}}}\leq \rank{}{\phi_{\rrule^n_{i_0}}}$. Moreover, for every other $\rrule^n_{i}$ such that $\phi_{\rrule^n_{i}}\neq \phi_{\rrule^n_{i_0}} $, we have $\rank{}{\widehat{\phi_{\rrule^n_{i_0}}}}= \rank{}{\phi_{\rrule^n_{i_0}}}$. Therefore, only the first $\rank{}{\derphase{d}}$ calls of every shallow $\nwbox$ in $\derphase{d}$ can become shallow at the end of the round. Since only shallow rules are affected by \textbf{Phase 2}, this implies ${\trunk{\derphase d}{\rank{}{\derphase d}}}\cutelims{\base{\derround d}}$.
		
	Let us finally show~\Cref{enum:round4}. Suppose  $\depth{\derround {d-1}}>0$. \textbf{Phase 1} does not affect the depth of $\derround{d-1}$, so $\depth{\derround {d-1}}=\depth{\derphase{d}}$. Let us consider \textbf{Phase 2}. Let $\nwpromotion_1, \ldots, \nwpromotion_k$ be the shallow $\nwboxes$ of $\derphase{d}$. Since we assumed $\depth{\derphase{d}}=\depth{\derround {d-1}}>0$, we have $k>0$. By hypothesis there are only steady cuts, and so all such $\nwboxes$ are steady by~\Cref{prop:advanced-properties-exp-flow}.\ref{enum:steady1}.  Since every steady cut is crossed by an exponential flow, and by~\Cref{prop:invariance}.\ref{enum:invariance1} every exponential flow ends at a $\wrule$-node, the weight of every steady cut during this phase will eventually decrease to $0$. This means that every steady $\nwbox$  will eventually be erased by reducing a $\cutstep{\cprule}{\wnwrule}$ cut. Therefore,  after \textbf{Phase 2} the depth decreases by $1$, i.e., $\depth{\derround d}=\depth{\derphase{d}}-1= \depth{\derround {d-1}}-1$.
\end{proof}

\begin{restatable}[Termination]{theorem}{TER}\label{prop:termination}
	Let $\der\in \nupll$ with $\oc$-free conclusion. The shallow cut elimination strategy applied to $\der$ terminates in a \emph{finite} number of steps into a cut-free~\emph{derivation}. 
\end{restatable}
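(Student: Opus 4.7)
The plan is to deduce the theorem from the three items of the preceding lemma via a simple induction on the number of rounds. By item~\ref{enum:round3} of that lemma, each round of the strategy either leaves the coderivation already at depth zero or strictly decreases its depth by one. Hence after at most $\depth{\der}+1$ rounds we reach a coderivation $\derround{\depth{\der}+1}$ of depth zero, which coincides with its own base $\base{\derround{\depth{\der}+1}}$ and is therefore a finite open derivation.

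Next, I would argue that each individual round is executed in finitely many cut-elimination steps. Phase~1 only operates on shallow non-bordered cuts, which live entirely inside the finite open derivation $\base{\derround{d-1}}$, so \Cref{thm:cut-elimApp} guarantees termination in polynomially many steps, yielding $\mphase{\derround{d-1}}{\derphase d}$. Phase~2, which hereditarily reduces shallow bordered cuts other than $\cutstep{\cprule}{\cprule}$, is more subtle, because firing a $\cutstep{\cprule}{\wnbrule}$ step may expose fresh cuts inside the calls of the underlying $\nwprule$. The key is that each such shallow bordered cut is crossed by a balanced exponential flow in $\egraph{\derphase d}$: by \Cref{prop:invariance} a balanced flow has at most one balanced residue along any non-$\cutstep{\cprule}{\cprule}$ exponential reduction, and by \Cref{prop:rank-decreases} the $\brule$-rank of such a residue never grows. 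Since every $\cutstep{\cprule}{\wnbrule}$ step strictly consumes a $\brule$-node along its flow and $\rank{}{\derphase d}$ is finite by \Cref{rem:finite-exp-flow}, Phase~2 halts after finitely many steps.

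It remains to show that the final derivation is cut-free. At depth zero there are no $\nwprule$s, hence no $\cprule$ rules and no bordered cuts, so every surviving cut is shallow and non-bordered; Phase~1 of the last round therefore erases all of them by \Cref{thm:cut-elimApp}. The fact that the conclusion of $\der$ is $\oc$-free is what prevents the whole coderivation from being one big $\nwbox$ and guarantees that ``depth zero'' really does mean ``no exponential structure left''.

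The main obstacle in this plan is Phase~2: one cannot just invoke the cubic bound on a finite piece of the coderivation, because reducing a $\cutstep{\cprule}{\wnbrule}$ step pulls in material from the next call of a $\nwprule$ and can create further bordered cuts. The required finiteness is genuinely a global property of the exponential graph, and this is precisely what the balanced-flow analysis provided by \Cref{prop:invariance,prop:rank-decreases} is designed to control. Once that is in place, the rest is a straightforward induction on $\depth{\der}$ combined with the bookkeeping of \Cref{thm:cut-elimApp}.
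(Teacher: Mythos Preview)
Your overall plan is right and matches the paper's: use the technical lemma to show each round terminates and that depth strictly drops, then conclude finiteness and cut-freeness. The gap is in your Phase~2 argument. You announce you will use the three items of the lemma, but for Phase~2 you ignore item~\ref{enum:round2trunk} and instead try to re-derive termination directly from \Cref{prop:invariance,prop:rank-decreases}. That re-derivation has an incorrect step: you claim ``every $\cutstep{\cprule}{\wnbrule}$ step strictly consumes a $\brule$-node along its flow'', but \Cref{prop:rank-decreases} only gives $\rank{\der'}{\res}\leq\rank{\der}{\phi}$, and the proof sketch explicitly exhibits a case where the rank is preserved (the new $\wnbrule$ rules created on the $\wn\Gamma$ side compensate for the one consumed on the $\wn\cneg A$ side). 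What actually strictly decreases is the refined quantity $\measure{\brule}{\phi[\oc A_i]}$, i.e.\ the number of $\brule$-nodes \emph{after} the principal $\oc$-formula of $\nwpromotion_i$; this is the content hidden inside the proof of item~\ref{enum:round2trunk}. The paper's proof simply invokes item~\ref{enum:round2trunk}: since $\trunk{\derphase d}{\rank{}{\derphase d}}$ is a finite open derivation and rewrites to $\base{\derround d}$, \Cref{thm:cut-elimApp} bounds the length of Phase~2, so $\ephase{\derphase d}{\derround d}$.

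Two minor points. First, ``no $\nwprule$s $\Rightarrow$ no $\cprule$ rules'' is false (a finite chain of $\cprule$'s ending in an axiom on $\wn\cneg A,\oc A$ is not a $\nwbox$); what you need, and what does follow, is ``no bordered cuts''. Second, for the final step the paper does not argue via ``coincides with its base'': it uses item~\ref{enum:round3} to get $\depth{\derround{\depth{\der}}}=0$, whence no $\oc$-formulas occur in any sequent, so by \Cref{prop:oc-free-derivation} the coderivation is a finite derivation, and then \Cref{thm:cut-elimApp}.\ref{enum:cubic3} makes the last round output cut-free.
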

\begin{proof}
	Termination is a consequence of~\Cref{lem:termination-key-lemma}, as by~\Cref{lem:finite-depth} the depth of $\der$ is finite and reduces at every round. In particular, by~\Cref{lem:termination-key-lemma}.\ref{enum:round4} we have
	 $\depth{\derround {\depth{\der}}}=0$ and so $\derround {\depth{\der}+1}$
	  is  cut-free and $\nwbox$-free.  Hence, by~\Cref{rem:finite-derivation},   $\derround {\depth{\der}+1}= \base{\derround {\depth{\der}+1}}$ is finite, i.e.,  $\derround {\depth{\der}+1}$ is a derivation.
\end{proof}

\subsection{Polynomial modulus of continuity}\label{subsec:poly-mod-cont}

In this final subsection, we analyse the complexity of the shallow cut elimination strategy, leveraging the estimations provided by the Key Lemma (\Cref{lem:termination-key-lemma}). From the polynomial modulus of continuity result (\Cref{prop:polynomial-moduli}), we will infer soundness for our non-wellfounded proof systems (\Cref{thm:soundness}).

We start with introducing a notion of (finite) size for coderivations in $\nupll$, called \emph{cosize},  relying on~\Cref{lem:finite-depth}.

\begin{definition}[Cosize]
	Let $\der\in \nupll$, and  $\bord{\der}=\{v_1, \ldots, v_k\}$ be its decomposition prebar (thus $\nwpromotion_i \coloneqq \der_{v_i}$ is a $\nwprule$ for all $ 1 \leq i \leq k$).  We define the \defin{cosize of $\der$}, written $\isize{\der}$,  by induction on    $\depth{\der}$. 
	If $\depth \der=0$ then $\der=\base{\der}$  and we set $\isize \der \dfn \size \der$. Otherwise $\depth \der >0$, and  
	%	\vspace{-0.1cm}
	$
	\isize{\der} \dfn \size{\base\der }+ \sum_{i= 1}^k \sum_{\der' \in \supportof{\nwpromotion_i}%{ \nwpromotion_i(0), \nwpromotion_i(1), \ldots\}
		}\isize{\der'} 
		$.
		
		The \defin{cosize at depth $d$}, written $\isize{\der}_d$, is defined for all $d \leq \depth \der$
		as
		%	\begin{itemize}
			%		\item  
			$\isize{\der}_0 =  \size{\base{\der}}$,   
			%		\mcor{
				%			\item
				%			$\isize{\der}_{d+1} = \max_{i= 1}^k \max_{\der' \in \{ \nwpromotion_i(0), \nwpromotion_i(1), \ldots\}} \isize{\der'}_d$  
				%		}
			%	{
				%			\item
				and
				as $\isize{\der}_{d+1} = \max\set{\isize{\der'}_d\mid \der'\in
					\supportof{\nwpromotion_i} \mbox{ for some } 1 \leq i\leq k} 
				$.
				%		}
			%	\end{itemize}
		%We also define the \defin{corank of $\der$}, written  $\irank{\der}$, by inductionn on the depth of $\der$. 
		%\todo[inline]{Finire}
		
		%\defin{kernel of $\der$}, written $\kernel \der $, and  the \defin{kernel of $\nubox$}, written $\kernel \nubox $, by induction on their depth $d$:
		%\begin{itemize}
		%    \item $\kernel \nubox\dfn \bigcup_{j=1}^\infty \kernel{\nubox(i)}$
		%    \item $\kernel {\der} \dfn \{\rrule \ \vert \ \rrule \in \base{\der} \} \cup \bigcup_{i=1}^k \kernel {\nubox_i}$.
		%\end{itemize}
		%which is finite by definition. The \defin{size of $\der$}, written $\isize{\der}$, is defined as $\size{\kernel{\der}}$.
	\end{definition}

Notice that, by~\Cref{prop:weak-regular-finite-support},   
{$\supportof{\nwpromotion_i}$}
is a \emph{finite} set for any $i\in\intset1k$, and so $\isize{\der}_d \leq \isize{\der}\in \mathbb{N}$. 

	\newcommand{\form}[1]{\mathsf{form}(#1)}
\newcommand{\expfl}[1]{\mathsf{expfl}(#1)}

The following relation between the size of the $n$-truncation of $\der$, i.e.,  $\prun \der n$ (\Cref{defn:trunc}), and the cosize of $\der$ holds.

\begin{proposition}\label{prop:bound-pruning} 	
	Let $\der\in \nupll$. 
	Then 
	$$\size{\prun \der n}\in \mathcal{O}(n^{\depth{\der}+1}\cdot \isize{\der}^{\depth{\der}+1})
	\qquad.
	$$
\end{proposition}
\begin{proof}
	Let $\nwpromotion_i= \der_{v_i}$ with $v_i\in\bord \der$.
	If $\depth{\der}=0$, then $\size{\prun \der n}= \size{\der}= \isize \der$. 
	If $\depth{\der}=d+1$, then $\size{\prun{\der}n} \dfn \size{\base{\der} }+ \sum_{i= 1}^k \sum_{j=0}^n \size{\prun{\nwpromotion_i(j)}n}$ by definition.  
	By the induction hypothesis  $\size{\prun {\nwpromotion_i(j)} n}\in \mathcal{O}(n^{d}\cdot \isize{\nwpromotion_i(j)}^d)$, hence $\size{\prun {\nwpromotion_i(j)} n}\in \mathcal{O}(n^{d}\cdot \isize{\der}^d)$. 
	Since $k\leq \isize{\der}$, then we have $\size{\prun \der n}\in \mathcal{O}(\isize{\der}+ n \cdot \isize{\der}\cdot n^d\cdot \isize{\der}^d)$, 
	and we conclude that 
	$\size{\prun \der n}\in \mathcal{O}(n^{d+1}\cdot \isize{\der}^{d+1})$.
\end{proof}

We now show  that shallow cut elimination requires a number of steps that can be polynomially bounded w.r.t.~the {cosize} of the starting coderivation. First, we need a preliminary lemma.

\begin{lemma}\label{lem:pre-computing-modulus}
	Let $\der \in \nupll$ be a coderivation of a  $\oc$-free sequent.  Then, 
	$
	\isize{\derround d}_0 \in \mathcal{O}\left( \prod_{i=0}^{d} \left(\isize{\derround 0}_i \right)^{ 2^{d+1-i}}\right)
	$
	for all  $0 \leq d \leq \depth{\der}+1$.

	\end{lemma}
\begin{proof}
	 First, we notice  that, since $\der$ is weakly regular and progressing, there is a bound $s^*\geq 0$ on the maximum number of $\wn$-formulas in the conclusion of a $\cprule$ rule of $\der$, i.e.,  $\mathcal{S}(\der)$ (see~\Cref{thm:cut-elimApp}). Moreover, by~\Cref{prop:cut-elim-preserves-finexp-reg-weakreg}, we can assume that $s^* \geq 0$ bounds $\mathcal{S}(\derphase d)$ and $\mathcal{S}(\derround d)$. Hence $\mathcal{S}(\derphase d)$ and $\mathcal{S}(\derround d)$ will be considered as \emph{constants} throughout this proof.
	 
 We prove the statement by induction on $0 \leq d \leq \depth{\der}+1$. The case $d=0$ is trivial. 	If $d>0$ then,   by~\Cref{lem:termination-key-lemma}.\ref{enum:round3},   we have ${\trunk{\derphase {d}}{\rank{}{\derphase {d}}}}\cutelims{\base{\derround {d+1}}}$. Then by~\Cref{thm:cut-elimApp}.\ref{enum:cubic2}:
\begin{equation}\label{eqn:a}
	\isize{\derround{d}}_0 =\size{\base{\derround{d}}}
	\in
	\mathcal{O}( \mathcal{S}(\trunk{\derphase {d}}{\rank{}{\derphase {d}}}) \cdot  \size{\trunk{\derphase {d}}{\rank{}{\derphase {d}}}})
	=
	\mathcal{O}(  \size{\trunk{\derphase {d}}{\rank{}{\derphase {d}}}})
\end{equation}

 % where $s\dfn \mathcal{S}(\trunk{\derphase {d}}{\rank{}{\derphase {d}}})$ is the maximum number of $\wn$-formulas in the conclusion of a $\cprule$ of $\trunk{\derphase {d}}{\rank{}{\derphase {d}}}$.
Moreover, if $\nwpromotion_1, \ldots, \nwpromotion_n$ are the shallow  $\nwprule$s of $\derphase{d}$, since $n, \rank{}{\derphase {d}}\leq \isize{\derphase {d}}_0$ and $\base{\nwpromotion_i(j)}\leq \isize{\derphase{d}}_1$, then
	\begin{equation}\label{eqn:b}
		\begin{array}{rcl}
	\size{\trunk{\derphase {d}}{\rank{}{\derphase {d}}}}&=&\size{\base{\derphase {d}}}+ \sum_{i=0}^{n} \sum_{j=0}^{\rank{}{\derphase{d+1}}} \size{\base{\nwpromotion_i(j)}}
\\
&\in&
\mathcal{O}\left(\isize{\derphase{d}}_0 + \left( \isize{\derphase{d}}_0 \right)^2\cdot  \isize{\derphase{d}}_1\right)
\\
&=&\mathcal{O}\left(\left(\isize{\derphase{d}}_0 \right)^2\cdot  \isize{\derphase{d}}_1\right)
		\end{array}
	\end{equation}

 % \vspace{0.2cm}
 %$$\adjustbox{max width=.45\textwidth}{$
 	%	%		\def\arraystretch{1.5}
 	%	\begin{array}{rcl}
 		%		\isize{\derround{d+1}}_0 &\in & \mathcal{O}\left(\isize{\derphase{d}}_0 + \sum_{i=0}^{k_{d}} \sum_{j=0}^{\rank{}{\derphase{d}}} \isize{\nwpromotion^{d}_i(j)}_{0}\right)\\
 		%		&= &\mathcal{O}\left(\isize{\derphase{d}}_0 + \sum_{i=0}^{k_{d}} + \rank{}{\derphase{d}}\cdot  \isize{\derphase{d}}_1\right)\\
 		%		&= &\mathcal{O}\left(\isize{\derphase{d}}_0 + \sum_{i=0}^{k_{d}} + \isize{\derphase{d}}_0 \cdot  \isize{\derphase{d}}_1\right)\\
 		%		&= &\mathcal{O}\left(\isize{\derphase{d}}_0 +  \isize{\derphase{d}}_0 ^2\cdot  \isize{\derphase{d}}_1\right)\\
 		%		&= &\mathcal{O}\left( \isize{\derphase{d}}_0 ^2\cdot  \isize{\derphase{d}}_1\right)
 		%	\end{array}
 	%	$}$$
 %since $k_{d}, \rank{}{\der^{(d)}} \leq \isize{\der^{(d)}}_0$ and $\isize{\nwpromotion^{d}_i(j)}_{0}\leq \isize{\der^{(d)}}_{1}$. 
 
On the other hand:
\begin{equation}\label{eqn:c}
	\begin{array}{rcll}
		\isize{\derphase d}_0&=&\size{\base{\derphase d}}\in  \mathcal{O}(\mathcal{S}(\base{\derround {d-1}} \cdot\base{\derround {d-1}}) &\mathrm{Prop.}~\ref{thm:cut-elimApp}.\ref{enum:cubic2}
		\\
		&=&\mathcal{O}(\mathcal{S}(\base{\derround {d-1}} \cdot\isize{\derround {d-1}}_0)\\
		&=&\mathcal{O}(\isize{\derround {d-1}}_0)
	\end{array}
\end{equation}

 Finally, we notice that:
 \begin{equation}\label{eqn:d}
 \isize{\derphase d}_1= \isize{\derround {d-1}}_1= \isize{\derround 0}_{d}
 \end{equation}
  as by~\Cref{prop:termination}.\ref{enum:round3} the rules of $\der^{0}$ with nesting level $d$ are unaffected  in the first $d-1$ rounds of cut elimination, and by~\Cref{prop:termination}.\ref{enum:round4}   each round decreases the depth. 
 Then, by inductive hypothesis we have:
\begin{equation*}
		\begin{array}{rcll}
			\isize{\der^{d}}_0& \in&
			\mathcal{O}\left(\left( \isize{\derphase d}_0 \right)^2\cdot  \isize{\derphase d}_1\right) 
			&\mathrm{Eq.}~\ref{eqn:a}\textrm{-}\ref{eqn:b}
			\\
			&=&
			\mathcal{O}\left( \left(\isize{\derphase d}_0 \right)^2\cdot  \isize{\derround 0}_{d}\right) &\mathrm{Eq.}~\ref{eqn:d}
		\\
	&		=&
			\mathcal{O}\left(\left(\isize{\derround{d-1}}_0\right)^2\cdot \isize{\derround 0}_{d}\right)&\mathrm{Eq.}~\ref{eqn:c}
\\
		&	=&
			\mathcal{O}\left(\left( \prod_{i=0}^{d-1} \left(\isize{\derround 0}_i \right)^{ 2^{d-i}} \right)^2 \cdot \isize{\derround 0}_{d}\right)
			&\textrm{Induction hypothesis}
			\\
		&	= &
			\mathcal{O}\left( \prod_{i=0}^{d-1} \left(\isize{\derround 0}_i \right)^{ 2^{d+1-i}} \cdot \isize{\derround 0}_{d}\right)	
			\\
			&=&
			\mathcal{O}\left( \prod_{i=0}^{d} \left(\isize{\derround 0}_i \right)^{ 2^{d+1-i}}\right)	
			\;.
		\end{array}
	\end{equation*}
	\qedhere
\end{proof}

\begin{restatable}[Polynomial modulus of continuity]{lemma}{MOD}\label{prop:polynomial-moduli}
	Let $\der \in \nupll$ be a coderivation of a  $\oc$-free sequent. Then, for some polynomial $p:\Nset \to \Nset$    depending solely on $\depth{\der}$, $\prun \der {p(\isize{\der})}$ rewrites {by the shallow cut elimination strategy}  to a cut-free $\zero$-free {derivation}.
\end{restatable}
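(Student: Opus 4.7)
The plan is to choose a polynomial $p$ dominating, via Lemma~\ref{lem:pre-computing-modulus}, the cosize at depth~$0$ at every round, and then to show by induction on rounds that the shallow cut-elimination strategy on $\prun{\der}{p(\isize{\der})}$ simulates the strategy on $\der$ without ever exposing a $\zero$-rule. Set $D \coloneqq \depth{\der}$: Lemma~\ref{lem:pre-computing-modulus} gives a polynomial $q$, of degree depending only on $D$, with $\isize{\derround d}_0 \le q(\isize{\der})$ for all $0 \le d \le D+1$. Since $\rank{}{\derphase d} \le \mathsf{S}(\derphase d) \cdot \isize{\derphase d}_0$ and $\mathsf{S}(\derphase d)$ is bounded by $\mathsf{S}(\der)$ (a constant in $\isize{\der}$ by weak regularity), I would let $p$ be a polynomial dominating $q$ up to a constant slack and set $N \coloneqq p(\isize{\der})$.

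The core argument is a round-by-round simulation. I would prove, by induction on $d \le D+1$, that starting from $\prun{\der}{N}$ the strategy reaches a finite approximation $\widetilde{\der}^{\,d}$ with $\base{\widetilde{\der}^{\,d}} = \base{\derround d}$, in which no $\zero$-rule has been activated or exposed. The inductive step is unlocked by Theorem~\ref{prop:termination}(\ref{enum:round2trunk}), which yields $\trunk{\derphase d}{\rank{}{\derphase d}} \cutelims \base{\derround d}$: only the first $\rank{}{\derphase d} \le N$ calls of each shallow $\nwprule$ are consumed in round $d$, so the truncation boundary (and its $\zero$-rules) is never reached. Since $\prun{\cdot}{N}$ truncates recursively at the same $N$ at every depth, the material promoted to nesting level $0$ for round $d+1$ is itself truncated at $N$, which again dominates $\rank{}{\derphase{d+1}}$ by the bound from Lemma~\ref{lem:pre-computing-modulus}.

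After $D+1$ rounds the strategy halts at a cut-free derivation by Theorem~\ref{prop:termination}, and $\zero$-freeness follows from the observation that the tail of each truncated $\nwprule$ — the part containing the $\zero$-rules — is wiped out together with the $\nwprule$ itself during the $\cutstep{\cprule}{\wnwrule}$ step that discharges it; such a step must occur because the conclusion is $\oc$-free, forcing every shallow $\oc$-formula to ultimately be weakened. The main obstacle is establishing the invariant that no $\cprule$-pop ever reaches the $N$-th call of any $\nwprule$. I would handle this by tracking, via Propositions~\ref{prop:rank-decreases} and~\ref{prop:invariance}, that each balanced exponential flow crossing a shallow bordered cut loses at most $\rank{}{\phi} \le N$ $\brule$-residues along Phase~2, which gives the required strict inequality and thereby confines the activity of the strategy well above the truncation boundary at every depth.
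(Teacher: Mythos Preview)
Your proposal is correct and follows essentially the same approach as the paper: both arguments use Lemma~\ref{lem:pre-computing-modulus} to bound $\isize{\derround d}_0$ polynomially in $\isize{\der}$, observe that $\rank{}{\derphase d}$ is bounded by $\size{\base{\derphase d}}$ (hence by the same polynomial), and then invoke Theorem~\ref{prop:termination}.\ref{enum:round1}--\ref{enum:round2trunk} to simulate each round within the $N$-truncation. Your write-up is more explicit about the invariant that no $\zero$-rule is ever exposed (the paper compresses this into the single chain $\prun{\derround{d-1}}{N}\cutelims\prun{\derphase d}{N}\cutelims\prun{\derround d}{N}$), and your final appeal to Propositions~\ref{prop:rank-decreases} and~\ref{prop:invariance} is already absorbed into the proof of Theorem~\ref{prop:termination}.\ref{enum:round2trunk}, so you need not re-invoke them here.
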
	
\begin{proof}
	By~\Cref{lem:pre-computing-modulus} 	we have:
	$$
	\begin{array}{rcl}
		\isize{\der^d}_0\in  &
		\mathcal{O}\left(  \prod_{i=0}^{d} \isize{\derround 0}^{ 2^{d+1-i}}_i\right)	
		& 	\text{\Cref{lem:pre-computing-modulus}}
		\\
		=&	\mathcal{O}\left(  \prod_{i=0}^{\depth{\der}} \isize{\derround 0}^{ 2^{\depth{\der}+1-i}}_i\right)	
		&\text{\Cref{lem:depth}}
		\\=&
		\mathcal{O}\left(  \isize{\derround 0}^{\depth{\der}\cdot 2^{\depth{\der}+1}}\right)	
		\\=&
		\mathcal{O}\left(  \isize{\der}^{\depth{\der}\cdot 2^{\depth{\der}+1}}\right)	
		%			\mathcal{O}\left( \prod_{i=0}^{\depth{\der}} \isize{\der}^{3^{\depth{\der}+1-i}\cdot2^{\depth{\der}-i}}_i\right)&\text{\Cref{lem:depth}}
		%			\\=&
		%			\mathcal{O}\left(   \isize{\der}^{\depth{\der}\cdot 3^{\depth{\der}+1-i}\cdot2^{\depth{\der}-i}}\right)
	\end{array}
	$$
	Hence,  since~\Cref{thm:cut-elimApp}.\ref{enum:cubic2} implies  $\rank{}{\derphase d}\leq\size{\base{\derphase d}}\in \mathcal{O}( \size{\base{\derround d}})=\mathcal{O}(\isize{\derround d}_0)$, by~\Cref{lem:termination-key-lemma}.\ref{enum:round1}-\ref{enum:round4} there is some $k>0$ depending solely on $\depth{\der}$ and a constant $c>0$ such that:
	$${\prun{\derround {d-1}}{c\cdot\isize{\der}^{k}}}\cutelims{ {\prun{\derphase d}{c\cdot \isize{\der}^{k}}}\cutelims{  \prun{\derround d}{c\cdot \isize{\der}^{k}} }}$$ 
	for any $0 \leq d \leq\depth{\der}+1$.  
	This means that $\prun \der {c\cdot \isize{\der}^{k}}\cutelims \prun{\derround{\depth{\der}+1}}{c\cdot \isize{\der}^{k}} $. But  
	$\prun{\derround{\depth{\der}+1}}{c\cdot \isize{\der}^{k}} =
	\derround{\depth{\der}+1}$ by~\Cref{prop:termination}. Therefore,  we have that $\prun \der {c\cdot \isize{\der}^{k}}$  rewrites {by the shallow cut elimination strategy}  to a cut-free $\zero$-free {derivation}.\qedhere
\end{proof}

From the polynomial modulus of continuity on cut elimination we obtain our  soundness theorems for $\nupll$ and $\cpll$.
	
\begin{theorem}[Soundness] \label{thm:soundness} Let $f:(\{\false,\true\}^*)^n\to \{\false,\true\}^*$:
	\begin{enumerate}
		\item \label{enum:soundness1}	If $f$ is representable in $\nupll$  then $f \in \fppoly$; 
		\item  \label{enum:soundness2}	If $f$ is representable in $\cpll$  then $f \in \fptime$.
	\end{enumerate}
\end{theorem}
\begin{proof}
	We  only show the case where $f$ is unary for the sake of simplicity. 	Let $\cod{f}\in \nupll$  represent $f$, and let us consider the following coderivation, with $s=b_1, \ldots, b_n \in \{\false,\true\}^*$:
	\[
	\small
	\begin{array}{rcl}
		\der_{f(s)} &\dfn & \vlderivation{
			\vliin{\er \limp}{}{\String}{\vldr{\cod{f}}{ {\String[]}\limp \String}}{\vldr{\cod{s}}{\String[]}}
		}
	\end{array}
	\]
	By~\Cref{prop:polynomial-moduli}   there are $\der_0, \der_1,\ldots, \der_m$ such that: 
	\[
	\prun {\der_{f(s)}} {c \cdot \isize{\der_{f(s)}}^k}=\der_0 \cutelim\der_1 \cutelim \ldots \cutelim \der_m=\cod{f(s)}
	\]
	for some constant $c>0$, and  for some $k>0$ depending solely on $\depth{\der_{f(s)}}= \depth{\cod{f}}$ (since $\depth{\cod{s}}=0$). In particular, $\isize{\der_{f(s)}}\in \mathcal{O}(\isize{\cod{s}})= \mathcal{O}(\size{\cod{s}})=\mathcal{O}(\size{s})$, where $\size{s}$ is the size of the string $s$. So, we have:
	\[
	\prun {\der_{f(s)}} {c \cdot \size{s}^k}=\der_0 \cutelim\der_1 \cutelim \ldots \cutelim \der_m=\cod{f(s)}
	\]
	for some constant $c >0$ and some $k>0$ depending solely on $\depth{\cod{f}}$. Moreover:
	\begin{itemize}
		\item 	By~\Cref{prop:bound-pruning} we have 
		\[
		\def\arraystretch{1.2}
		\arraycolsep=2pt
		\begin{array}{rcl}
			\size{\prun{\der_{f(s)}}{c \cdot \size{s}^k}}&\in& \mathcal{O}(\size{s}^{k\cdot  \depth{\der_{f(s)}}+1}\cdot \isize{\der_{f(s)}}^{\depth{\der_{f(s)}}+1})\\
			&=&
			\mathcal{O}(\size{s}^{k \cdot \depth{\der_{f}}+1}\cdot \size{s}^{\depth{\der_{f}}+1})\\
			&=& 	\mathcal{O}(\size{s}^{k \cdot \depth{\cod{f}}+1}\cdot \size{s}^{\depth{\cod{f}}+1})
			= \mathcal{O}(\size{s}^{h})
		\end{array}
		\]
		for some $h >0$ depending solely on $\depth{\cod{f}}$.
		\item By~\Cref{thm:cut-elimApp}, we have     $m\in \mathcal{O}(|s|^{3h})$ and $\size{\der_i}\in \mathcal{O}(|s|^{h})$. 
	\end{itemize}
	%		\hspace{2cm}
	%		
	%		By~\Cref{prop:polynomial-moduli}   there are $\der_0, \der_1,\ldots, \der_m$ such that: 
	%		\[
	%		\prun {\der_{f(s)}} {\mathcal{O}({r_s}^{d_s})}=\der_0 \cutelim\der_1 \cutelim \ldots \cutelim \der_m=\cod{f(s)}
	%		\]
	%		where $r_s= \rank{}{\der_{f(s)}}$ and $d_s=\depth{ \der_{f(s)}}$.
	%		%$f(\prun {\der_{f(s)}} {\mathcal{O}({r_s}^{d_s})})= f(\der_{f(s)})$ (which means, in particular, that $\der_{f(s)}$ is finite).  
	%		Moreover, we have:
	%		\[
	%		\begin{array}{rcll}
		%			\prun {\der_{f(s)}} {\mathcal{O}({r_s}^{d_s})} &=& \mathcal{O}(r_s^{d_s\cdot (d_s+1)}\cdot \isize{\der_{f(s)}}^{d_s+1})& \text{by~\Cref{prop:bound-pruning}}\\
		%			&=& \mathcal{O}(r_s^{d_s^2+2d_s+1}) &\text{since }\isize{\der_{f(s)}}=\mathcal{O}(r_s)
		%		\end{array}
	%		\]
	%		%
	%		% By~\Cref{prop:bound-pruning},  $
	%		%\prun {\der_{f(s)}} {\mathcal{O}({r_s}^{d_s})} = \mathcal{O}(r_s^{d_s\cdot (d_s+1)}\cdot \isize{\der_{f(s)}}^{d_s+1})$. 
	%		%
	%		%Since $\isize{\der_{f(s)}}=\mathcal{O}(r_s)$ then $\prun {\der_{f(s)}} {\mathcal{O}({r_s}^{d_s})} = \mathcal{O}(r_s^{d_s^2+2d_s+1})$. 
	%		Finally, since  $r_s= \mathcal{O}(|s|)$  and $d_s=\mathcal{O}(1)$ we conclude that, for some $k>0$ (not depending on $s$) there is  $\der_{|s|}\in \mathcal{O}(|s|^k)$ such that:
	%		\[
	%		\der_{|s|}=\der_0 \cutelim\der_1 \cutelim \ldots \cutelim \der_m=\cod{f(s)}
	%		\]
	%		where by~\Cref{prop:cubic-cut-eliminaton-compact}, we have    both $m\in \mathcal{O}(|s|^{3k})$ and $\size{\der_i}\in \mathcal{O}(|s|^{3k})$. 
	This means that we can construct a polysize family of circuits $\mathcal{C}=(C_n)_{n \geq 0}$ such that, for any $n \geq 0$,  on input $s=b_1, \ldots, b_n \in \{\false,\true\}^*$, $C_n(s)$ evaluates $\der_{f(s)}$ to $\cod{f(s)}$ and returns $f(s)$.  Therefore, $f \in  \fppoly$. Suppose now that $f$ is representable in $\cpll$. Then $\cod{f}$ is regular, and so the function $n \mapsto C_n$ can be constructed uniformly by a  Turing machine. Moreover, it is easy to see that this Turing machine works in polynomial time (actually even in logarithmic space). Therefore, $f \in \fptime$. 
\end{proof}

\section{Completeness}\label{subsec:completeness}

\newcommand{\streamterm}[1]{\mathbf{#1}}
\newcommand{\streamtype}[1]{\omega{#1}}
\newcommand{\hd}[1]{{hd}{(#1)}}
\newcommand{\tail}[1]{{tl}{(#1)}}
\def\betar{\to_\beta}
\newcommand{\sub}{\mathsf{sub}}
\newcommand{\tier}[1]{\mathsf{tier}(#1)}
\newcommand{\detier}{\mathsf{dec}}
\newcommand{\contr}{\mathsf{contr}}
\newcommand{\wk}{\mathsf{wk}}
\newcommand{\letbang}[4]{\mathsf{let\ }{#2}\! ::\!  \oc {#3}={#1} \mathsf{\ in \ }{#4}}
\newcommand{\letn}[5]{\mathsf{let\ }{#2}\! ::\! \ldots  \! ::\!  {#3} \! ::\!  \oc {#4}={#1} \mathsf{\ in \ }{#5}}
\newcommand{\letlist}[3]{\mathsf{let\ }{#2}={#1} \mathsf{\ in \ }{#3}}

\begin{figure*}[t!]\label{fig:type-systems}
\[
		\vlinf{\axr}{}{x: A \vdash x: A}{}
		\quad 	
		\vlinf{\ir \limp }{}{\Gamma \vdash \lambda x. M: \sigma  \limp B}{\Gamma , x:\sigma \vdash M:B}
		\quad
		\vliinf{\er \limp}{}{\Gamma, \Delta \vdash MN : B}{\Gamma \vdash M: \sigma \limp B}{\Delta \vdash N: \sigma }
\]
\medskip
\[
		\vliinf{\ir \otimes}{}{\Gamma, \Delta \vdash M \otimes N : \sigma  \otimes \tau}{\Gamma \vdash M: \sigma}{\Delta \vdash  N: \tau} 
		\quad
		\vliinf{\er \otimes }{}{\Gamma, \Delta   \vdash \lettensor{M}{x}{y}{P}: C}{\Gamma \vdash M : \sigma \otimes \tau}{\Delta, x: \sigma, y: \tau \vdash P: C}
	\]
	\medskip
	\[
		\vlinf{\ir \forall}{}{\Gamma \vdash M: \forall X. A}{\Gamma \vdash M:A} 
		\quad 
		\vlinf{\er \forall}{(\star)}{\Gamma \vdash M: A[B/X]}{\Gamma\vdash M:  \forall X. A}
		\quad
		\vlinf{\ir \identity}{}{\vdash \identity: \unit}{}
		\quad 
		\vliinf{\er\identity}{}{\Gamma, \Delta \vdash	\letid{N}{M} : C}{\Gamma \vdash N: \unit}{\Delta \vdash M: C}
		\]
		\medskip
		\[
		\vlinf{\fprule}{}{\oc \Gamma \vdash   M: \oc  \sigma }{ \Gamma \vdash M : \sigma }
		\quad
		\vlinf{\wnwrule}{}{\Gamma, x: \oc \sigma    \vdash   M :  \tau}{\Gamma \vdash M: \tau}
		\quad
		\vlinf{\wnbrule}{}{\Gamma, x: \oc \sigma    \vdash M[x/y, x/z] :  \tau}{\Gamma, y:  \sigma,  z: \oc\sigma \vdash M: \tau}
	\]
	\medskip
	\[
		\vliiiiinf{\stream}{
			\text{\scriptsize{$\set{\streamterm{M}(i)\mid i\in\Nset}$ is finite}}
		}{\vdash \streamterm{M}: \streamtype \sigma}{\vdash  \streamterm{M}:(0): \sigma}{\vdash  \streamterm{M}:(1): \sigma}{\ldots}{\vdash  \streamterm{M}:(n):  \sigma}{\ldots}
\]
\medskip
\[
		\vlinf{\erasebool}{}{\vdash \erasebool:  \streamtype \sigma \limp \unit}{}
		\quad 
		\vlinf{\popbool}{}{\vdash \popbool: \streamtype \sigma   \limp \sigma  \otimes \streamtype \sigma }{}
\]
	\caption{Typing rules for system $\typestream$ with $(\star)\coloneqq \mbox{$B$  is $(\oc, \omega)$-free}$. }
	\label{fig:typestream}
\end{figure*}

In this section we establish the completeness theorem for $\nupll$ and  $\cpll$ (\Cref{thm:completeness-non-wellfounded}). To this end we introduce  $\typestream$, a type system designed to express computation with access to bits of streams,  and we show that the system can encode   polynomial time Turing machines with  advice. By a similar reasoning,   polynomial time computable functions can be represented in $\pta$, a stream-free subsystem of $\typestream$.  We then translate   the type systems into $\dpll$ and $\pll$, respectively (\Cref{thm:embedding}), and conclude by the simulation theorem relating the inductive and non-wellfounded proof systems  (\Cref{thm:simulation}).

\subsection{The type  systems $\pta$ and $\typestream$}

The type system $\typestream$ is a type-theoretical counterpart of $\dpll$, where the linearity restriction in the second-order rules of~\Cref{fig:sequent-system-pll} is duly reflected by a weaker  polymorphism, and modal formulas ``$\omega \sigma$" express types of streams. We  also introduce $\pta$,  the  stream-free subsystem of $\typestream$ corresponding  to $\pll$.

	\begin{definition}\label{defn:terms}[$\term$]
		We define $\term$ as the set of terms generated by the following grammar:	
		$$
		\arraycolsep=1.4pt
		\begin{array}{rcl}
			M  &\dfn & x \ \vert \  \identity \ \vert \ \letid{x}{M} \ \vert \  	M \otimes M \ \vert \ 	\lettensor M {x_1}{x_2}M \\
			&& \lambda x. M \ \vert \ MM \  \vert \ \streamterm{M}\ \vert \ \erasebool \ \vert \ \popbool 
		\end{array}
		$$where $x$ ranges over a countable set of term variables and $\streamterm{M}= \streamterm{M}(0)\!::\! \streamterm{M}(1)\!::\! \ldots$ is a stream of terms.  Meta-level substitution for terms, written $M[N/x]$, is defined in the standard way. The reduction rules for $\term$ are the following:
		\begin{equation*}
			\small
			\def\arraystretch{1.2}
\begin{array}{ll}
		(\lambda x. M)N \betar M[N/x]&\\
		\lettensor {M \otimes N} {x_1}{x_2}P \betar  P[ M/x_1, N/x_2]& \letid{\identity}{M} \betar M \\
		 		\popbool \, \streamterm M  \betar  \hd{\streamterm{M}} \otimes \tail{\streamterm M}&\erasebool \, \streamterm  M \betar  \identity
	\end{array}
		\end{equation*}
		and apply to any context, where $\hd{\streamterm{M}}$  and $\tail{\streamterm{M}}$ are meta operations  returning, respectively,  head and  tail of $\streamterm M$. With $\betar^*$  we denote  the reflexive   and transitive closure of $\betar$.
	\end{definition}	
	
Type assignment systems for the standard lambda calculus that are based on linear logic  do not satisfy subject reduction, i.e., preservation of typing under normalisation~\cite{failure}. Following~\cite{subject}, there are at least two approaches to recover this property:
\begin{itemize}
\item extend the lambda calculus with explicit constructors and destructors corresponding to the exponential modalities.
\item restrict types to prevent the system typing pathological lambda terms.
\end{itemize} 
Mazza's type systems for parsimonious logic follow the first approach~\cite{Mazza14,Mazza15,MazzaT15}. In this paper we rather adopt the second approach, which will allow us to work with a much simpler system and avoid technicalities. Specifically, we  consider the restricted class of   \emph{essential types} introduced in~\cite{Ronchi-Gaboardi}. Roughly, with essential types the exponential modalities cannot occur to the right of an implication, so that types of the form $A \limp \oc A$ are forbidden.     This restriction  prevents the  system expressing forms of sharing and duplication of data, which cause the failure of subject reduction.

\begin{defn}[$\pta$ and $\typestream$]
	The \emph{essential types} are generated by the following grammar:
	\begin{equation}
	\begin{array}{r@{\;}c@{\;}l@{\qquad}r@{\;}c@{\;}l}
		A &\dfn&  X \ \vert \ \unit \ \vert \  \sigma \multimap A \ \vert \ \forall X . A 
		&
		\sigma& \dfn & A \ \vert \ \sigma \otimes \sigma \ \vert \ \oc \sigma  \ \vert \ \streamtype{\sigma}
	\end{array}
	\end{equation}
	where $X$ ranges over a countable set of type variables. We denote  by $\sigma[\tau/X]$ the meta-level substitution of $\tau$ for the free occurrences of the type variable $X$ in $\sigma$. A \emph{context} is a set of  the form $x_1: \sigma_1, \ldots, x_n: \sigma_n$ for some $n\geq 0$, where the $x_i$'s are pairwise distinct term variables and $\sigma_i$ are types. Contexts range over $\Gamma, \Delta, \Sigma, \ldots$. 
	We denote by $\oc \Gamma$ a context  of the form $x_1: \oc \sigma_1, \ldots, x_n: \oc \sigma_n$.
	The type assignment system for $\term$, called $\typestream$,  derives \emph{judgements} of the form $\Gamma \vdash M: \sigma$ according to the typing rules in~\Cref{fig:typestream}. The restriction of $\typestream$ without the typing rules $\stream$, $\erasebool$ and  $\popbool$ is called $\pta$. We write   $\Gamma \vdash_{\typestream} M : \sigma$ (resp.~ $\Gamma \vdash_{\pta} M : \sigma$) when the judgement  $\Gamma \vdash M : \sigma$ is derivable in $\typestream$ (resp.~$\pta$), omitting the subscript when it is clear from the context. If $\der$ is a typing derivation of $\Gamma \vdash M: \sigma$ then we write $\der:\Gamma \vdash M: \sigma$. 
\end{defn}

Essential types ensure subject reduction 	for $\pta$ and $\typestream$. To show this, we start with a structural property of the type systems that is straightforward consequence of the  linearity restrictions introduced by the essential types.

\begin{proposition} \label{prop:functoriality} If $\der : \Gamma \vdash M: \oc \sigma$ then $\Gamma= \oc \Gamma'$ and $\der$ is obtained from a typing derivation $\der'$ by one application of $\fprule$, followed by a series of applications of $\wnwrule$ and $\wnbrule$.
\end{proposition}
\begin{proof}
	Straightforward, by induction on $\der$. 
\end{proof}

We now introduce substitution properties for both types and  typable terms.

\begin{lemma}\label{lem:substitution-types}
	If $\Gamma \vdash M: \sigma $ then $\Gamma[\vec C/\vec X] \vdash M: \sigma[\vec C/\vec X]$ for every $\vec C= C_1, \ldots, C_n$ and $\vec X= X_1, \ldots, X_n$.
\end{lemma}

\begin{lemma}[Substitution]\label{lem:substitution}
	If  $\der_1:\Gamma, x: \tau \vdash M: \sigma $ and $\der_2: \Delta \vdash N: \tau$ then there is a typing derivation  $S(\der_1, \der_2)$ of $\Gamma, \Delta \vdash M[N/x]: \sigma $.
\end{lemma}
\begin{proof}
	The proof is by induction on the  lexicographic order over  $(h(\der_1), s(\tau))$, where $h(\der_1)$ is the height of $\der_1$ and $s{(\tau)}$ is the number of symbols of the formula $\tau$. The only interesting case is when $\der_1$ is obtained from $\der'_1$ by applying a $\wnbrule$ rule, where  $M= M'[x/y, x/z]$,  $\tau= \oc \tau'$, and  $\der'_1$ is
	\begin{equation*}
        \small
		\vlderivation{
			\vlin{\wnbrule}{}{\Gamma, x: \oc \tau' \vdash M'[x/y, x/z]: \sigma}{\vltr{\der'_1}{\Gamma, y:\tau', z: \oc \tau' \vdash M': \sigma}{\vlhy{\ \  } }{\vlhy{\ \ }  }{\vlhy{\ \ }   }}
		}
	\end{equation*}
	By~\Cref{prop:functoriality} we have that $\Delta= \oc \Sigma$ and
	\begin{equation*}
        \small
		\der_2 \dfn 
		\vlderivation{
			\vliq{\wnbrule, \wnwrule}{}{\oc \Sigma \vdash N: \oc \tau'}
			{\vltr{\der'_2}{\oc \Sigma' \vdash N':\oc  \tau'}{\vlhy{\ \  } }{\vlhy{\ \ }  }{\vlhy{\ \ }   }}
		}
		\qquad 
		\der'_2 \dfn 
		\vlderivation{
			\vlin{\fprule}{}{\oc \Sigma' \vdash N': \oc \tau'}{\vltr{\der''_2}{\Sigma' \vdash N': \tau'}{\vlhy{\ \  } }{\vlhy{\ \ }  }{\vlhy{\ \ }   }}
		}
	\end{equation*}
	Since $h(\der_1')<h(\der_1)$ then $(h(\der_1'), s(\oc \tau))<(h(\der_1), s(\oc \tau))$ and by induction hypothesis we have a typing derivation $S(\der'_1, \der_2)$ of $\Gamma, \oc \Sigma' \vdash M'[N'/z]: \sigma $. Moreover, since $s(\tau)<s(\oc \tau)$ then $(h(S(\der'_1, \der_2)), s(\tau))<(h(\der_1), s(\oc \tau))$, and by applying the  induction hypothesis again  there is a typing derivation $S(S(\der'_1, \der_2), \der'_2)$ of $\Gamma, \Sigma', \oc \Sigma' \vdash M'[N'/z,  N'/ y ]: \sigma $. We conclude by applying a series of $\wnbrule$ rules and $\wnwrule$ rules.
\end{proof}

\begin{restatable}[Subject reduction]{proposition}{SUBJECT}\label{prop:subject-reduction}
	Let $\der:\Gamma\vdash M: \sigma$. If  $M\betar N$ then there is $\der'$ such that $\der':\Gamma \vdash N : \sigma$.
\end{restatable}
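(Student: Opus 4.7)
The plan is to prove subject reduction by induction on the reduction $M \betar N$, with a standard substitution lemma as the main technical tool. Specifically, I will first establish: if $\der_1 : \Gamma, x:\sigma \vdash M:\tau$ and $\der_2 : \Delta \vdash N:\sigma$ with $\Gamma,\Delta$ having disjoint domains, then there is a derivation of $\Gamma, \Delta \vdash M[N/x] : \tau$. This is proven by induction on $\der_1$, with the delicate cases being those where $x$ occurs below a $\fprule$ (so $\Delta$ must itself be of the form $\oc \Delta'$ to maintain the shape of the conclusion) or is active in a $\wnbrule$ rule (which identifies two variables). The essentiality restriction on types is precisely what keeps these cases under control: it forbids the sharing patterns that would otherwise invalidate substitution in a linear-logic-based type system.

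Given the substitution lemma, the head-reduction cases are routine. For $(\lambda x.M)N \betar M[N/x]$, the derivation must end in an $\er\limp$ applied to a typing of $\lambda x.M$ whose root is $\ir\limp$, so the substitution lemma applies directly. The case $\lettensor{M\otimes N}{x_1}{x_2}P \betar P[M/x_1,N/x_2]$ is handled by two applications of substitution after inverting the $\ir\otimes$ and $\er\otimes$ rules at the top of the derivation. The case $\letid{\identity}{M} \betar M$ is immediate, since $\identity$ is typed in the empty context and the derivation of $M$ can be reused verbatim.

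For the stream-related reductions I exploit the infinitary $\stream$ rule. The reduction $\erasebool\, \streamterm{M} \betar \identity$ is immediate since both sides are typable with $\unit$ in the empty context. For $\popbool\, \streamterm{M} \betar \hd{\streamterm{M}} \otimes \tail{\streamterm{M}}$, the premises of the $\stream$ rule give $\vdash \streamterm{M}(i):\sigma$ for every $i \in \Nset$; picking $i=0$ types the head, while reapplying the $\stream$ rule to the shifted family $(\streamterm{M}(i{+}1))_{i \in \Nset}$ types the tail (the finite-support side-condition is inherited, since shifting preserves the image of the family). An $\ir\otimes$ then yields the desired typing $\vdash \hd{\streamterm{M}} \otimes \tail{\streamterm{M}} : \sigma \otimes \streamtype{\sigma}$. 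Contextual closure is established by a straightforward induction on the term context in which the redex is fired, replacing the subderivation typing the redex with the one provided by the head-case analysis.

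The main obstacle is the substitution lemma in the presence of $\wnbrule$ (which merges two variables into one), of $\fprule$ (which imposes strong shape constraints on contexts), and of the infinitary $\stream$ rule. The essentiality condition on types is the technical device that prevents non-trivial sharing from arising in subderivations, and it is exactly what makes these inductive steps go through; without it, subject reduction would fail as already observed in related linear type systems such as the one in~\cite{Ronchi-Gaboardi}.
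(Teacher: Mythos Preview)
Your overall strategy---establish a substitution lemma and then case-analyse the redexes---is the same as the paper's. Two steps, however, are too optimistic as written and would not go through.

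First, the system is \emph{not} syntax-directed: the rules $\fprule$, $\wnwrule$, $\wnbrule$, $\ir\forall$, $\er\forall$ leave the term unchanged (or merely rename variables), so a derivation of $\Gamma\vdash(\lambda x.P)Q:\sigma$ need not end in $\er\limp$, and a derivation of $\lambda x.P$ need not end in $\ir\limp$. The paper makes this explicit: below the $\er\limp$ there may be a block $\delta$ of such rules (so $\sigma$ can be $\oc^n\forall\vec X.B$), and below the $\ir\limp$ a further block $\varepsilon$. Dealing with $\varepsilon$ in particular requires a separate \emph{type}-substitution lemma (if $\Gamma\vdash M:\sigma$ then $\Gamma[\vec C/\vec X]\vdash M:\sigma[\vec C/\vec X]$), because the type variables instantiated along $\varepsilon$ must be pushed into the premise of $\ir\limp$ before your term-substitution lemma becomes applicable. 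You never mention type substitution, and ``induction on the term context'' will not peel off these typing rules, since they correspond to no term former.

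Second, plain structural induction on $\der_1$ does not suffice for the substitution lemma in the $\wnbrule$ case. When $x:\oc\tau'$ is the contracted variable, the premise carries both $y:\tau'$ and $z:\oc\tau'$, and you must substitute twice---once with a derivation of $N$ at type $\oc\tau'$ and once at type $\tau'$ (the latter extracted via the generation lemma that any derivation of $\Delta\vdash N:\oc\tau'$ has $\Delta=\oc\Delta'$ and factors through $\fprule$ modulo $\wnwrule/\wnbrule$). After the first substitution the resulting derivation may well be \emph{taller} than $\der_1$, so the second inductive call is unjustified. The paper handles this by a lexicographic induction on $(h(\der_1),s(\tau))$, using the strict drop $s(\tau')<s(\oc\tau')$ to license the second call.
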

\begin{proof}[Proof]
	It suffices to check that the reduction rules given in~\Cref{defn:terms} preserve types. We consider the most interesting reduction rule, i.e.,  $M=(\lambda x.P)Q\betar P[Q/x]=N$. By  inspecting the typing rules in~\Cref{fig:typestream},  $\der$ must have the following structure:
	{
		\vspace{-0.2cm}
		$$
            \small
		\vlderivation{
			\vlin{}{}{\Sigma, \Delta \vdash (\lambda x. P)Q: \sigma}
			{
				\vlin{}{}{\delta\ \vdots}
				{
					\vliin{\er{\limp}}{}{\Sigma', \Delta' \vdash (\lambda x. P')Q':  B'}
					{
						\vltr{\der_1}{\Sigma' \vdash \lambda x. P': \tau \limp B'}{\vlhy{\ }}{\vlhy{\ }}{\vlhy{\ }}            
					}{
						\vltr{\der_2}{\Delta' \vdash Q' : \tau}{\vlhy{\ }}{\vlhy{\ }}{\vlhy{\ }}
					}
				}
			}	
		}
		$$
	}
	where:
	\begin{itemize}
		\item  $\Gamma= \Sigma, \Delta$ and $\sigma= {{\oc \overset{n}{\ldots} \oc}} \forall \vec X. B$, for some $n\geq 0$,  $\vec X$, and $B$.
		\item $\delta$ is a sequence of rules in $\{ \wnwrule, \wnbrule, \fprule, \ir \forall, \er \forall\}$, 
		\item $B= B'[\vec C/\vec Y]$, for some $\vec C$ and $\vec Y$ not free in $\Sigma', \Delta'$
		\item $P'[\vec x/\vec y]=P$ and $Q'[\vec x/\vec y]=Q$, for some $\vec x, \vec y$. 
	\end{itemize}
	By a similar reasoning, $\der_1$ has the following shape:
	$$
        \small
	\vlderivation{
		\vlin{}{}{\Sigma' \vdash \lambda x. P': \tau \limp B'}
		{
			\vlin{}{}{\varepsilon\ \vdots}
			{
				\vlin{\ir \limp}{}{\Sigma''\vdash \lambda x. P'': \tau' \limp B''}
				{
					\vltr{\der'_1}{\Sigma'', x:\tau' \vdash P'':  B''}{\vlhy{\ }}{\vlhy{\ }}{\vlhy{\ }}
				}
			}
		}  
	}
	$$
	where:
	\begin{itemize}
		\item $\varepsilon$ is sequences of typing rules in $\{ \wnwrule, \wnbrule, \ir \forall, \er \forall\}$,  
		\item $B'= B''[\vec D/\vec Z]$ and $\tau= \tau'[\vec D/\vec Z]$, for some $\vec D$ and $\vec Z$ not free in $\Sigma''$
		\item $P''[\vec z/\vec w]=P'$ for some $\vec z, \vec w$. 
	\end{itemize}
	Since $\tau= \tau'[\vec D/\vec Z]$, $B'= B''[\vec D/\vec Z]$ and $\vec Z$ do not occur free in $\Sigma''$,  by~\Cref{lem:substitution-types}  there is a typing derivation  $\der''_1$ of $\Sigma'', x: \tau \vdash P'': B'$. By~\Cref{lem:substitution} there is a typing derivation $S(\der''_1, \der_2)$ of $\Delta', \Sigma'' \vdash P''[Q'/x]: B'$. Finally, by applying the sequences of rules $\delta$ and $\varepsilon$ we obtain:
	{
		\vspace{-0.2cm}
		$$
            \small
		\vlderivation{
			\vlin{}{}{\Sigma, \Delta \vdash P[Q/x]: \sigma}
			{
				\vlin{}{ }{  \vdots} 
				{
					\vlin{}{}{\Delta', \Sigma' \vdash P'[Q'/x]: B'}
					{
						\vlin{}{}{  \vdots}
						{
							\toks0={0.5}
							\vltrf{S(\der''_1, \der_2)}{\Delta', \Sigma'' \vdash P''[Q'/x]: B'}{\vlhy{\ \  \ 	\  \ \  \  \  }}{\vlhy{\ \ \ \  \ \ \  	 }}{\vlhy{\ \ \  \  \ \  \ 	 }}{\the\toks0}
						}
					}
				}
			}
		}
		$$
	}
\end{proof}

\subsection{Completeness results for  $\typestream$ and $\pta$}

In this subsection we show  completeness of  $\typestream$ and $\pta$  for, respectively,  $\fppoly$ and $\fptime$. The proof adapts to our setting the encoding of polynomial time Turing machines from~\cite{MairsonTerui,Ronchi-Gaboardi}.

	%		\Nat&\dfn& \forall X. \oc (X \limp X)\limp X\limp X\\
	%		\String&\dfn& \forall X. \oc (\Bool \limp X \limp X)\limp X\limp X
	%	\end{array}

\subsubsection{Definability and  data types in $\pta$ and $\typestream$}\label{subsec:definability}

Polymorphic type systems based on linear logic typically encode inductive datatypes  by universally quantified types  (see, e.g.,~\cite{Ronchi-Gaboardi}). Examples are natural numbers,  defined by  $\Nat \dfn \forall X. \oc (X \limp X)\limp X\limp X$. Because of linearity restrictions on polymorphism, however,  parsimonious logic  cannot freely apply instantiation when  encoding functions over inductive datatypes. As a consequence, its computational strength relative to standard notions of representability~\cite{Barendregt1981-BARTLC}   would be  fairly poor. To circumvent this technical issue, following previous works on parsimonious logic~\cite{MazzaT15, Mazza15}, we  adopt a  parametric notion of representability, where natural numbers are defined by types of the form $  \oc(A \limp A)\limp A\limp A$, i.e., by  instantiations of $\Nat$.

To this end, we generalise the usual notion of lambda definability~\cite{Barendregt1981-BARTLC} to different kinds of input data:

\begin{definition}[Representability~\cite{Ronchi-Gaboardi}]
	Let $f: \mathbb{I}_1 \times \ldots \times \mathbb{I}_n\to \mathbb{O}$  be a total function and let  the elements $o \in \mathbb{O}$ and $i_j \in \mathbb{I}_j$ for $0 \leq j \leq n$ be encoded by terms $\cod{o}$ and $\cod{i_j}$ such that $\vdash \cod o: \mathbf{O}$ and $\vdash \cod {i_j}: \mathbf{I}_j$. Then, $f$ is \emph{representable} in $\typestream$ (resp.~$\pta$) if there is a term $\cod f\in \term$ such that $\vdash \cod f:  \mathbf{I}_1 \limp \ldots \limp \mathbf{I}_n\limp \mathbf{O}$ in $\typestream$ (resp.~$\pta$) and
	\[
	f\, i_1 \ldots i_n =o \quad \Longleftrightarrow \quad \cod f \, \cod{i_1}\ldots \cod{i_n} \betar^* \cod o
	\]
\end{definition}

We adopt the usual notational convention $M^n\,  N$ ($n \geq 0$) defined inductively as $	M^0 (N) \dfn N$ and $M^{n+1}(N) \dfn M(M^n(N))$. We also set $M\circ N \dfn \lambda z. M(Nz) $, which generalises to the $n$-ary case $M_1 \circ \overset{}{\ldots} \circ  M_{n} \dfn  \lambda z. M_1(M_2(\ldots (M_n z)))$. Finally, the  $n$-ary tensor product (with $n\geq 3$) can be  defined from the binary one by setting 
$	M_1 \otimes \ldots \otimes M_{n}\dfn 
(M_1 \otimes \ldots \otimes M_{n-1}) \otimes M_{n}$, $\sigma _1 \otimes \ldots \otimes \sigma_n \dfn (\sigma _1 \otimes \ldots \otimes \sigma_{n-1})\otimes \sigma_n $,  and $	\lettensor{z}{x_1\otimes \ldots}{x_n}{M} \dfn \lettensor{z}{y}{x_n}{(\lettensor{y}{x_1\otimes \ldots}{x_{n-1}}{M})}$. 
We also  use the shorthand notation $\sigma^n \dfn \sigma \otimes \overset{n}{\ldots}\otimes \sigma$.

\renewcommand{\succ}[1]{\mathsf{s}_{#1}}
\newcommand{\suc}[1]{\mathsf{s}_{#1}}
\newcommand{\pred}{\mathsf{p}}
\newcommand{\floor}[1]{\lfloor {#1} \rfloor}
\newcommand{\hlf}[1]{\floor{#1 /2}}
\newcommand{\cnd}{\mathsf{cond}}
\newcommand{\proj}[2]{\pi^{#1}_{#2}}
\newcommand{\s}[1]{\vert {#1}\vert}
\newcommand{\ifte}[3]{\mathsf{if}\, {#1}\, \mathsf{then\, }{#2}\, \mathsf{else}\, {#3}}

In what follows we encode some relevant  data types  and their basic operations in $\typestream$ (and $\pta$).

\begin{definition}[Booleans]
	Booleans $\false, \true$ and basic  Boolean  operations are  encoded as in~\Cref{fig:functional-complete}.
	The can be typed by $\Bool\dfn \forall X. (X \otimes X)\limp (X \otimes X)$. 
\end{definition}
\begin{figure*}
	\	\adjustbox{max width=.9\textwidth}{$
		\def\arraystretch{1.2}
		\begin{array}{rcrl}
			\cod{\true}&\dfn & \lambda x. \lambda y. x \otimes y   & :\Bool\\
			\cod \false &\dfn & \lambda x. \lambda y. y \otimes x & : \Bool\\
			\erase& \dfn & \lambda b. \lettensor{b (\identity \otimes \identity )}{x_1}{x_2}{\letid {x_2}{x_1}}&: \Bool \limp \unit\\
			\pi^2_1 &\dfn & \lambda x. \lettensor{x}{x_1}{x_2}{\letid {\erase \, x_2}{x_1}}&: \Bool \otimes \Bool \limp \Bool\\ 
			\dupl & \dfn &  \lambda b. \pi^2_1 (b (\cod \true \otimes \cod \true)\otimes (\cod \false \otimes \cod \false))&: \Bool \limp \Bool \otimes \Bool \\
			\cod{\neg}&\dfn& \lambda b. \lambda x. \lambda y. b(y \otimes x) &: \Bool \limp \Bool \\
			\cod{\vee}& \dfn & \lambda b_1. \lambda b_2. \pi_1^2(b_1 \cod \false \, b_2)&: \Bool \otimes \Bool \limp \Bool
		\end{array}
		$}
	\caption{Encoding of basic operations on Booleans.}
	\label{fig:functional-complete}
\end{figure*}

The following is a straightforward consequence of the  encodings in~\Cref{fig:functional-complete}.

\begin{proposition}[Functional completeness]  \label{prop:fun-comple} Every Boolean function $f: \{\false, \true\}^n \to \{\false, \true\}^m$ with $n \geq 0$, $m> 0$ can be represented by a term $\cod f\in \term$ such that $\vdash \cod f: \Bool^n \limp \Bool^m $.
\end{proposition}

Notice that~\Cref{prop:fun-comple} crucially relies on the terms $\dupl$ and $\erase$, which duplicate and erase Booleans in a purely linear fashion. Following~\cite{MairsonTerui}, we can generalise  linear erasure of data to a fairly large class of types.

\begin{definition}[$\Pi_1 $ and $e\Pi_1$ types~\cite{MairsonTerui}] 
	Let $A$ be a type build from $ \unit , \otimes, \limp,  \forall$.  
	We say that $A$ is $e\Pi_1$ if every $\forall$-type  occurring in it is inhabited.
\end{definition}

\begin{proposition}[Linear erasure~\cite{MairsonTerui,CurziR20}] \label{prop:linear-erasure} For any closed type $A$ in $e\Pi_1$ there is a term  $\erasetype A$ in a term  that inhabits $A \limp \unit$.
\end{proposition}

\begin{proposition}[Conditional]\label{prop:conditional}
	For any  $A$ in $e\Pi_1$,  the following rule  is derivable:
	$$\vliinf{\cnd}{}{x: \Bool \vdash \ifte{x}{R}{L}: A}{\vdash R: A}{\vdash L:A}
	$$
	where $ \ifte{x}{R}{L}$ satisfies the following reductions:
	\[
	\def\arraystretch{1.2}
	\arraycolsep=2pt
	\begin{array}{rcl}
		\ifte{\cod{\true}}{R}{L} &\betar^*& R\\
		\ifte{\cod{\false}}{R}{L} &\betar^*& L
	\end{array}
	\]
\end{proposition}
\begin{proof}
	We set $\ifte{x}{R}{L}\dfn \pi^2_1 (x RL)$, where $\pi^2_1$ is as in~\Cref{fig:functional-complete}.
\end{proof}

\begin{definition}[Streams of Booleans] \label{defn:stream} A stream  (of Booleans) $\alpha$ is encoded by a term $\streamterm{M}$ such that $\streamterm{M}(i)\dfn \cod {\alpha(i)}$. We write $\cod{\alpha}$ for the encoding of $\alpha$. 
	Streams can be typed by $\Stream \dfn \streamtype{\Bool}$.  
\end{definition}

\begin{definition}[Natural numbers and Boolean strings]  \label{defn:strings}
	The encoding of Boolean strings and natural numbers is as follows, for any $n \geq 0$ and  $s = b_1\cdots b_n\in \{\false, \true\}^*$:
	\[
	\def\arraystretch{1.2}
	\arraycolsep=2pt
	\begin{array}{rcl}
		\cod{n}&\dfn& \lambda f. \lambda z. f^nz\\
		\cod{s}&\dfn& \lambda f.\lambda  z. f\, \cod{b_n} (f\, \cod{b_{n-1}}(\ldots (f\, \cod {b_1} \, z)\ldots ) )
	\end{array}
	\]
	For any type $A$, natural numbers and Boolean strings can be typed, respectively, by
	%	\[	\def\arraystretch{1.2}
	%	\arraycolsep=2pt
	%	\begin{array}{rcl}
		%		\Nat&\dfn& \forall X. \oc (X \limp X)\limp X\limp X\\
		%		\String&\dfn& \forall X. \oc (\Bool \limp X \limp X)\limp X\limp X
		%	\end{array}
	%	\]
	%	We will mainly use a parametric version of the above types. For any linear type $A$ we set:
	\[
	\def\arraystretch{1.2}
	\arraycolsep=2pt
	\begin{array}{rcl}
		\Nat[A]&\dfn& \oc (A \limp A)\limp A\limp A\\
		\String[A]&\dfn&\oc (\Bool \limp A \limp A)\limp A\limp A
	\end{array}
	\]
	With $\Nat[]$ we denote $\Nat[A]$ for some $A$, and similarly for $\String[]$.	
	%	We define the term $\length$ that, when applied to the encoding of a string, returns the encoding of its length:
	%	$$	\def\arraystretch{1.2}
	%	\arraycolsep=2pt
	%	\begin{array}{rcl}
		%\length & \dfn   &  \lambda s.  \lambda   f. s  (\lambda x. \lambda y. \letid {\erase\,  x}{fy} )      	\\
		%\length &:&   \String[A]\limp \Nat[A]
		%	\end{array}
	%	$$
	%	It is easy to check that, if $s= b_1\cdots b_n$ then
	%	\[
	%	\length \, \cod{s} \betar^* \cod{n}
	%	\]
\end{definition}

We need to encode the function that, when applied to a  Boolean string, returns  its length:

\begin{proposition}[Length]\label{prop:length}
	There exists a term $\length$  of type $ \String[A]\limp \Nat[A]$  satisfying the following reduction,  for all  $s = b_1\cdots b_n\in \{\false, \true\}^*$:
	\[
	\length \, \cod{s} \betar^* \cod{n}
	\]
\end{proposition}	
\begin{proof}
	We set $\length  \dfn     \lambda s.  \lambda   f. s  (\lambda x. \lambda y. \letid {\! \!\erase\,  x}{fy} )     $, where $\erase$ is as in~\Cref{fig:functional-complete}.
\end{proof}
	%	
	%%	A function on boolean strings $f: \{0, 1\}^* \to \{0,1\}$ is \defin{representable} in $\typestream$ (resp., $\pta$) if there is a term $\vdash \cod{f}: \String[A]\limp \String[B]$  (for some $A, B$) such that, for any $s \in\{0,1\}^*$, $\cod{f}\, \cod{s}\betar^* \cod{f(s)}$. In this case we say that $\cod f$ represents $f$. Similarly, we can define representability for functions over natural numbers.
	%\end{defn}

	The following proposition shows that encodings of natural numbers and Boolean strings can be used as iterators.
	
	\begin{proposition}[Iteration]
		For any $A$, the following rule is derivable
		$$
		\vliinf{\itrule}{}{\oc \Gamma, \Delta, n: \Nat[A] \vdash \iter  n S  B: A }{\oc \Gamma \vdash S: \oc (A \limp A)}{\Delta \vdash B: A}
		$$
		where $\iter  n S  B$ satisfies the  reduction
		\[
		\begin{array}{rcl}
			\iter {\cod{n}} SB & \betar ^* & S^nB
		\end{array}
		\]
		Similarly,  the following rule is derivable:
		$$
		\vliiinf{\itrulestring}{}{\oc\oc \Gamma, \Delta, n: \String[A] \vdash \iterstring  n  {S_0} {S_1}  B: A }{\oc \Gamma \vdash S_0: \oc (A \limp A)}{\oc  \Gamma \vdash S_1: \oc (A \limp A)}{\Delta \vdash B: A}
		$$
		where $\iterstring  n  {S_0} {S_1}  B$ satisfies the  reduction
		\[
		\begin{array}{rcl}
			\iter {\cod{b_1\cdots b_n}} {S_0} {S_1} B & \betar ^* & S_{b_n}\ldots S_{b_1} B
		\end{array}
		\] 
	\end{proposition}
	\begin{proof}
		It suffices to set, respectively,    $\iter \dfn \lambda n. \lambda  s. \lambda b.  n  sb$ and $\iterstring \dfn \lambda s. \lambda  t.\lambda u. \lambda b.  s  tub$. 
	\end{proof}
	
	Our next goal is to show that any polynomial over natural numbers can be encoded in $\typestream$ (and $\pta$). 	The encoding of  polynomials requires nesting types, so we introduce a notation for denoting iterated nesting  in a succinct way.

	\begin{definition}[Nesting]\label{defn:tiering}
		Let $A$ be a type.	We define $\Nat_A[d]$ and $\String_A[d]$ by induction on  $d \geq 0$:
		\[
		\def\arraystretch{1.2}
		\arraycolsep=2pt
		\begin{array}{rcl}
			\Nat_A[0]&\dfn & A\\
			\Nat_A[{d+1}] &\dfn & \Nat_A[\Nat_A[d]]
		\end{array}
		\qquad   
		\def\arraystretch{1.2}
		\arraycolsep=2pt
		\begin{array}{rcl}
			\String_A[ 0]&\dfn & A\\
			\String_A[{d+1}] &\dfn & \String_A[\String_A[d]]
		\end{array}
		\]
		If $A$ is clear from the context, we simply write $\Nat[d]$ and $\String[d]$.   
	\end{definition}
	
	\begin{proposition}
		For any $d \geq 0$, there  exist a term $	\down d$ of type  $\Nat[{d+1}] \limp \Nat[{d}]$ and a term $	\dows d$ of type $\String[{d+1}] \limp \String[{d}]$ 
		%We set:
		%\[
		%\def\arraystretch{1.2}
		%\arraycolsep=2pt
		%\begin{array}{rcl}
		%	\down d&\dfn&   \lambda x.\iter x \success {\cod{0}}\\
		%	\dows d&\dfn &  \lambda x.\iter x  (\lambda b. \lambda s. \lambda c. \lambda z. cb(scz)  ) \cod{\epsilon}\\[1ex]
		%	\down d&:&\Nat[{d+1}] \limp \Nat[{d}]\\
		%	\dows d&: &\String[{d+1}] \limp \String[{d}]
		%\end{array}
		%\]
		satisfying the following reductions:
		\[
		\begin{array}{rcl}
			\down d \, \cod n &\betar^*& \cod n\\
			\dows d \, \cod n &\betar^*& \cod n
		\end{array}
		\]
	\end{proposition}
	\begin{proof}
		We set  $	\down d\dfn \lambda x.\iter x \success {\cod{0}}$ and $	\dows d\dfn   \lambda x.\iter x  (\lambda b. \lambda s. \lambda c. \lambda z. cb(scz)  ) \cod{\epsilon}$.
	\end{proof}

	\begin{definition}[Successor,  addition, multiplication]
		Successor, addition and multiplication can be represented by the following terms:
		$$
		\def\arraystretch{1.2}
		\arraycolsep=2pt
		\begin{array}{rcll}
			\success & \dfn & \lambda n.  \lambda  f. \lambda z. n  ( f) (fz)&\\
			\add & \dfn & \lambda n. \lambda m.  \iter n ( \success)\,  m   \\
			\mult & \dfn & \lambda  n. \lambda   m.\iter m {(\lambda y. \add  \,  n \, y)} {\cod 0} &
		\end{array} 
		$$
		they are typable as follows:
		$$
		\def\arraystretch{1.2}
		\arraycolsep=2pt
		\begin{array}{rl}
			&	\vdash 	\success : \Nat[i] \limp \Nat[i] \\
			&	\vdash \add  :  \Nat[i+1] \limp \Nat[i] \limp \Nat[i] \\
			&\vdash 	\mult  :    \oc \Nat[i+ 1] \limp  \Nat[i+ 1]\limp \Nat[i]
		\end{array} 
		$$
		
	\end{definition}

	%By just repeatedly applying $\down{k}$ sufficiently many times we obtain the following:
	%\begin{prop}[Levelling up]
	%Let $f: \Nset \times \ldots \Nset \to \Nset$. If $ M$ of type  $ \Nat[i_1] \limp \ldots \limp \Nat[i_n] \limp \Nat[j]$ represents $f$ then, for any $k \geq 1$ there is $ M_k$ of type $ \Nat[i_1+k] \limp \ldots \limp \Nat[i_n+k] \limp \Nat[j+k]$ that represents $f$.
	%\end{prop}

	%\begin{prop}[Encoding of polynomials]
	%	Let $p(x): \Nset \to \Nset$ be such that $p(x)=c \cdot x^d$. Then, there is a terms $\cod {p} \in \term$ such that:
	%	\begin{itemize}
		%		\item  $\vdash \cod p : \oc \Nat[]\limp \Nat[]$
		%		\item for any $n \in \Nset$,  $\cod p \cod n \to \cod{p(n)}$. 
		%	\end{itemize}
	%\end{prop}
	%\begin{proof}
	%	We define a term $\mult^k$ of type $\oc \Nat[] \limp \Nat[]$ by induction on $k\geq 1$:
	%	$$
	%	\begin{array}{rcll}
		%		\mult^1 &\dfn & \lambda \oc x .x & : \oc \Nat[] \limp \Nat[] \\
		%		\mult^k & \dfn & \lambda \oc x. \mult \oc (\mult^{k-1} \oc ^{k-1} x)x  & : \oc \Nat[] \limp \Nat[] \qquad k>1
		%	\end{array}
	%	$$
	%	We set  $\cod p\dfn  \lambda \oc x.\mult (\oc \cod {c})(\mult^{d} \oc^{d}x)    \quad :  \oc\Nat[]\limp \Nat[]$.
	%\end{proof}

\begin{theorem}[Polynomial completeness]\label{thm:polynomial-completeness}
		Let $p(x): \Nset \to \Nset$ be a polynomial with degree $\degree p>0$. Then there is a term $\cod p$ representing $p$ such that, for any  $i\geq 0$:
		\[
		x: \oc ^{\degree p-1} \Nat[\degree p +i] \vdash \cod p : \Nat[i]
		\]
\end{theorem}
\begin{proof}
	For the sake of readability, we will avoid writing the index $i$ in typing  judgements. Thus, with $\Nat[n]$ we mean $\Nat[n+i]$.

	Consider a polynomial $p(x): \Nset \to \Nset$ in Horner normal form, i.e., $p(x)= a_0+ x(a_1+ x(\ldots (a_{n-1}+ xa_n)\ldots))$. We actually show   something stronger:
	\begin{equation}\label{eqn:stronger-invariant}
		x_0:\Nat[1],x_1: \oc \Nat[2],  \ldots, x_{ n-1}: \oc ^{ n-1}\Nat[n] \vdash \cod {\hat{p}}: \Nat[0]
	\end{equation}
	where 	$\hat{p}= a_0+ x_0(a_1+ x_1(\ldots (a_{n-1}+ x_{n-1}a_{n})\ldots))$.  The proof is by induction on $\degree {{p}}=n$.	If $\degree {{p}}=1$ then $\hat{p}= a_0+ x_0a_1$, and we simply set $\cod{\hat{p}}\dfn \add \, \cod {a_0}\, {(\mult\, \cod{a_1}\, x_0)}$. If $\degree {{p}}>1$ then $\hat{p}= a_0+ x_0\hat{q}$ with $\hat{q}\dfn a_1+ x_1(a_2+ x_2(\ldots (a_{n-1}+ x_{n-1}a_{n})\ldots)) $.  By induction hypothesis on $q$ we have
	\[
	x_1:\Nat[1],x_2: \oc \Nat[2],  \ldots, x_{n-1}: \oc ^{n-2}\Nat[n-1] \vdash \cod {\hat{q}}: \Nat[0]
	\]
	By repeatedly applying $\down{k}$ for appropriate $k$ we obtain  a term $M$ such that:
	\[
	x_1:\Nat[2],x_2:\oc  \Nat[3],  \ldots, x_{n-1}: \oc ^{n-2}\Nat[n] \vdash M: \Nat[0]
	\]
	We set $\cod{\hat{p}}\dfn \add \, \cod{a_0}\, (\mult \, M\, x_0)$, which is typable as:
	\[
	x_0: \Nat[1], x_1:\oc \Nat[2],  \ldots, x_{n-1}: \oc ^{n-1}\Nat[n] \vdash \cod{\hat{p}}: \Nat[0]
	\]	
	and we can conclude since $\degree{{q}}= \degree{{p}}-1$.
	
	Now, to prove the theorem it suffices to repeatedly apply $\down{k}$ for appropriate $k$  to the typable term in~\Cref{eqn:stronger-invariant} in order to get a term  $N$ that represents $\hat{p}$ and typable as
	\[
	x_0:\Nat[n],x_1: \oc \Nat[n],  \ldots, x_{n-1}: \oc ^{n-1}\Nat[n] \vdash \cod {\hat{p}}: \Nat[0]
	\]
	By applying a series of $\wnbrule$ we obtain a term $\cod p$ representing the polynomial $p$ and such that $	x: \oc ^{\degree p -1} \Nat[\degree p ] \vdash \cod p : \Nat[0]$. 
\end{proof}
	
\subsubsection{Encoding polytime Turing machines with polynomial advice in $\typestream$}

In this subsection we show how to encode a Turing machine working in polynomial time with access to advice in $\typestream$, following essentially~\cite{MairsonTerui,Ronchi-Gaboardi}.  

W.l.o.g., we will assume that the alphabet  of the machine is composed by the two symbols $\true$ and $\false$\footnote{We can encode the alphabet $\{\false, \true, \sqcup\}$ within the alphabet $\{\false, \true\}$, by setting $\sqcup \dfn\false\false$, $\false\dfn \false \true$ and $\true\dfn \true \true$.},  and that final states are divided into accepting and rejecting.  

A \emph{configuration} $C$ of the machine will be represented by a term  of the following form:
\begin{equation}\label{eqn:configuration}
	\cod{C}\dfn	\lambda  c. (c\cod{b^l _0} \circ \ldots \circ c \cod{b^l_n} )\otimes (c\cod{b^r_0}\circ \ldots \circ c\cod{b^r_m}) \otimes \cod{q} \otimes  \cod{\alpha} 
\end{equation}
where:
\begin{itemize}
	\item  	$b^r_0\in \{\false, \true\}$ is the scanned symbol 
	\item 	$b^r_1 \cdots b^r_m\in \{\false, \true\}$ are the symbols of the tape  to the right of the scanned symbol
	\item   $b^l_n\cdots b^l_0\in \{\false, \true\}$ are the symbols of the tape  to the left of the scanned symbol (notice that we encode this tuple in reverse order)
	%		$b^r_0$ is the scanned symbol of the tape
	%\item   $b^l_0\circ \ldots \circ b^l_n$  represents  the left  part of the tape  in a reverse order,
	% $cb^r_0\circ \ldots \circ cb^r_m$  represents  the right  part of the tape,
	\item  $q=b_1\cdots b_k\in \{\false, \true\}$ is the (encoding of the) current state of the machine. 
	%
	%$\cod{q} \dfn \cod{b_1}\otimes \ldots\otimes  \cod{b_k}$ represents  the tuple $q=b_1\cdots b_k\in \{\false, \true\}$, which encodes  the current state  of the machine.
	\item $\alpha$ represents the advice of the machine as a single Boolean  stream (see~\Cref{prop:fppoly=fptime(RR)}). 
\end{itemize}

Terms as in~\Cref{eqn:configuration} have the following type:
$$
\tm \dfn \forall X. \oc (\Bool \limp X \limp X) \limp ((X \limp X)^2\otimes  \Bool^k\otimes\Stream )
$$
where $\Stream$ is as in~\Cref{defn:stream}.

The \emph{initial configuration} $C_0$ describes a machine with tape filled by blank symbols (here $\false\false$s) the head at the beginning of the tape and in the initial state $q_0$.  To render the construction of the initial configuration in $\typestream$, we define  the  following  term: 
\begin{equation}\label{eqn:init}
	\init \dfn \lambda n . \lambda  c. (\lambda z.z)\otimes  n ( \lambda z'. c \cod{\false}(c \cod{\false} z')) \otimes  \cod{q_0} \otimes \cod{\alpha}
\end{equation}
It takes the encoding of a natural number $n$ in input and returns the  term
$$
\cod{C_0}\dfn	\lambda  c. (\lambda z. z) \otimes (c\cod{\false}\circ \overset{2n}{\ldots} \circ c\cod{\false}) \otimes \cod{q} \otimes  \cod{\alpha} 
$$
representing the first $n$ blank symbols of the tape. Terms as in~\Cref{eqn:init} have the  type below
$$
\Nat[d] \limp \Stream \limp \tm  
$$
for all $d \geq 0$.

Following~\cite{MairsonTerui,Ronchi-Gaboardi}, in order to show that Turing machine transitions are representable we consider two distinct phases:
\begin{itemize}
	\item A \emph{decomposition phase}, where the encoding of the  configuration $C$ is decomposed to extract the symbols $\cod{b^l_0}$,  $\cod{b^r_0}$. 
	\item A \emph{composition phase}, where the components of $C$ are assembled back to get  the  configuration of the machine  after   the transition. 
\end{itemize}

The decomposition of a configuration has type $\idtype$:
$$
\idtype \dfn \forall X. \oc (\Bool \limp A[X])\limp  (A[X]^2\otimes B[X] ^2 \otimes \Bool^k \otimes \Stream)
$$
where $A[X]\dfn X \limp X$ and $B[X]\dfn (\Bool \limp A[X])\otimes \Bool$. 

The decomposition phase is described by the  term $\decomp$ of type $\tm \limp \idtype$ defined as follows:
\begin{multline}\label{eqn:decomp}
	\decomp \dfn  \lambda m. \lambda c. \lettensor{m\, (F[c])}{l}{r  \otimes q\otimes \alpha}{} \\
	(\lettensor{l(\identity \otimes (\lambda x. \letid{\erasetype{\Bool}\, x}{\identity}) \otimes \cod \false)}{s_l}{c_l \otimes b^l_0}{}\\
	(\lettensor{r(\identity \otimes (\lambda x. \letid{\erasetype{\Bool}\, x}{\identity}) \otimes \cod \false)}{s_r}{c_r \otimes b^r_0}{}\\
	s_l \otimes s_r \otimes c_l \otimes \otimes  b^l_0 \otimes c_r \otimes b^r_0\otimes q \otimes \cod \alpha ))
\end{multline}
where  $\erasetype{\Bool}$ is the eraser for $\Bool$ given by~\Cref{prop:linear-erasure}, and $F[x]\dfn \lambda b. \lambda z. \lettensor{z}{g}{h\otimes i}{(h\, i  \circ g )\otimes x \otimes b}$, which is typable as:
\[
x: \Bool \limp A[X ]\vdash  F[x]: B[(A[X]\otimes B[X])/X]
\]

The term $\decomp$ in~\Cref{eqn:decomp} satisfies the reduction in~\Cref{fig:red-decomp-comp}.
\begin{figure*}[t]
	\centering
	$$
	\begin{array}{c}
		\decomp \, (\lambda  c. (c\cod{b^l _0} \circ \ldots \circ c \cod{b^l_n} )\otimes (c\cod{b^r_0}\circ \ldots \circ c\cod{b^r_m}) \otimes \cod{q} \otimes  \cod{\alpha} ) \\[0.3cm]
		\betar^*\\[0.3cm]  \lambda  c. (cb^l _1 \circ \ldots \circ c b^l_n )\otimes (cb^r_1\circ \ldots \circ cb^r_m) \otimes c \otimes b^l_0 \otimes c \otimes b^r_0  \otimes \cod{q} \otimes  \cod{\alpha}\\[4ex]
	\end{array}
	$$
	$$
	\begin{array}{c}
		\comp \ (	\lambda  c. (cb^l _1 \circ \ldots \circ c b^l_n )\otimes (cb^r_1\circ \ldots \circ cb^r_m) \otimes c \otimes b^l_0 \otimes c \otimes b^r_0  \otimes \cod{q}\otimes  \cod{\alpha})\\[0.3cm] \betar^* \\[0.3cm]
		\begin{cases}
			\lambda  c. (cb' \circ cb^l _0 \circ \ldots \circ c b^l_n )\otimes (cb^r_1\circ \ldots \circ cb^r_m)  \otimes \cod{q'}\otimes \cod{\tail{\alpha}} &\text{if } \delta (b^r_0, \hd{\alpha},  q)= (b', q', \text{Right})\\[1ex]
			\lambda  c. (cb^l _1 \circ \ldots \circ c b^l_n )\otimes (c b^l_0 \circ cb' \circ cb^r_1\circ \ldots \circ cb^r_m)  \otimes q'\otimes \cod{\tail{\alpha}}&\text{if } \delta (b^r_0, \hd{\alpha},  q)= (b', q', \text{Left})
		\end{cases}
		%
		%\quad \text{if } \delta (b^r_0, \hd{\alpha},  Q)= (b', Q', \text{Right}\\
		% \quad \text{if } \delta (b^r_0, \hd{\alpha},  Q)= (b', Q', \text{Left})
	\end{array}
	$$
	\caption{Reductions for $\decomp$ and $\comp$.}
	\label{fig:red-decomp-comp}
\end{figure*}

Analogously, the composition phase is described by the term $\comp$ of type $\idtype \limp \tm$ defined as follows:
\begin{multline}\label{eqn:comp}
	\comp \dfn \lambda s. \lambda c. \lettensor{s\, c }{l}{r \otimes c_l \otimes b_l \otimes c_r \otimes b_r\otimes q \otimes \alpha }{}\\ 
	\lettensor{\popbool \, \alpha}{h}{t}{} (\lettensor{\cod{\delta}(b_r \otimes h \otimes q)}{b'}{q'\otimes m}{}\\
	((\ifte m R L ) b'q' (l \otimes r \otimes c_l \otimes b_l \otimes c_r)\otimes t 
	)
\end{multline}
where $\ifte m R L $ is defined as in~\Cref{prop:conditional}, and
\[
\def\arraystretch{1.2}
\arraycolsep=1.4pt
\begin{array}{rcl}
	R &\dfn & \lambda b' .\lambda q'. \lambda s.\mathsf{\, let}\, l \otimes r\otimes c_l \otimes b_l \otimes c_r =s \mathsf{\, in \, }(c_r b' \circ c_l b_l\circ l)\otimes r \otimes q'\\[2ex]
	L &\dfn & \lambda b' .\lambda q'. \lambda s.  \, \mathsf{let\, }l \otimes r\otimes c_l \otimes b_l \otimes c_r=s\mathsf{\, in\, }  l \otimes (c_l b_l \circ c_r b' \circ r) \otimes q'\\
	%		&&\lettensor{}{l}{r\otimes c_l \otimes b_l \otimes c_r}{l \otimes (c_l b_l \circ c_r b' \circ r) \otimes q'}\\
\end{array}
\]
The term $\comp$ in~\Cref{eqn:comp} satisfies the reduction in~\Cref{fig:red-decomp-comp}, where $\delta: \{\false, \true\}^{k+2}\to \{\false, \true\}^{k+3}$ is the transition function of the Turing machine, which takes as an extra input  the first bit of the current advice stack, i.e., the head of the stream $\alpha$.

	%		\comp (	\lambda  c. (cb^l _1 \circ \ldots \circ c b^l_n )\otimes (cb^r_1\circ \ldots \circ cb^l_m) \otimes c \otimes b^l_0 \otimes c \otimes b^r_0  \otimes \cod{q}\otimes  \cod{\alpha})\\  \betar^* \\
	%		\lambda  c. (cb' \circ cb^l _0 \circ \ldots \circ c b^l_n )\otimes (cb^r_1\circ \ldots \circ cb^l_m)  \otimes \cod{q'}\otimes \cod{\tail{\alpha}}
	%	\end{multline*}
	%		\comp (	\lambda  c. (cb^l _1 \circ \ldots \circ c b^l_n )\otimes (cb^r_1\circ \ldots \circ cb^l_m) \otimes c \otimes b^l_0 \otimes c \otimes b^r_0  \otimes Q\otimes  \cod{\alpha})\\  \betar^* \\
	%		\lambda  c. (cb^l _1 \circ \ldots \circ c b^l_n )\otimes (c b^l_0 \circ cb' \circ cb^r_1\circ \ldots \circ cb^l_m)  \otimes Q'\otimes \cod{\tail{\alpha}}
	%	\end{multline*}
\begin{remark}\label{rem:instantiation}
	Notice that in $\comp$ the variable $m$ has type $\Bool$ and is applied to the terms $R$ and $L$. This requires to apply to the variable $m$ the rule $\er \forall$, which instantiates the type variable $X$ with the $\oc$-free type $\Bool \limp \Bool^k \limp ((X \limp X)^2 \otimes (\Bool \limp X \limp X)\otimes \Bool \otimes (\Bool \limp X \limp X) )\limp (X \limp X)^2\otimes \Bool^k$.
\end{remark}

By combining the above terms we obtain the encoding of the  Turing machine transition step:
\begin{equation}\label{eqn:transition}
	\tr \dfn \comp \circ \decomp
\end{equation}
with type $\tm \limp \tm$.

We now need a  term that encodes the  \emph{initialisation} of  the  machine with an input Boolean  string. This is given by the term $\In$ 
of type $\String[\tm]\limp \tm \limp \tm$ defined as follows:
\begin{equation}\label{eqn:initialisation}
	\In \dfn \lambda s. \lambda m. s(\lambda b. (Tb)\circ \decomp)\, m
\end{equation}
where $\decomp$ is defined as in~\Cref{eqn:decomp} and
\begin{equation*}
	\def\arraystretch{1.2}
	\arraycolsep=1.4pt
	\begin{array}{rcl}
		T &\dfn&  \lambda b. \lambda s. \lambda c.\,  \lettensor{sc}{l}{r \otimes c_l \otimes b_l \otimes c_r \otimes b_r \otimes q \otimes \alpha}{}\\
		&&	(\letid{\erasetype{\Bool}\, b_r}{(Rbq(l \otimes r \otimes c_l \otimes b_l \otimes c_r))}\otimes \alpha) \\[2ex]
		R& \dfn& \lambda b'. \lambda q'. \lambda s. \, \lettensor{s}{l}{r \otimes c_l \otimes b_l \otimes c_r}{}((c_r b' \circ c_l b_l \circ l)\otimes r \otimes q')
	\end{array}
\end{equation*}
Intuitively, the term $\In$ defines a function that, when supplied with a Boolean string $s$ and a Turing machine $M$, writes $s$ as input  on the tape of $M$. 

Finally, we need a term that \emph{extracts} the output string from the final configuration. This is given by the term $\extr$ of type $\tm \limp \String$, defined as follows:
\begin{equation}\label{eqn:extr}
	\def\arraystretch{1.2}
	\arraycolsep=1.4pt
	\begin{array}{rcl}
		\extr& \dfn& \lambda s. \lambda c. \, \lettensor{sc}{l}{r \otimes q \otimes \alpha}{}({\letid{\erasetype{\Bool^{k+1}}\, (q \otimes (\erasebool 	\, {\alpha}))}{l \circ r})}
	\end{array}
\end{equation}
where $\erasebool$ is the eraser for streams (see~\Cref{fig:typestream}) and $\erasetype{\Bool^{k+1}}$ is the eraser for $\Bool^{k+1}$ given by~\Cref{prop:linear-erasure}.

We can now prove our fundamental theorem:

	\begin{theorem}\label{thm:completeness} 
	Let $f: (\{\false,\true\}^*)^n \to \{\false,\true\}^*$:
	\begin{enumerate}
		\item \label{enum:completeness1} If  $f\in \fppoly$ then $f$ is  representable in $\typestream$;
		\item \label{enum:completeness2} If  $f\in \fptime$ then $f$ is representable in $\pta$.
	\end{enumerate}
\end{theorem}
\begin{proof}
	We only show the case where $f$ is a unary function for the sake of simplicity. 
	Let us prove~\Cref{enum:completeness1}. If $f \in \fppoly$ then,   $f \in \fptime(\RR)$ by~\Cref{prop:fppoly=fptime(RR)}, so  there is a polynomial Turing machine computing $f$ that performs  polynomially many queries to bits of a Boolean stream $\alpha$.  Let $p(x)$ and $q(x)$ be polynomials bounding, respectively, the  time and space of the Turing machine, and let  $\degree p=m$ and $\degree q = l$ be their degrees. By~\Cref{thm:polynomial-completeness} we obtain $\cod{p}$  and $\cod{q}$ typable as:
	\[
	\def\arraystretch{1.2}
	\arraycolsep=2pt
	\begin{array}{l}
		y:\oc^{m-1} \Nat[m+1] \vdash \cod{p}: \Nat[1]\\
		z:\oc^{l-1} \Nat[l+1] \vdash \cod{q}: \Nat[1]
	\end{array}
	\]
	where $\Nat[i]$ is shorthand notation for $\Nat_{\tm}[i]$ (see~\Cref{defn:tiering}). By applying~\Cref{lem:substitution} and~\Cref{prop:length}, we have:
	\[
	\def\arraystretch{1.2}
	\arraycolsep=2pt
	\begin{array}{l}
		s':\oc^{m-1} \String[m+1] \vdash P: \Nat[1]\\
		s'':\oc^{l-1} \String[l+1] \vdash Q: \Nat[1]
	\end{array}
	\]
	where $P\dfn  \cod{p}[\length\, s'/y]:$ and $Q \dfn \cod{q}[\length \, s''/z]$. On the other hand, by applying again~\Cref{lem:substitution} and~\Cref{eqn:init,eqn:initialisation,eqn:transition,eqn:extr}:
	\[
	t: \String[1], p:\Nat[1], q: \Nat[1] \vdash \extr ((p \, \tr )(\In\, t\, (\init \, q\,  \alpha) )): \String
	\]
	By putting everything together we have:
	\[
	s':\oc^{m-1} \String[m+1],	s'':\oc^{l-1} \String[l+1] ,  t: \String[1]\vdash N: \String
	\]
	where  $N\dfn \extr ((P \, \tr )(\In\, t\, (\init \, Q \,  \alpha) ))$. 
	By repeatedly applying $\wnbrule$ and  $\dows{k}$ for appropriate $k$ we obtain a term $M$ representing $f$ such that:
	\[
	s:\oc^{\max (m, l)} \String[\max(m, l)+1] \vdash  M: \String
	\]
	By applying $\dows{1}$ we obtain
	\[
	x:\String[\oc^{\max (m, l)} \String[\max(m, l)+1]] \vdash  M[\dows{1}\, x/s]: \String
	\]
	We set $\cod{f}\dfn M[\dows{1}\, x/s] $, so that $x: \String[] \vdash \cod{f}: \String$.
	
	\Cref{enum:completeness2} follows directly from~\Cref{enum:completeness1}  by stripping away streams from the above encoding.
\end{proof}

\subsection{Translations and  completeness theorem}

We can compare the computational strength of  type systems and  inductive proof systems based on parsimonious linear logic by means of a translation.

	\begin{definition}[Translation]
	We define a translation $(\_)^\dagger$ from  $\typestream$ to $\dpll$ mapping typing derivations of $\typestream$ to derivations of $\dpll$ such that, when restricted to typing derivations of $\pta$, it returns  derivations of $\pll$:
	\begin{itemize}
		\item It maps types of $\typestream$ to formulas of $\dpll$  according to the following inductive definition:
		\[
		\begin{array}{rcl}
			X^\dagger &\dfn &X \\
			\unit^\dagger &\dfn &\unit \\ 
			(\sigma \limp A)^\dagger &\dfn &\sigma^\dagger \limp A^\dagger \\
			(	\forall X.A)^\dagger &\dfn &	\forall X.A^\dagger \\ 
			(	\sigma \otimes \tau)^\dagger &\dfn &	\sigma^\dagger \otimes \tau^\dagger \\   
			(\oc	\sigma )^\dagger &\dfn &\oc 	\sigma^\dagger  \\   
			(\streamtype \sigma )^\dagger &\dfn &\oc{	\sigma^\dagger  }
		\end{array}
		\]
		we notice that $\sigma^\dagger[\tau^\dagger/X]= (\sigma[\tau/X])^\dagger$.
		\item It maps a context  $\Gamma=  x_1: \sigma_1, \ldots, x_n: \sigma_n$ to a sequent $\Gamma^\dagger = \sigma_1^\dagger, \ldots, \sigma_n^\dagger$. 
		\item It maps judgements $\Gamma \vdash M: \tau$ to sequents $\cneg{\Gamma^\dagger},\tau^\dagger$.
		\item It maps a typing rule to gadgets as in~\Cref{fig:translation} and~\Cref{fig:translation1}. 
	\end{itemize}
\end{definition}

\begin{figure*}[!t]
	\centering
	\[
	\small
			\vlinf{\axr}{}{x: A \vdash x: A}{}
			\ \ 
			\mapsto 
			\ \ 
			\vlinf{\axr}{}{\cneg{A^\dagger}, A}{} 
			\qquad  \quad
			\vlinf{\ir \identity}{}{\vdash \identity: \unit}{}
			\ \
			\mapsto 
			\ \
			\vlinf{\unit}{}{\unit}{}
		\]
		\medskip
		\[
				\vliinf{\er\identity}{}{\Gamma, \Delta \vdash	\letid{N}{M} : \sigma}{\Gamma \vdash N: \unit}{\Delta \vdash M: \sigma}
			\ \
			\mapsto 
			\ \
			\vlderivation{
				\vliin{\cutr}{}{\cneg{\Gamma^\dagger}, \cneg{\Delta^\dagger}, \sigma^\dagger}
				{
					\vlhy{\cneg{\Gamma^\dagger}, \unit^\dagger}
				}
				{
					\vlin{\bot}{}{\bot, \cneg{\Delta^\dagger}, \sigma^\dagger}
					{
						\vlhy{\cneg{\Delta^\dagger}, \sigma^\dagger}
					}
				}
			}
		\]\medskip
		\[
			\vlinf{\ir \limp }{}{\Gamma \vdash \lambda x. M: \sigma  \limp B}{\Gamma , x:\sigma \vdash M:B}
			\ \
			\mapsto 
			\ \
			\vlinf{\parr}{}{\cneg{\Gamma^\dagger}, (\sigma\limp B)^\dagger}{\cneg{\Gamma^\dagger}, \cneg{\sigma^\dagger}, B^\dagger}
\]
\[
\small
			\vliinf{\er \limp}{}{\Gamma, \Delta \vdash MN : B}{\Gamma \vdash M: \sigma \limp B}{\Delta \vdash N: \sigma }
			\ \ 
			\mapsto 
			\ \  
			\vlderivation{
				\vliin{\cutr}{}{\cneg{\Gamma^\dagger},\cneg{\Delta^\dagger}, B^\dagger}
				{
					\vlhy{\cneg{\Gamma^\dagger}, (\sigma \limp B)^\dagger}
				}
				{
					\vliin{\otimes}{}{\cneg{\Delta^\dagger}, A^\dagger \otimes\cneg{B^\dagger}, B^\dagger}
					{
						\vlhy{\cneg{\Delta^\dagger}, A^\dagger}
					}
					{
						\vlin{\axr}{}{\cneg{B^\dagger}, B^\dagger}{\vlhy{}}
					}
				}
			}
\]
\medskip
\[
		\small
			\vliinf{\ir \otimes}{}{\Gamma, \Delta \vdash M \otimes N : \sigma  \otimes \tau}{\Gamma \vdash M: \sigma}{\Delta \vdash  N: \tau}
			\ \  
			\mapsto 
			\ \ 
			\vliinf{\otimes}{}{\cneg{\Gamma^\dagger},\cneg{\Delta^\dagger}, (\sigma \otimes \tau)^\dagger}
			{
				\cneg{\Gamma^\dagger}, \sigma^\dagger
			}{
				\cneg{\Delta^\dagger}, \tau^\dagger
			}
\]
\medskip
\[
\small
			\vliinf{\er \otimes }{}{\Gamma, \Delta   \vdash \lettensor{M \otimes N}{x}{y}{P}: C}{\Gamma \vdash M \otimes N : \sigma \otimes \tau}{\Delta, x: \sigma, y: \tau \vdash P: C}
			\ \  
			\mapsto 
			\ \ 
			\vlderivation{
				\vliin{\cutr}{}{\cneg{\Gamma^\dagger}, \cneg{\Delta^\dagger},C^\dagger }
				{
					\vlhy{\cneg{\Gamma^\dagger}, (\sigma \otimes \tau)^\dagger}
				}
				{
					\vlin{\parr}{}{\cneg{\Delta^\dagger}, \cneg{\sigma^\dagger}\parr\cneg{\tau^\dagger}, C^\dagger}
					{
						\vlhy{\cneg{\Delta^\dagger}, \cneg{\sigma^\dagger},\cneg{\tau^\dagger}, C^\dagger }
					}
				}
			}
		\]
		\medskip
		\[
		\small
			\vlinf{\ir \forall}{}{\Gamma \vdash M: \forall X. A}{\Gamma \vdash M:A} 
			\ \  
			\mapsto 
			\ \  
			\vlinf{\forall}{}{\cneg{\Gamma^\dagger}, (\forall X. A)^\dagger}{\cneg{\Gamma^\dagger}, A^\dagger} 
\qquad 
			\vlinf{\er \forall}{}{\Gamma \vdash M: A[B/X]}{\Gamma\vdash M:  \forall X. A}
			\ \  
			\mapsto 
			\ \  
			\vlderivation{
				\vliin{\cutr}{}{\cneg{\Gamma^\dagger},(A[B/X])^\dagger }
				{
					\vlhy{\cneg{\Gamma^\dagger}, (\forall X. A)^\dagger}
				}
				{
					\vlin{\exists}{}{ \exists X. \cneg{ A^\dagger}, A^\dagger[B^\dagger/X]}{
						\vlin{\axr}{}{\cneg{A^\dagger}[{B^\dagger}/X], A^\dagger[B^\dagger/X]}{\vlhy{}}
					}
				}
			}
\]\medskip
\[
\small
			\vlinf{\fprule}{}{\oc \Gamma \vdash   M: \oc  \sigma }{ \Gamma \vdash M : \sigma }
			\ \  
			\mapsto 
			\ \  
			\vlinf{\fprule}{}{\oc \cneg{ \Gamma^\dagger}, (\oc  \sigma )^\dagger}{ \cneg{\Gamma^\dagger}, \sigma^\dagger }
\qquad 
			\vlinf{\wnwrule}{}{\Gamma, x: \oc \sigma    \vdash   M :  \tau}{\Gamma \vdash M: \tau}
			\ \  
			\mapsto 
			\ \  
			\vlinf{\wnwrule}{}{\cneg{\Gamma^\dagger},  {\cneg{(\oc \sigma)^\dagger}} ,   \tau^\dagger}{{\cneg{\Gamma^\dagger}}, \tau^\dagger}
			\]
			\medskip
			\[
			\small
			\vlinf{\wnbrule}{}{\Gamma, x: \oc \sigma    \vdash M[x/y, x/z] :  \tau}{\Gamma, y:  \sigma,  z: \oc\sigma \vdash M: \tau}
			\ \  
			\mapsto 
			\ \  
			\vlinf{\wnbrule}{}{\cneg{\Gamma^\dagger},    {\cneg{(\oc\sigma)^\dagger}} , \tau^\dagger}{\cneg{\Gamma^\dagger},  \cneg{\sigma^\dagger}, \cneg{(\oc\sigma)^\dagger} , \tau^\dagger}
\]
	\caption{Translation from $\pta$ to $\pll$.}
	\label{fig:translation}
\end{figure*}
\begin{figure*}[t]
	\centering
	\small
	\adjustbox{max width= \textwidth}{$
		\begin{array}{c}
			\vliiiiinf{\stream}{}{\vdash \streamterm{M}: \streamtype \sigma}{\vdash \streamterm M(0): \sigma}{\vdash \streamterm M(1): \sigma}{\ldots}{\vdash \streamterm M(n):  \sigma}{\ldots}
			\quad 
			\mapsto 
			\quad 
			\vliiiiinf{\nuprule}{}{ (\streamtype \sigma)^\dagger}{\sigma^\dagger}{ \sigma^\dagger}{\ldots}{\sigma^\dagger}{\ldots}
		\end{array}$}
	
	\vspace{0.2cm}
	\adjustbox{max width= \textwidth}{$
		\begin{array}{c}
			\vlinf{\erasebool}{}{\vdash \erasebool: \streamtype \sigma \limp \unit}{}
			\quad 
			\mapsto 
			\quad 
			\vlderivation{
				\vlin{\parr}{}{(\streamtype \sigma \limp \unit)^\dagger}{
					\vlin{\wnwrule}{}{  \cneg{(\streamtype \sigma)^\dagger}, \unit^\dagger}
					{
						\vlin{\unit}{}{\unit^\dagger}{\vlhy{}}
					}
				}
			}
			\qquad \qquad \qquad 
			\vlinf{\popbool}{}{\vdash \popbool: \streamtype \sigma   \limp \sigma  \otimes \streamtype \sigma }{}
			\quad 
			\mapsto 
			\quad 
			\vlderivation{
				\vlin{\parr}{}{(\streamtype \sigma   \limp \sigma  \otimes \streamtype \sigma)^\dagger}{
					\vlin{\wnbrule}{}{ \cneg{(\streamtype\sigma)^\dagger},  (\sigma  \otimes \streamtype \sigma)^\dagger}
					{
						\vliin{\otimes}{}{\cneg{\sigma^\dagger},\cneg{(\streamtype\sigma)^\dagger}, \sigma^\dagger \otimes  \oc \sigma^\dagger }
						{
							\vlin{\axr}{}{\cneg{\sigma^\dagger}, \sigma^\dagger}{\vlhy{}}
						}
						{
							\vlin{\fprule}{}{\cneg{(\streamtype\sigma)^\dagger}, \oc \sigma^\dagger}{\vlin{\axr}{}{ \cneg{\sigma^\dagger},  \sigma^\dagger}{\vlhy{}}}
						}
					}
				}
			}
		\end{array}$}
	\caption{Translation from $\typestream$ to $\dpll$.}
	\label{fig:translation1}
\end{figure*}

The  two lemmas below represent stronger versions of \Cref{lem:substitution} and~\Cref{prop:subject-reduction}, respectively.

\begin{lemma}\label{lem:stronger-substitution}
	For any $\der_1: \Gamma \vdash M: \sigma $ and $\der_2: \Delta \vdash N: \tau$ there is $S(\der_1, \der_2)$ such that:
	\[
        \small
		\vlderivation{
			\vliin{\cutr}{}{\cneg{\Gamma^\dagger}, \cneg{\Delta^\dagger}, \sigma^\dagger}
			{
				\vlhy{ \left(
					\vlderivation{
						\vldr{\der_1}{\Delta \vdash N: \tau}
					}
					\right)^\dagger
				}
			}
			{
				\vlhy{  \left(
					\vlderivation{
						\vldr{\der_2}{\Gamma, x: \tau  \vdash M: \sigma }
					}
					\right)^\dagger
				}
			}
		}
		\quad  \cutelims \quad 
		\left(
		\vlderivation{
			\vltr{S(\der_1, \der_2) }{\Gamma,\Delta  \vdash M[N/x]: \sigma}{\vlhy{\ \ \ \ \ \  }}{\vlhy{\ \ \  \ \  \ \ \  }}{\vlhy{\ \ \ \ \  \ }}
		}
		\right)^\dagger
\]
\end{lemma}
\begin{proof}
	It suffices to check that the  derivation $S(\der_1, \der_2)$ can be stepwise computed by the cut elimination rules.
\end{proof}

\begin{lemma}\label{lem:stronger-subject-reduction}
	Let $\der_1:\Gamma \vdash M_1: \sigma$. 	If	$M_1 \betar M_2$ then there is a typing derivation  $\der_2:\Gamma \vdash M_2: \sigma$ such that $\der_1^\dagger \cutelims \der_2^\dagger$.
\end{lemma}
\begin{proof}
	It suffices to check the statement for the reduction rules in~\Cref{defn:terms}, by inspecting the cut elimination rules of $\dpll$. We consider the two most relevant  cases. If $M_1= 	\popbool \, \streamterm M $ and  $M_2= \hd{\streamterm{M}} \otimes \tail{\streamterm M}$, then w.l.o.g.~$\der_1$ has the following shape:
	\[
    \small
		\vlderivation{
			\vliin{\er\limp}{}{
				\vdash \popbool\, \streamterm{M}: \streamtype \sigma \otimes\streamtype{\sigma}
			}{
				\vlin{\popbool}{}{\vdash \popbool: \streamtype{\sigma} \limp \sigma \otimes\streamtype{\sigma}}{\vlhy{}}
			}{
				\vliiiiin{\stream}{}{\vdash \mathbf{M}: \streamtype \sigma}{\vldr{}{\vdash \streamterm M(0): \sigma}}{\vldr{}{\vdash \streamterm M(1): \sigma}}{\vlhy{\ldots}}{\vldr{}{\vdash \streamterm M(n): \sigma}}{\vlhy{\ldots}}
			}
		}
	\]
	We set $\der_2$ as the following typing derivation:
\[
\small
		\vlderivation{
			\vliin{\otimes}{}{
				\vdash \streamterm{M}(0) \otimes tl(\streamterm{M}):\sigma \otimes\streamtype{\sigma}
			}{
				\vlin{}{}{\vdash \streamterm M(0): \sigma}{\vlhy{\vdots}}
			}{
				\vliiiiin{\stream}{}{\vdash tl(\streamterm{M}): \streamtype \sigma}{\vldr{}{\vdash \streamterm M(1): \sigma}}{\vldr{}{\vdash \streamterm M(2): \sigma}}{\vlhy{\ldots}}{\vldr{}{\vdash \streamterm M(n+1): \sigma}}{\vlhy{\ldots}}
			}
		}
	\]
	It is easy to check that $\der_1^\dagger \cutelims \der_2^\dagger$. 
	
	Let $M_1= (\lambda x. P)N$ and $M_2= P[N/x]$. By  inspecting the typing rules in~\Cref{fig:typestream}  $\der$ must have the following structure:
	{
		\vspace{-0.2cm}
		$$
            \small
		\vlderivation{
			\vlin{}{}{\Sigma, \Delta \vdash (\lambda x. P)Q: \sigma}
			{
				\vlin{}{}{\delta\ \vdots}
				{
					\vliin{\er{\limp}}{}{\Sigma', \Delta' \vdash (\lambda x. P')Q':  B'}
					{
						\vltr{\der_1}{\Sigma' \vdash \lambda x. P': \tau \limp B'}{\vlhy{\ }}{\vlhy{\ }}{\vlhy{\ }}            
					}{
						\vltr{\der_2}{\Delta' \vdash Q' : \tau}{\vlhy{\ }}{\vlhy{\ }}{\vlhy{\ }}
					}
				}
			}	
		}
		$$
	}
	where:
	\begin{itemize}
		\item  $\Gamma= \Sigma, \Delta$ and $\sigma= {{\oc \overset{n}{\ldots} \oc}} \forall \vec X. B$, for some $n\geq 0$,  $\vec X$, and $B$.
		\item $\delta$ is a sequence of rules in $\{ \wnwrule, \wnbrule, \fprule, \ir \forall, \er \forall\}$, 
		\item $B= B'[\vec C/\vec Y]$, for some $\vec C$ and $\vec Y$ not free in $\Sigma', \Delta'$
		\item $P'[\vec x/\vec y]=P$ and $Q'[\vec x/\vec y]=Q$, for some $\vec x, \vec y$. 
	\end{itemize}
	By a similar reasoning, $\der_1$ has the following shape:
	$$
        \small
	\vlderivation{
		\vlin{}{}{\Sigma' \vdash \lambda x. P': \tau \limp B'}
		{
			\vlin{}{}{\varepsilon\ \vdots}
			{
				\vlin{\ir \limp}{}{\Sigma''\vdash \lambda x. P'': \tau' \limp B''}
				{
					\vltr{\der'_1}{\Sigma'', x:\tau' \vdash P'':  B''}{\vlhy{\ }}{\vlhy{\ \ \ }}{\vlhy{\ }}
				}
			}
		}  
	}
	$$
	where:
	\begin{itemize}
		\item $\varepsilon$ is sequences of typing rules in $\{ \wnwrule, \wnbrule, \ir \forall, \er \forall\}$,  
		\item $B'= B''[\vec D/\vec Z]$ and $\tau= \tau'[\vec D/\vec Z]$, for some $\vec D$ and $\vec Z$ not free in $\Sigma''$
		\item $P''[\vec z/\vec w]=P$ for some $\vec z, \vec w$. 
	\end{itemize}
	Since $\tau= \tau'[\vec D/\vec Z]$, $B'= B''[\vec D/\vec Z]$ and $\vec Z$ do not occur free in $\Sigma''$,  by~\Cref{lem:substitution-types}  there is a typing derivation  $\der''_1$ of $\Sigma'', x: \tau \vdash P'': B'$. By~\Cref{lem:substitution} there is a typing derivation $S(\der''_1, \der_2)$ of $\Delta', \Sigma'' \vdash P''[Q'/x]: B'$. Finally, by applying the sequences of rules $\delta$ and $\varepsilon$ we obtain:
	{
		\vspace{-0.2cm}
		$$
		\begin{array}{rcl}
            \small
			\widehat{\der} &\dfn& 	\vlderivation{
				\vlin{}{}{\Sigma, \Delta \vdash P[Q/x]: \sigma}
				{
					\vlin{}{ }{  \vdots} 
					{
						\vlin{}{}{\Delta', \Sigma' \vdash P'[Q'/x]: B'}
						{
							\vlin{}{}{  \vdots}
							{
								\toks0={0.5}
								\vltrf{S(\der''_1, \der_2)}{\Delta', \Sigma'' \vdash P''[Q'/x]: B'}{\vlhy{\ \  \ 	\  \ \  \  \  }}{\vlhy{\ \ \ \  \ \ \ \ \  \ \ 	 }}{\vlhy{\ \ \  \  \ \  \ 	 }}{\the\toks0}
							}
						}
					}
				}
			}
		\end{array}
		$$
	}
	
	Let us now show that $\der^\dagger\cutelims \widehat{\der}^\dagger$. First, notice that $\der^\dagger$ is as follows:
	$$
        \small
	\vlderivation{
		\vlin{}{}{\cneg{\Sigma^\dagger}, \cneg{\Delta^\dagger},\sigma^\dagger}
		{
			\vlin{}{}{\delta^\dagger \, \vdots}
			{
				\vliin{\cutr}{}{\cneg{(\Sigma')^\dagger}, \cneg{(\Delta')^\dagger},(B')^\dagger }
				{
					\vlin{}{}{\cneg{(\Sigma')^\dagger}, \cneg{\tau^\dagger}\parr (B')^\dagger}{
						\vlin{}{}{\varepsilon^\dagger \ \vdots}{
							\vlin{\parr}{}{\cneg{(\Sigma'')^\dagger}, \cneg{(\tau')^\dagger}\parr (B'')^\dagger}{
								\vltr{(\der'_1)^\dagger}{\cneg{(\Sigma'')^\dagger}, \cneg{(\tau')^\dagger}, (B'')^\dagger}{\vlhy{\ \  }}{\vlhy{\ \ \ \ \ \   }}{\vlhy{\ \  }}
							}
						}
					}
				}
				{
					\vliin{\otimes}{}{\cneg{(\Delta')^\dagger},  \tau^\dagger \otimes \cneg{(B')^\dagger}, (B')^\dagger}
					{
						\vltr{\der_2^\dagger}{\cneg{(\Delta')^\dagger}, {\tau^\dagger}, }{\vlhy{\ \  }}{\vlhy{\ \ \ }}{\vlhy{\ \  }}
					}
					{
						\vlin{\axr}{}{\cneg{(B')^\dagger}, (B')^\dagger}{\vlhy{}}
					}
				}
			}
		}
	}
	$$
	Moreover, 	since $\tau= \tau'[\vec D/\vec Z]$, $B'= B''[\vec D/\vec Z]$ and $\vec Z$ do not occur free in $\Sigma''$, the above derivation reduces by cut elimination to the following:
	$$
        \small
	\vlderivation{
		\vlin{}{}{\cneg{\Delta^\dagger}, \cneg{\Sigma^\dagger},\sigma ^\dagger}
		{
			\vlin{}{}{\vdots}
			{
				\vlin{}{}{\cneg{(\Delta')^\dagger}, \cneg{(\Sigma')^\dagger},(B')^\dagger} 
				{
					\vlin{}{}{\vdots}
					{
						\vliin{\cutr}{}{\cneg{(\Delta')^\dagger}, \cneg{(\Sigma'')^\dagger},(B')^\dagger}
						{
							\vltr{\der_2^\dagger}{\cneg{(\Delta')^\dagger}, {\tau^\dagger}, }{\vlhy{\ \  }}{\vlhy{\ \  \ }}{\vlhy{\ \  }}
						}
						{
							\vltr{(\der''_1)^\dagger}{\cneg{(\Sigma'')^\dagger}, \cneg{\tau^\dagger}, (B')^\dagger}{\vlhy{\ \ \ \  }}{\vlhy{\ \ \ \   }}{\vlhy{\ \  }}
						}
					}
				}
			}
		}
	}
	$$
	for some $\varepsilon^\dagger$ and $\delta^\dagger$. By~\Cref{lem:stronger-substitution}  the above derivation reduces by cut elimination to the following:
	$$
        \small
	\vlderivation{	\vlin{}{}{\cneg{\Delta^\dagger}, \cneg{\Sigma^\dagger},\sigma ^\dagger}
		{
			\vlin{}{}{\vdots}
			{
				\vlin{}{}{\cneg{(\Delta')^\dagger}, \cneg{(\Sigma')^\dagger},(B')^\dagger} 
				{
					\vlin{}{}{\vdots}
					{
						\vlhy{
							\left(
							\vlderivation{
								\vltr{S(\der''_1, \der_2) }{\Delta', \Sigma'' \vdash P''[Q'/x]: B'}{\vlhy{\ \ \ \ \ \  }}{\vlhy{\ \ \  \ \  \  \ \ \ \ }}{\vlhy{\ \ \ \ \  \ }}
							}
							\right)^\dagger}
	}}}}}
	$$
	which is $\widehat{\der}^\dagger$.
\end{proof}
	%				If $s \in \{0,1\}^*$. Then, there is  $\der_s: \vdash \cod{s}: \String[]$ in $\typestream$ such that $\der_s^\dagger \cutelims \cod{s}$ in $\dpll$.
	%			\end{lem}

\begin{theorem}\label{thm:embedding}  
	Let $f: (\{\false,\true\}^*)^n\to \{\false,\true\}^*$:
	\begin{enumerate}
		\item \label{enum:embedding1} If  $f$ is representable in $\pta$  then it is in  $\pll$;
			\item \label{enum:embedding2} If  $f$ is representable in $\typestream$  then so it is in  $\dpll$.
	\end{enumerate}
\end{theorem}
\begin{proof}
	We  only consider the case where $f$ is unary for the sake of simplicity. Let $\cod{f}$ be a typable term  of $\typestream$ representing $f$,  so that $\cod{f}\,\cod{s}\betar^* \cod{f(s)}$  for any  $s\in \{\false,\true\}^*$. Consider the following derivation:
	$$
	\der=	\vlderivation{
		\vliin{\er \limp}{}{\vdash \cod{f}\, \cod{s}:\String[]}{\vltr{\der_f}{\vdash \cod{f}:\String[]\limp \String[]}{\vlhy{\ \ }}{\vlhy{\ \ \ }}{\vlhy{\  \ }}
		}{\vltr{\der_s}{\vdash \cod{s}: \String[]}{\vlhy{\ \ }}{\vlhy{\ \ \ }}{\vlhy{\  \ }}
		}
	}
	$$
	By repeatedly applying~\Cref{lem:stronger-subject-reduction} there is $\der_{f(s)}$ such that
\[
		\der^\dagger	= \vlderivation{
			\vliin{\cutr}{}{\String[]}{\vltr{\der_f^\dagger}{\String[]\limp \String[]}{\vlhy{\ \ }}{\vlhy{\ \ \ }}{\vlhy{\  \ }}
			}
			{
				\vliin{\otimes}{}{\String[]\otimes \cneg{\String[]}, \String[]}	{\vltr{\der_s^\dagger }{ \String[]}{\vlhy{\ \ }}{\vlhy{\ \  }}{\vlhy{\  \ }}
				}{\vlin{\axr}{}{\cneg{\String[]},\String[]}{\vlhy{}}}
			}
		}
		\quad \cutelims \quad 
		\vlderivation{
			\vltr{\der_{f(s)}^\dagger}{\String[]}{\vlhy{\ \ }}{\vlhy{\ \ \ \ \ \ \  }}{\vlhy{\  \ }}
		}
\]
	in $\dpll$, where we  can safely assume that  $\der_s^\dagger\cutelims \cod{s}$ and $\der_{f(s)}^\dagger\cutelims \cod{f(s)}$  in $\dpll$.  This means that $\der_f^\dagger$ represents $f$ in $\dpll$. If moreover $\cod{f}$ is typable term of $\pta$ then $\der_f^\dagger$ represents $f$ in $\pll$.
\end{proof}

\begin{theorem}[Completeness]\label{thm:completeness-non-wellfounded}
	Let $f: (\{\false,\true\}^*)^n\to \{\false,\true\}^*$:
	\begin{enumerate}
		\item \label{enum:comp1} If  $f\in \fppoly$  then $f$ is representable in   $\nupll$;
		\item \label{enum:comp2} If  $f\in \fptime$ then $f$  is representable  in  $\cpll$.
	\end{enumerate}
\end{theorem}
\begin{proof}
	For $i \in \set{1,2}$, Item (i) follows from~\Cref{thm:completeness}.(i),  \Cref{thm:embedding}.(i) and~\Cref{thm:simulation}.(i).
\end{proof}

We conclude  the section by  discussing  some computational aspects of  the finiteness condition  on the typing rule $\nuprule$, and the restriction on second-order    instantiation to $(\oc, \omega)$-free types  in $\typestream$.

\begin{remark}\label{rem:typ-1-completeness}
	If the side condition on the typing rule $\stream$ (i.e.,  that 	$\set{\streamterm{M}(i)\mid i\in\Nset}$ is finite) were dropped, then  $\typestream$ would  represent any function on natural numbers. Indeed, given a function $f: \Nset \to \Nset$, we can define the term $\streamterm F \dfn \cod{f(0)}:: \cod{f(1)}::\ldots$ with type $\streamtype{\oc \Nat}$,  encoding all the values of the function $f$. We set $A\dfn  \Nat[\unit] \otimes \streamtype  \Nat[\unit]$ and define:
	\[
	\def\arraystretch{1.2}
	\arraycolsep=1.5pt
	\begin{array}{rcll}
		\mathsf{step} &\dfn & \lambda x. \lettensor{x}{y_1}{y_2}{
			\letid{y_1 \,  (\lambda z.z) \, \identity}{ \popbool \, y_2}}\\
		\cod{f}&\dfn &  \lambda n. \mathsf{let \, }x \otimes y = \!(n \,   \mathsf{step}\,  ( \popbool \ \streamterm F)) \mathsf{\, in\, } \letid{\! \!(\erasebool \,  y)}{x} 
	\end{array}
	\]
	where $	\mathsf{step}$ has type $A \limp A$ and $	\cod{f}$ has type $\Nat[A]  \limp \Nat[\unit]$.	It is easy to check that $\cod{f}\, \cod{n}\betar^* \cod{f(n)}$, for any $n \in \Nset$.
	
	This observation can be easily adapted to the proof systems  $\dpll$ (w.r.t.~the finiteness condition on $\nuprule$) and $\nupll$  (w.r.t.~ the weak regularity condition). 
\end{remark}

\begin{remark}\label{rem:exponential-blow-up}
	If the $(\oc, \omega)$-freeness condition on $\er \forall$ were dropped then  $\typestream$ could  represent exponential functions. Indeed, we can define the following functions:
	\[
	\def\arraystretch{1.2}
	\arraycolsep=2pt
	\begin{array}{rcll}
		\mathsf{plustwo}&\dfn & \lambda n. \lambda   f. \lambda z. n f(f(fz))&: \Nat[] \limp \Nat[]\\
		\mathsf{double}&\dfn& \lambda n. n \, ( \mathsf{plustwo}) \, \cod{0}&: \Nat[\Nat[]]\limp \Nat[]\\
		\mathsf{exp}&\dfn & \lambda n. n \ ( \mathsf{double})\ \cod{1}&: \Nat[\Nat[]] \limp \Nat[]
	\end{array}
	\]
	It is easy to check that, for any $n \in \Nset$:
	\[
	\mathsf{plustwo} \ \cod{n}\betar^* \cod{n+2} \qquad 	\mathsf{double}  \ \cod{n} \betar^*\cod{2n}\qquad 	 			\mathsf{exp}  \ \cod{n} \betar^* \cod{2^n} 
	\]
		This observation can be easily adapted to the proof systems  $\dpll$  and $\nupll$  (w.r.t. the $\oc$-freeness condition of instantiations in the rule $\exists$). 
\end{remark}

\section{Conclusion and future work}

This paper builds on a series of recent works aimed at developing {implicit computational complexity} in  the setting of cyclic and non-wellfounded proof theory~\cite{CurziDas,Curzi023}. 	We proved that the non-wellfounded proof systems $\nupll$ and $\cpll$ 
capture the complexity classes $\fppoly$ and $\fptime$ respectively.
We then establish a series of characterisations for various finitary proof~systems.

We envisage extending the contribution of this paper, among others, to the following research directions.

\paragraph*{Polynomial time over the reals} \cite{HainryMP20} introduces a  characterisation of Ko's class of polynomial time computable functions over real numbers~\cite{Ko} based on parsimonious logic.
By employing the  co-absorption rule $\ocbrule$ to represent the \emph{pop} operation on streams, this complexity class could be modelled {within $\nwpll$} via cut elimination as in~\cite{CSL}.
 
\paragraph*{Probabilistic complexity}
De-randomisation methods showing the inclusion of the complexity class $\mathbf{BPP}$ (bounded-error probabilistic polynomial time) in $\fppoly$	suggest that this class can be characterised within  $\nupll$.  Challenges are expected, since $\mathbf{BPP}$ is defined by explicit (error) bounds, as observed in~\cite{LagoT15} (so, not entirely in the style of ICC), but we conjecture that error bounds can be traded for appropriate global proof-theoretic conditions on $\nupll$ that restrict computationally the access to streams.

\paragraph*{Logarithmic Space}
In~\cite{MazzaT15,Mazza15} the authors characterize the complexity classes $\logspace$ (logarithmic space problems) and  its non-uniform counterpart $\lpoly$ (problems decided by polynomial size branching programs) by stripping away  second-order quantifiers from their proof systems capturing  $\ptime$ and $\ppoly$.
We expect that a similar result can be obtained for our non-wellfounded proof  systems.

\paragraph*{Non-uniform Proofs-as-Processes}
Processes such as a scheduler sorting tasks among a (finite) set of servers according to a predetermined order (e.g., a token ring of servers) may easily be modelled by \nwboxes,
making $\nupll$ appealing for the study of the proofs-as-processes correspondence and its applications~\cite{ABRAMSKY19945,caires2010session,wadler2014propositions,DARDHA2017253,montesi2021linear}.

\section*{Acknowledgement}
We would like to thank Anupam Das, Abhishek De, Farzad Jafar-Rahmani, Alexis Saurin, Tito (L\^{e} Thành Dung Nguy\^{e}n) and Damiano Mazza for their useful comments and suggestions. This work was supported by the Wallenberg Academy Fellowship Prolongation project “Tam-
ing Jörmungandr: The Logical Foundations of Circularity” (project reference 251080003), and by the
VR starting grant “Proofs with Cycles in Computation” (project reference 251088801).

\bibliographystyle{alphaurl}
\bibliography{biblo}

\end{document}